\renewcommand{\Re}{\operatorname{Re}}
\renewcommand{\Im}{\operatorname{Im}}
\newcommand{\dom}{\mathrm{odom}}
\DeclareMathOperator{\supp}{supp}
\newcommand{\der}{\mathrm{d}}
\newcommand{\rmi}{\mathrm{i}}
\newcommand{\calO}{\mathcal{O}}
\renewcommand{\dom}{\mathrm{dom}}
\newcommand{\U}{\mathcal{U}}
\newcommand{\ret}{\mathrm{ret}}
\newcommand{\av}{\mathrm{av}}
\newcommand{\R}{\mathbb{R}}
\newcommand{\C}{\mathbb{C}}
\newcommand{\calA}{\mathcal{A}}
\newcommand{\calR}{\mathcal{R}}
\newcommand{\calE}{\mathcal{E}}
\newcommand{\calH}{\mathcal{H}}
\newcommand{\calL}{\mathcal{L}}
\newcommand{\TT}{T}
\newcommand{\WF}{\mathrm{WF}}
\newcommand{\WFa}{\mathrm{WF}_a}
\newcommand{\V}{\mathcal{V}}
\newcommand{\muS}{\mu\mathrm{S}}
\newcommand{\calU}{\mathcal{U}}
\newtheorem{theorem}{Theorem}[section]
\newtheorem{definition}[theorem]{Definition}
\newtheorem{example}[theorem]{Example}
\newtheorem{lemma}[theorem]{Lemma}
\newtheorem{assumption}[theorem]{Assumption}
\newtheorem{proposition}[theorem]{Proposition}
\newtheorem{rem}[theorem]{Remark}
\title[Analytic states in QFT]{Analytic states in quantum field theory on curved spacetimes}
\author[A.~Strohmaier]{Alexander Strohmaier}
\address{Leibniz University Hannover, Institute of Analysis, 30167 Hannover, Germany}  \email{a.strohmaier@math.uni-hannover.de} 
\author[E. Witten]{Edward Witten}
\address{Institute for Advanced Study,
School of Natural Sciences,
One Einstein Drive,
Princeton, NJ 08540}
\email{witten@ias.edu}
\thanks{Research of EW supported in part by NSF Grant PHY-2207584.}
\begin{document}
%
%{\tt \centerline{\textcolor{blue}{Draft Version: Please do not distribute}} }

\begin{abstract}
 We discuss high energy properties of states for (possibly interacting) quantum fields in curved spacetimes. In particular, if the spacetime is real analytic, we show that an analogue of the timelike tube theorem and the Reeh-Schlieder property hold with respect to states satisfying a weak form of microlocal analyticity condition. The former means the von Neumann algebra of observables of a spacelike tube equals the von Neumann algebra of observables of a significantly bigger region, that is obtained by deforming the boundary of the tube in a timelike manner. This generalizes theorems by Borchers and Araki (\cite{MR119866,MR153340}) to curved spacetimes.  \end{abstract}

\maketitle

\vspace{.3cm}
%\tableofcontents
\section{Introduction}

A quantum field on $d$-dimensional Minkowski spacetime can be thought of as an operator-valued distribution $\Phi$ on $\R^d$. For a quantum field $\Phi$ acting on a dense domain $D \subset \calH$ in a Hilbert space $\calH$ with vacuum vector $\Omega$ the spectrum condition implies that in case a test function $f$ has Fourier transform $\hat f$ supported away from the backward light cone, the smeared out field $\Phi(f)$ annihilates the vacuum, i.e. $\Phi(f) \Omega=0$. More generally, if $f_1,\ldots,f_n$ are test functions whose Fourier transforms $\hat f_1,\ldots,\hat f_n$ have the property that
$\supp \hat f_1 + \ldots + \supp \hat f_n$ does not intersect the backward light cone, we again have
$\Phi(f_1) \cdots \Phi(f_n) \Omega=0$. This is a manifestation of the fact that only states with non-negative energy can be created out of the vacuum, hence test functions that do not contain a positive energy component yield an operator that annihilates the vacuum vector.

In a general curved spacetime $(M,g)$, due to the absence of a translational symmetry, there is no notion of energy or the Fourier transform. Positivity of energy in the above sense can only  be expected in an asymptotic sense. To explain this we consider families of test functions $f_h$ that depend on a parameter $h \in (0,1]$. As $h \searrow 0$ these test functions need not converge, but they will have localisation properties in phase space that transform covariantly under a change of coordinates. In this paper we are interested in the analytic category, i.e. we assume that the spacetime $(M,g)$ has a real analytic set of coordinates with respect to which the metric $g$ is real analytic. Hence, we only need to consider analytic coordinate changes. The notion of microsupport of the family $f_h$ captures the localisation properties of the family of functions $f_h$ as $h \searrow 0$ in a manifestly covariant way. Roughly speaking the microsupport is the set of elements $(x,\xi)$ in the cotangent space $T^*M$ with the property that 
the inner product 
$$
 \langle\psi_{x,\xi,h}, f_h \rangle =  (\pi h)^{-\frac{d}{4}} \int   e^{-\frac{(x-y)^2}{2h}} e^{\frac{\rmi}{h} (x-y) \xi} f_h(y) \der x
$$
in local coordinates with a coherent state $\psi_{x,\xi,h}(y) = (\pi h)^{-\frac{d}{4}} e^{-\frac{1}{2 h}(x-y)^2} e^{-\frac{\rmi}{h} (x-y) \cdot \xi}$ localized at the point $(x,\xi)$
is not exponentially small as $h \searrow 0$ (see Section \ref{microsupport} for the precise mathematical definition). 
Here we say a function is exponentially small if it is of order $e^{-\delta h^{-1}}$ for some $\delta>0$ in some appropriate norm or semi-norm. The observation at the heart of microlocal analysis in the analytic category is that this notion is covariant under analytic change of coordinates, and it therefore makes sense on an analytic curved spacetime. The microsupport in this way becomes a subset of the cotangent bundle.
Similar notions exist in the category of smooth manifolds, but the exponential decay needs to be replaced by
decay faster than any power of $h$.
{The role of $h$ is that of a scaling parameter. As $h \searrow 0$ the wave packet $\psi_{x,\xi,h}$ localizes at the phase space point $(x,\xi)$ at a scale $h^{\frac{1}{2}}$ after a rescaling $(x,\xi) \mapsto (x,h \xi)$ has been applied. The physical intuition is that this corresponds to the regime of high energy in which asymptotic localization in phase space is possible. This is ultimately the reason why the corresponding notions make sense on manifolds.
}

For a quantum field in a curved spacetime it is therefore natural to hypothesize the existence of states $\Omega$ for which 
\begin{align*}
 \Phi(f_{1,h}) \cdots  \Phi(f_{n,h}) \Omega
\end{align*}
is exponentially small as soon the microsupport of the family $f_{1,h}(x_1) \cdots f_{n,h}(x_n)$ of functions on $M \times \ldots \times M$ is contained only in a set that does not correspond to an allowed physical process.
To be more precise, if the microsupport of $f_{1,h}$ is contained in a region of phase space that cannot be reached by classical scattering processes from the microsupports of 
$f_{k,h}, k=2,\ldots,n$ the vector $\Phi(f_{1,h}) \cdots  \Phi(f_{n,h}) \Omega$ should be exponentially small. 
In scattering theory the set of admissible processes is restricted by Landau diagrams (see \cite{MR992456} and \cite{MR0367657})
and one expects similar restrictions for correlations in curved spacetimes. For example it has been conjectured  \cite{MR3636845} that for certain types of quantum fields in curved spacetimes analytic singularities only propagate along null-lines and scatter classically. In the generality of the Wightman framework this seems to be too strong  and excludes various generalized free fields. Another condition that is more likely to include general Wightman fields in Minkowski space is that analytic singularities propagate along causal curves and scatter classically. Other even weaker  propositions have been stated in \cite{BFK}.

In this paper we are concerned with a very weak form of such a physical condition. Namely, we call a state $\Omega$
{\sl analytic} if $\Phi(f_{1,h}) \cdots  \Phi(f_{n,h}) \Omega$ is exponentially small as soon as the microsupport of 
$f_{1,h}$ has positive distance from the backward light cone, and the microsupports of the $f_{k,h}, k=2,\ldots,n$ are contained in the zero section of the cotangent bundle in a uniform manner. 
Roughly, it is not possible asymptotically to extract energy from such a state.
The name ``analytic state" is motivated by analogy from properties of analytic vectors with respect to group actions.

We show that for such states we have the timelike tube theorem and the Reeh-Schlieder theorem. We give here informal versions of the two theorems, {assuming there exists a cyclic analytic vector}.

{\bf Timelike tube theorem,} (Theorem \ref{timeliketube}): Let $\mathcal{R}(\calO)$ be the local von-Neumann algebra associated to the space-time region $\calO \subset M$. Let $\calO_T$ be a larger spacetime region obtained from $\calO$ by deforming timelike curves 
in $\calO$ in a timelike manner with fixed endpoints. Then $\mathcal{R}(\calO) = \mathcal{R}(\calO_T)$.\\

In case $\calO$ is an infinite timelike tube then the set $\calO_T$ is in some cases the entire spacetime.

{\bf Reeh-Schlieder theorem,} (Theorem \ref{Reeh-Schlieder}): $\mathcal{R}(\calO) \Omega$ is dense in the Hilbert space for any non-empty open subset $\calO \subset M$.

{
Both theorems are of structural importance in rigorous treatments in algebraic quantum field theory. The Reeh-Schlieder theorem is the basic result governing entanglement in quantum field theory, that opened the path for the use of modular theory.
The timelike tube theorem shows that in defining a net of algebras associated to open sets, only open sets that are their
own timelike envelopes need to be considered.  Combining this with relativistic causality leads to further conditions, as originally discussed by Araki (\cite{MR153340}).
}
We state and prove both of the theorems as they are similar in nature, but note that the validity of the Reeh-Schlieder theorem in this context has been known under similar conditions (\cite{SVW}). To our knowledge the validity of the timelike tube theorem has not been noted in this general context that is thought to include interacting fields.

 Whether or not the timelike tube theorem holds for even the free Klein-Gordon field on a general non-necessarily  analytic globally hyperbolic spacetime remains an open question. A counterexample by Alinhac and Baouendi (\cite{MR1363855}) shows that there exists a smooth complex-valued potential $V \in C^\infty(\R^3,\C)$ on three-dimensional Minkowski space, and a smooth function $u \in C^\infty(\R^3,\C)$ with support  $\mathrm{supp}(u)$ equal to the half-plane $\{(t,x,y) \in \R^3 \mid x>0\}$ such that $\Box u + V u =0$ in an open neighborhood of zero. It is currently unknown if such an example can also be constructed for the metric wave operator or a real potential. Under a partial analyticity assumption the case of the Klein-Gordon field was treated in \cite{MR1846957}.

{
It is likely that all of our analysis carried out in the analytic category works also if one replaces the class of analytic functions with the quasi-analytic Denjoy-Carleman class which is slightly more general. A great deal of the geometric framework for this class has indeed been worked out in \cite{Furdos2018GeometricMA}.}

The article is organized as follows. In Section \ref{FBIsection} we discuss various notions of microlocal analysis such as the FBI (Fourier-Bros-Iagolnitzer) transform, the microsupport, and the analytic wavefront set. We briefly explain in Section \ref{seclytic} how these notions come up naturally in quantum theory and link with the notion of analytic vectors for the time-evolution operator.
 In Section \ref{qftsec} we give the mathematical framework for treating quantum fields in curved spacetimes and introduce and discuss the notions of analyticity and tempered analyticity. Section \ref{ttt} contains the statement and proof of the timelike tube and the Reeh-Schlieder theorem. The appendices provide details of the mathematical framework.

\section{Microlocal analysis and the FBI transform} \label{FBIsection}

Schr\"odinger quantum mechanics of a single particle on $\R^d$ is described by the Hilbert space $\mathcal{H} = L^2(\R^d)$ and the Schr\"odinger time-evolution $U(t)$ which is explicitly determined by the Schr\"odinger equation $U(t) = \exp(-\rmi t H)$, where $H$ is a self-adjoint operator, the Hamiltonian. The Hilbert space is concretely realized here as space of square integrable functions which encodes the quantum mechanical interpretation of a state $\Psi \in \calH$, namely $| \Psi(x) |^2 \der x$ is a probability measure that describes the probability distribution measuring the presence of the particle in a space-region. The above is the position representation of the Hilbert space.
Using the Fourier transform 
$\mathcal{F} : L^2(\R^d) \to L^2(\R^d), f \mapsto \hat f,
\hat f(\xi) = (2 \pi)^{-\frac{d}{2}} \int_{\R^d} f(x) e^{-\rmi x \cdot \xi} \der x$ one can pass to the momentum representation. The probability distribution $| \hat \Psi(\xi) |^2 \der \xi$ describes the probability distribution measuring the momentum of a particle in a region. Whereas precise localisation in phase space is not possible due to the Heisenberg uncertainty relation there are still various asymptotic notions of phase-space localisation. An example are the coherent states
$$
 \psi_{x,\xi,h}(y)=(\pi h)^{-\frac{d}{4}}  e^{-\frac{(y-x)^2}{2h}} e^{-\frac{\rmi}{h} (x-y) \xi},
$$
which localize in a rescaled version of phase space at the point $(x,\xi)$ as $h \searrow 0$. Here $h \in (0,\infty)$ serves as formal parameter which should not be confused with Planck's constant. The idea of asymptotic localisation in phase space has had a profound influence on the theory of partial differential equations and the corresponding analysis is more commonly referred to as microlocal analysis.  It can be used to describe regularity properties of distributions in phase space. There are various approaches to microlocal analysis which we review and link in the appendix. We will focus here on an approach based on semi-classical analysis and the FBI transform, which is motivated by developments in the theory of partial differential equations.
Remarkably the original ideas linking analyticity and the FBI transform were in parts discovered in the analysis of the analytic properties of the scattering matrix in quantum field theory (see e.g. \cite{MR0367657}).

As usual we let $\mathcal{S}(\R^d)$ be the space of Schwartz functions. Its topological dual is the space $\mathcal{S}'(\R^d)$ of tempered distributions. As above it is convenient to let transforms depend on an extra parameter $h \in (0,\infty)$. One defines the semi-classical Fourier transform $\mathcal{F}_h$, by
\begin{equation}
 (\mathcal{F}_h f)(\xi) = \frac{1}{(2 \pi h)^\frac{d}{2}} \int_{\R^d} f(x) e^{-\frac{\rmi}{h} x \cdot \xi} \der x.
\end{equation}
For fixed $h$ this defines a continuous map $\mathcal{F}_h: \mathcal{S}(\R^d) \to \mathcal{S}(\R^d)$ that continuously extends
as $\mathcal{F}_h: \mathcal{S}'(\R^d) \to \mathcal{S}'(\R^d)$ and a unitary map $\mathcal{F}_h: L^2(\R^d) \to L^2(\R^d)$.
Given a Schwartz function $f \in \mathcal{S}(\R^d)$, and a number $h \in (0,\infty)$, one defines the semi-classical FBI-transform $(\TT_h f)(x,\xi)$ at the point $(x,\xi) \in \R^d \times \R^d$ as
\begin{equation}
 (\TT_h f)(x,\xi) = \alpha_h \int_{\R^d} e^{-\frac{1}{2 h}(x-y)^2} e^{\frac{\rmi}{h} (x-y) \cdot \xi} f(y) \der y, \quad \alpha_h = 2^{-\frac{d}{2}} (\pi h)^{-\frac{3d}{4}}.
\end{equation}
The FBI transform defines for every $h>0$ an isometry $L^2(\R^d) \to L^2(\R^{d} \times \R^d)$. We also have the continuous mapping properties
\begin{align*}
 \TT_h: \mathcal{S}(\R^d) \to \mathcal{S}(\R^d \times \R^d),\quad \TT_h:\mathcal{S}'(\R^d) \to \mathcal{S}'(\R^d \times \R^d),\\
  \TT_h^*:  \mathcal{S}(\R^d \times \R^d) \to \mathcal{S}(\R^d) ,\quad \TT_h^*: \mathcal{S}'(\R^d \times \R^d) \to \mathcal{S}'(\R^d) ,
 \end{align*}
 for fixed $h$. We have $T_h^* T_h = \mathrm{id}$ on $L^2(\R^d), \mathcal{S}(\R^d)$, and  $\mathcal{S}'(\R^d)$ respectively. Moreover, one checks directly that $e^\frac{\xi^2}{2 h} (T_h f)(x,\xi)$ is holomorphic in $z= x- \rmi \xi$.
 
This transform is closely related to the Bargman transform and in fact achieves an expansion of the state into coherent states. Indeed,
$$
 \psi_{x,\xi,h}(y) = (\pi h)^{-\frac{d}{4}} e^{-\frac{1}{2 h}(x-y)^2} e^{\frac{\rmi}{h} (y-x) \cdot \xi},
$$
is an $L^2$-normalized coherent state centered at the point $(x,\xi)$ in phase space. For fixed $h>0$, the formula $\TT_h^* \TT_h=1$ can be expressed as
\begin{equation} \label{csrep}
u(y) = (2 \pi h)^{-\frac{d}{2}}\int_{\R^{2d}} (\TT_h u)(x,\xi) \psi_{x,\xi,h}(y) \der x \der \xi.
\end{equation}
In other words, the FBI-transform allows to superpose the distribution from a family of coherent states.  Another inversion formula that is useful in this context is
\begin{equation}\label{inversion}
 u(x) = 2^{-\frac{d}{2}-1}(\pi)^{-\frac{d}{4}} \int_{|\xi|=1} \int_0^\infty  h^{-1-\frac{d}{4}} \left( 1 + \xi \cdot \frac{h}{\rmi} \mathrm{grad}_x \right) (\TT_{h} u)(x,\xi) \der h\, \der \xi,
\end{equation} 
the proof of which can be found in \cite{MR1065136}*{Sect. 9.6}.

As $h \searrow 0$ the function $e^{-\frac{1}{2h }(x-y)^2+\frac{\rmi}{h} (y-x) \cdot \xi}$ localizes more and more at the phase space point $(y,\xi)$. If a distribution $u \in S'(\R^d)$ vanishes near the point $y$ in fact the FBI-transform is exponentially decaying in $h^{-1}$ as $h \searrow 0$. 
It is this localisation property that makes it useful to study local properties of distributions.
The FBI transform is also essentially the same as the wavepacket transform of \cite{MR507783} introduced to study mapping properties of Fourier integral operators.

\subsection{Families of test functions and distributions}

It will be convenient to work with general $h$-dependent families of test functions and distributions.
In case $(f_h)$ is a family of Schwartz functions we say that this family is {\sl polynomially bounded} if for every Schwartz semi-norm $p$ we have that $p(f_h) = O(h^{-N})$ as $h \searrow 0$ for some $N>0$ potentially depending on the semi-norm.
We call a family $(u_h)$ of tempered distributions $u_h \in \mathcal{S}'(\R^d)$ polynomially bounded if there exists a  continuous semi-norm $p: \mathcal{S}(\R^d) \to [0,\infty)$ such that for some $N>0$ we have $\| u_h(f)\| \leq h^{-N} p(f)$ for all $f \in \mathcal{S}(\R^d), h \in (0,1]$. 

{
 The notions discussed here such as polynomial boundedness for families of distributions carry over to families of distributions $(u_h), u_h \in \mathcal{S}'(\R^d,V)$ taking values in a Banach space $V$. We do not write this out explicitly in an attempt to not overload the notation, but we write  $\| u(f) \|$ to remind the reader that these definitions carry over to Banach space-valued families of distributions, where the norm  $\| \cdot \|$ is the norm on $V$.
}

The two notions of polynomial boundedness are compatible as the pairing of a polynomially bounded family of distributions and a polynomially bounded family of test-functions is a polynomially bounded function of $h$. 

We will denote by $\mathcal{S}'_h(\R^d)$ the vector space of polynomially bounded families of tempered distributions and by $\mathcal{S}_h(\R^d)$ the algebra of polynomially bounded families of Schwartz functions.
{In the examples we have in mind the restriction to the class of polynomially bounded families is not a serious one.  The advantage of restricting to the class of polynomially bounded functions is that it is stable under forming tensor products and does not change exponential decay rates. We call a family of test function $(f_h) \in \mathcal{S}_h(\R^d)$ exponentially small with decay rate $\delta>0$ if for every Schwartz semi-norm $p$ we have $p(f_h) = O(e^{-\delta h^{-1}})$ as $h \searrow 0$.
For example, a sufficient condition for a family of the form $f_{1,h} \otimes \cdots \otimes f_{n,h} \in \mathcal{S}_h(\R^d \times \ldots \times \R^d)$ to be  exponentially small with decay rate $\delta>0$ is that one of the tensor factors is exponentially small with that decay rate. Similarly, an element in $u \in \mathcal{S}'_h(\R^d)$ paired with an exponentially small test function results in an exponentially small function of $h$.} 

The semi-classical Fourier transform $\mathcal{F}_h$ and the FBI transform $T_h$ can be viewed as maps
\begin{align*}
 &\TT_h: \mathcal{S}_h(\R^d) \to \mathcal{S}_h(\R^d \times \R^d),\quad \TT_h:\mathcal{S}_h'(\R^d) \to \mathcal{S}_h'(\R^d \times \R^d), \\
 &\mathcal{F}_h: \mathcal{S}_h(\R^d) \to \mathcal{S}_h(\R^d),\quad \mathcal{F}_h:\mathcal{S}_h'(\R^d) \to \mathcal{S}_h'(\R^d )
 \end{align*}
 {
 by applying the maps pointwise, i.e.
 \begin{equation*}
 (\TT_h f_h)_h(x,\xi) = \alpha_h \int_{\R^d} e^{-\frac{1}{2 h}(x-y)^2} e^{\frac{\rmi}{h} (x-y) \cdot \xi} f_h(y) \der y.
\end{equation*}
}

\subsection{Uniform microlocalisation of a family} \label{microsupport}

For a polynomially bounded family of test functions $(u_h) \in \mathcal{S}_h(\R^d)$ consider the FBI transform $(T_h u_h)$. This is again a polynomially bounded family of test functions in $\mathcal{S}_h(\R^{d}\times \R^d)$.
\begin{definition}
We say $(u_h) \in \mathcal{S}_h(\R^d)$ is {\sl microlocally uniformly exponentially small in $\calU \subseteq \R^{d} \times \R^d$} if there exists a $\delta>0$ such that for all $N>0$ we have the estimate
$$
 | ((1+ x^2 + \xi^2)^{N}T_h u_h)(x,\xi) | \leq C_N e^{-\delta h^{-1}} \textrm{ for all }  (x,\xi) \in \calU
$$
for some $C_N >0$. 
\end{definition}

This definition is very natural when working with Schwartz functions as test functions, in fact the decay requirements can be expressed in terms of Schwartz semi-norms, as we explain in Prop. \ref{swchar}.
The complementary notion is that of being uniformly microsupported in a set $K$, which essentially means we have microlocal uniform exponential smallness for all points that have a positive distance to $K$.

\begin{definition}
 We say that $(u_h) \in \mathcal{S}(\R^d)$ is {\sl uniformly microsupported} in $K \subset \R^d \times \R^d$
if it is microlocally uniformly exponentially small in $\calU_\epsilon = \{(x,\xi) \in \R^d \times \R^d \mid \mathrm{dist}((x,\xi),K) > \epsilon\}$ for all $\epsilon>0$. In this case we also say $u_h$ is uniformly microsupported away from the complement of $K$.
\end{definition}

This is intimately related to the notion of microsupport of a polynomially bounded family of distributions.
\begin{definition}
 The microsupport $\muS(u_h)$ of a polynomially bounded family of distributions $(u_h) \in \mathcal{S}'_h(\R^d)$ is the complement of the set of points  $(x_0,\xi_0) \in \R^d \times \R^d$ such that we have
 $$
 \| T_h u_h(x,\xi) \| < C e^{-\delta h^{-1}},
$$
uniformly for all $(x,\xi)$ near $(x_0,\xi_0)$. 
\end{definition}

The microsupport measures the (exponential) localisation properties of $(u_h)$ in phase space as $h \searrow 0$.
As an example, one can consider the family of coherent states
$$
 \psi_{x_0,\xi_0,h}(x) = (\pi h)^{-\frac{d}{4}} e^{-\frac{1}{2 h}(x-x_0)^2} e^{\frac{\rmi}{h} (x-x_0) \cdot \xi_0},
$$
which is a family of real analytic Schwartz functions. As an $h$-dependent family the microsupport is given by $\{(x_0,\xi_0)\}$ as the functions localize in phase space to this point as $h \searrow 0$.
Another example is $\chi(x)\psi_{x_0,\xi_0,h}(x)$ with $\xi_0=0$ and $\chi$ a compactly supported smooth test function that equals one near $x_0$.
As this example shows a family of smooth functions may also have non-zero microsupport, and we can therefore investigate how the field operator behaves on an $h$-dependent family $f_h$ of smooth compactly supported test functions.

The notion of being uniformly microsupported captures more than the notion of the microsupport. As an example let 
$$
 \psi_{x_h,\xi_h,h}(x) = (\pi h)^{-\frac{d}{4}} e^{-\frac{1}{2 h}(x-x_h)^2} e^{\frac{\rmi}{h} (x-x_h) \cdot \xi_h}
$$
be a family of coherent states with centered at a point $(x_h,\xi_h)$ that also depends on $h$ and escapes to $\infty$ at a fast enough rate as $h \to 0$. Then the microsupport of $\psi_{x_h,\xi_h,h}(x) + \psi_{x_0,\xi_0,h}(x)$ equals $\{(x_0,\xi_0)\}$ but the family is not microlocally uniformly exponentially small away in the complement of any ball $B_\epsilon((x_0,\xi_0))$ and is therefore not uniformly microsupported at $\{(x_0,\xi_0)\}$.

If we pass to the test function space $C^\infty_0(\R^d)$ there is an inherited notion of uniform microlocalisation that generalizes to real analytic manifolds. To make this precise we need to consider families of compactly supported $h$-dependent polynomially bounded test functions $f_h$. We do this for a general manifold $M$. We say a family $(f_h), f_h \in C^\infty_0(M)$
is polynomially bounded if
\begin{itemize}
\item there exists a compact subset $K \subset M$ such that $\mathrm{supp}(f_h) \subset K$ for all values of the parameter $h$,
\item for any smooth cut-off function $\chi \in C^\infty_0(M)$ compactly supported in a chart domain the family $(\chi f_h)$ is polynomially bounded in $\mathcal{S}(\R^d)$ with respect to the local coordinates of the chart.
\end{itemize}
The space of polynomially bounded families $(f_h)$ of test functions $f_h \in C^\infty_0(M)$ will be denoted by $C^\infty_{0,h}(M)$.
For such families of distributions the notion of being uniformly exponentially small in a subset of the cotangent bundle make sense. The reason is that by Prop. \ref{cochart} the notion of exponential smallness transforms covariantly under an analytic change of coordinates. A mildly subtle point is that upon localisation into a chart domain using a cut-off function $\chi$ we can expect uniform exponential smallness only where the cutoff function is constant, because outside this set it is not real
analytic. This leads to the following definition.

\begin{definition}
Given a compact set $K \subset M$ and a subset $N \subset \pi^{-1}(K) \subset T^*M$, we say 
$(f_h)$ is uniformly microlocally exponentially small in $N$ if for any analytic coordinate chart $\rho: M \supset \calO \to \rho(\calO) \subset\R^d$,
any relatively compact open set $\calU \subset \calO$ and any test function $\chi \in C^\infty_0(M)$ that equals one near $\calU$ the family $(\chi f_h) \circ \rho^{-1}$ is uniformly microlocally exponentially small in $(\rho^{-1})^*(\pi^{-1}(\calU) \cap N)$.
\end{definition}

Whereas this definition requires the function to be uniformly exponentially small with respect to any coordinate charts, it is sufficient to check this in a system of charts as long as the sets $\calU$ cover $K$, again by
 Prop. \ref{cochart}. By compactness it is therefore sufficient to check this in finitely many coordinate charts. The notion of uniform microlocalisation and microsupport for analytic manifolds and families in $C^\infty_{0,h}(M)$ is therefore consistent with the notion on $\R^d$ as above.
We note however that the requirement for all the test functions of the family to be supported in a fixed compact set is essential here.

\begin{proposition} \label{secondchar}
 Let $(u_h)\in \mathcal{S}'_h(\R^d)$ be a polynomially bounded family of tempered distributions, and let $(x,\xi) \in \R^d \times \R^d$.
 Then $(x,-\xi) \notin \muS(u_h)$ if and only if there exists an $\epsilon>0$ such that the following holds.
 For all families $(f_h) \in C^\infty_{0,h}(\R^d)$, uniformly microsupported in the ball $B_\epsilon(x,\xi)$ about the point $(x,\xi)$, we have
 $\|u_h(f_h)\| = O(e^{-\delta h^{-1}})$ for some $\delta>0$ as $h \searrow 0$.
\end{proposition}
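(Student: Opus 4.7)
The plan is to base both implications on the pairing identity
\begin{equation*}
u_h(f_h) \;=\; \int_{\R^{2d}} T_h f_h(y,\eta)\,T_h u_h(y,-\eta)\,\der y\,\der\eta,
\end{equation*}
which for Schwartz $f_h$ follows by inserting the reconstruction \eqref{csrep} into the pairing and using the elementary formula $u_h(\psi_{x',\xi',h}) = 2^{d/2}(\pi h)^{d/2}\,T_h u_h(x',-\xi')$, itself a direct consequence of the definitions (the sign flip on $\xi'$ coming from complex conjugation of the phase). The identity extends to $f_h\in C^\infty_{0,h}(\R^d)$ and $u_h\in\mathcal{S}'_h(\R^d)$ by continuity. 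The sign reversal $\eta\mapsto -\eta$ is precisely what creates the asymmetry between $(x,\xi)$ and $(x,-\xi)$ in the statement.

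For the forward direction, fix $\epsilon>0$ and $\delta_0,C_0>0$ with $\|T_h u_h(y,\eta)\|\le C_0 e^{-\delta_0/h}$ on $B_{3\epsilon}(x,-\xi)$, and suppose $(f_h)$ is uniformly microsupported in $B_\epsilon(x,\xi)$. Split the integral into the bounded region $\{(y,\eta)\in B_{2\epsilon}(x,\xi)\}$ and its complement. On the first, $T_h u_h(y,-\eta)$ is exponentially small while $\|T_h f_h\|_{L^2}$ is polynomially bounded by the isometry property of $T_h$, so Cauchy--Schwarz on a bounded domain yields an exponentially small contribution. On the complement, $\mathrm{dist}((y,\eta),B_\epsilon(x,\xi))>\epsilon$, so uniform microsupport supplies $(1+|y|^2+|\eta|^2)^N T_h f_h(y,\eta)=O(e^{-\delta_1/h})$; choosing $N$ large enough beats the polynomial growth of $T_h u_h$ in $(y,\eta)$ and $h^{-1}$, giving an integrable exponentially small bound. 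Summing yields $\|u_h(f_h)\|=O(e^{-\delta/h})$ for any $\delta<\min(\delta_0,\delta_1)$.

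For the converse we argue by contradiction. If $(x,-\xi)\in\muS(u_h)$, extract $(y_n,\eta_n)\to(x,-\xi)$ and $h_n\searrow 0$ with $\|T_{h_n}u_{h_n}(y_n,\eta_n)\|\ge e^{-1/(n h_n)}$. Fix a cutoff $\chi\in C^\infty_0(\R^d)$ equal to $1$ on a neighborhood of $x$, choose $n_0$ with $(y_n,-\eta_n)\in B_{\epsilon/2}(x,\xi)$ for $n\ge n_0$, and define $(x_h',\xi_h')=(y_n,-\eta_n)$ for $h\in(h_{n+1},h_n]$ with $n\ge n_0$, and $(x_h',\xi_h')=(x,\xi)$ otherwise. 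The family $f_h(z)=\chi(z)\psi_{x_h',\xi_h',h}(z)$ lies in $C^\infty_{0,h}(\R^d)$, and the standard Gaussian estimate $|T_h\psi_{x',\xi',h}(y,\eta)|\lesssim h^{-d/2}e^{-(|y-x'|^2+|\eta-\xi'|^2)/(4h)}$, together with $(x_h',\xi_h')\in B_\epsilon(x,\xi)$ uniformly in $h$, shows $(f_h)$ is uniformly microsupported in $B_\epsilon(x,\xi)$. The hypothesis therefore supplies $\delta_0,C_0>0$ with $\|u_h(f_h)\|\le C_0 e^{-\delta_0/h}$. Meanwhile, since $\chi\equiv 1$ near all Gaussian centers, the pairing identity yields
\begin{equation*}
u_h(f_h) \;=\; 2^{d/2}(\pi h)^{d/2}\,T_h u_h(x_h',-\xi_h') + O(e^{-\mu/h})
\end{equation*}
for some $\mu>0$. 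At $h=h_n$ this forces $\|T_{h_n}u_{h_n}(y_n,\eta_n)\|=O(h_n^{-d/2}e^{-\delta_0/h_n})$, contradicting the lower bound $e^{-1/(n h_n)}$ for $n$ large enough that $1/n<\delta_0$.

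The main obstacle is the converse: pointwise exponential decay of $T_h u_h$ in $(y,\eta)$ does not automatically imply uniform decay on a neighborhood, so one cannot test the hypothesis against a single coherent state at each point. The technical crux is to concatenate the hypothetical counterexample sequence into one $h$-indexed family of cut-off coherent states and to verify that this family is uniformly, and not merely pointwise, microsupported in $B_\epsilon(x,\xi)$; the latter rests on the uniform Gaussian decay of the FBI transform of coherent states whose centers vary with $h$ inside a fixed compact set.
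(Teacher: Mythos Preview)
Your forward direction is essentially the paper's: both exploit $T_h^* T_h = \mathrm{id}$ to write $u_h(f_h)$ as an $L^2(\R^{2d})$-pairing of FBI transforms and split the integration domain into a region near $(x,\xi)$ (where $T_h u_h(y,-\eta)$ is exponentially small) and its complement (where $T_h f_h$ is exponentially small in every Schwartz seminorm).

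For the converse the paper takes a more direct route. Rather than extracting a bad sequence and arguing by contradiction, the paper picks, for \emph{each} $h$, a point $(x_h,\xi_h)$ at which $\|T_h\overline{u_h}\|$ attains its maximum on $\overline{B_{\epsilon/2}(x,\xi)}$, and tests against the single family $f_h=\psi_{x_h,\xi_h,h}$. Since $\overline{u_h(f_h)}=(T_h\overline{u_h})(x_h,\xi_h)$ exactly, exponential smallness of $u_h(f_h)$ \emph{is} the uniform bound on the ball. Your contradiction argument is essentially correct but has a small gap: you assert one can extract $h_n\searrow 0$, yet your choice of violating points (constant $C=1$, $\delta=1/n$) does not force this --- the negation of the microsupport condition could produce $h_n$ bounded away from $0$, in which case your piecewise definition of $(x'_h,\xi'_h)$ on intervals $(h_{n+1},h_n]$ is ill-posed and the final inequality is no contradiction. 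The fix is easy: extract with $C_n=n$, so that a subsequence $h_{n_k}\to h_*>0$ would give $\|T_{h_{n_k}}u_{h_{n_k}}(y_{n_k},\eta_{n_k})\|\geq n_k\,e^{-1/(n_k h_{n_k})}\to\infty$, contradicting polynomial boundedness on compact sets; then pass to a strictly decreasing subsequence. The paper's maximizer device sidesteps all of this and is cleaner.
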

\begin{proof}
 Let $\tilde \chi \in C^\infty_0(\R^d)$ be a cut-off function that equals one in a neighborhood of $x$.
 Replacing $u_h$ by $\tilde \chi u_h$ we can assume that $(u_h)$ is supported near the point $(x,-\xi)$.
 Note that by Prop. \ref{multmic}, if $(f_h)$ is uniformly microsupported at the point $(x,\xi)$, then so is $(\tilde \chi f_h)$.
 Assume first that $(x,-\xi) \notin \muS(u_h)$, or equivalently $(x,\xi) \notin \muS(\overline{u_h})$. Then, for some $\epsilon>0$ we have, uniformly for all $(y,\eta) \in B_{2\epsilon}(x,\xi)$,
the estimate
$\| T_h \overline{u_h}(y,\eta) \| \leq C e^{-\delta h^{-1}} $ for some $C,\delta>0$ and all $h \in (0,1]$. Suppose that $(f_h) \in C^\infty_{0,h}(\R^d)$ is microlocally uniformly exponentially small outside $B_{\epsilon}(x,\xi)$. We have
$$
 u_h(f_h) = \langle \chi_1  T_h (\overline{ u_h}), T_h f_h \rangle_{L^2(\R^{2d})} +  \langle T_h \overline{u_h}, \chi_2  T_h (f_h) \rangle_{L^2(\R^{2d})},  
$$
where $\chi_1,\chi_2$ are real valued cut-off functions with $\chi_1 + \chi_2 = 1$, with $\supp(\chi_1) \subset B_{2\epsilon}(x,\xi)$, and $\supp(\chi_2) \cap B_{\epsilon}(x,\xi) = \emptyset$. Then  $\chi_1  T_h (\overline{ u_h})$ is uniformly exponentially small and compactly supported, whereas  
$T_h f_h$ is polynomially bounded in $\mathcal{S}(\R^d \times \R^d)$. Hence, the first term is exponentially small. Similarly,
$T_h \overline{u_h}$ is polynomially bounded in $\mathcal{S}'(\R^d \times \R^d)$ and $\chi_2  T_h (f_h)$ is exponentially small in 
 $\mathcal{S}(\R^d \times \R^d)$. Therefore, also the second term is exponentially small. Note that this proof also works when $u_h$ takes values in a Hilbert space $\calH$, in which case the $L^2$-pairing above takes values in $\calH$.
  
 Now assume conversely, that for all families $(f_h) \in C^\infty_{0,h}(\R^d)$, uniformly microsupported in the ball $B_\epsilon(x,\xi)$ about the point $(x,\xi)$, we have
 $\|u_h(f_h)\| = O(e^{-\delta h^{-1}})$ for some $\delta>0$ as $h \searrow 0$.
We need to show that the FBI transform $f_h(y,\eta) = T_h(\overline{u_h})(y,\eta)$ of $\overline{u_h}$ is exponentially small 
 uniformly in $(y,\eta)$ in a neighborhood $\calU$ of $(x,\xi) \in \R^d \times \R^d$. Here $\overline{u_h}$ is the complex conjugate of the distribution $u_h$. Note that in case $u_h$ takes values in a Hilbert space $\overline{u_h}$ takes values in the complex conjugate Hilbert space.
 In $\overline{B_{\frac{\epsilon}{2}}(x,\xi)}$
 we can pick for each $h$ a point $(x_h,\xi_h)$ at which the maximum of $\| T_h \overline{u_h}(x,\xi)\|$ on 
 $\overline{B_{\frac{\epsilon}{2}}(x,\xi)}$ is attained. Now define $f_h(x) = \psi_{x_h,\xi_h,h}(x)$.
 Then 
 $$
   (T_h \overline{u_h})(x_h,\xi_h) = \overline{u_h}(\overline{\psi_{x_h,\xi_h,h}})= \overline{u_h(f_h)}.
 $$
 Since $\psi_{x_h,\xi_h,h}$ is microlocall uniformly exponentially small in the complement of $B_\epsilon(x,\xi)$
 this shows that $T_h u_h$ is uniformly exponentially small on $B_{\frac{\epsilon}{2}}(x,\xi)$.
 \end{proof}

This shows that the microsupport of a polynomially bounded family of distributions is a well defined subset of the cotangent bundle, and in fact we can use it to define the microsupport of a polynomially bounded family $(u_h) \in \mathcal{D}'_h(M)$ on a general real analytic manifold $(M,g)$. {We note that there are other coordinate independent approaches to the microsupport, in particular the theory by Sj\"ostrand that defines coherent states on manifolds by their analytic properties (\cite{MR699623}).}

\begin{definition}
 Let $(u_h)$ be a polynomially bounded family of distributions $(u_h) \in \mathcal{D}'_h(M)$. Then $\muS(u_h)$ is the complement in $T^*M$ of the set of points $(x,\xi)$ such that there exists
 an open neighborood $\calO \subset T^*M$ of $(x,\xi)$ and a function $\chi \in C^\infty_0(M)$ that equals one near $\pi(\calO) \subset M$ and the following holds.
 For all families $(f_h) \in C^\infty_{0,h}(M)$, uniformly microsupported in $\calO$ we have
 $\|u_h(\chi f_h)\| = O(e^{-\delta h^{-1}})$ for some $\delta>0$ as $h \searrow 0$.
\end{definition}

The proposition above simply states that this can also be defined in local coordinates using the notion of microsupport for Schwartz distributions by multiplying with an appropriate cut-off function.

\subsection{Analytic wavefront sets}

\begin{definition}
 The {\sl analytic wavefront set} $\WFa(u)$ of a distribution $u \in \mathcal{D}'(M)$ is the set of all $(x,\xi) \in T^*M \setminus 0$
 such that $(x,\xi) \in  \muS(u)$. Here $u$ is considered as a constant family.
\end{definition}

In fact, for a constant family of distribution the microsupport equals
$$
 \muS(u) = \mathrm{supp}(u) \times \{0\} \cup \WFa(u).
$$

The analytic wavefront set is a conic set, just as the wavefront set (\cite{MR1872698}*{Prop. 3.2.5}). The above definition is part of a zoo of definitions that we review and link in Appendix \ref{appa}. Roughly speaking the analytic wavefront set captures in which direction a distribution fails to be a real analytic function. In particular, if the analytic wavefront set $\WFa(u)$ contains no points over an open neighborhood of a point $x_0$, then $u$ is a real analytic function near $x_0$.
The analytic wavefront set transforms like a subset of the cotangent bundle under analytic changes of coordinates and is therefore well suited as a notion for distributions on real analytic manifolds.
We now refer to Appendix \ref{appa} and \ref{comprules} for further properties of the analytic wavefront set.

The following lemma can be interpreted, intuitively, as showing that test functions with compactly supported Fourier transforms carry asymptotically no energy and therefore are microlocally exponentially small away from zero. This can be made quite precise as follows.

\begin{proposition} \label{alemma}
 If $f \in \mathcal{S}(\R^d)$ is independent of $h$ and $\hat f$ has compact support then $f=f_h$ is uniformly microsupported in the set $\R^d \times \{0\} \subset \R^d \times \R^d$.
\end{proposition}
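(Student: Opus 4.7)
The plan is to compute $(T_h f)(x,\xi)$ in closed form using Fourier inversion and then read off the decay directly. Writing $f(y)=(2\pi)^{-d/2}\int\hat f(\eta)e^{\rmi y\cdot\eta}\,d\eta$, substituting into the definition of $T_h$, interchanging integrals, and changing variable $z=x-y$, the Gaussian $z$-integral evaluates to
$$\int_{\R^d}e^{-z^2/(2h)}e^{\rmi z\cdot(\xi/h-\eta)}\,dz=(2\pi h)^{d/2}e^{-(h/2)(\xi/h-\eta)^2}=(2\pi h)^{d/2}e^{-\xi^2/(2h)}e^{\xi\cdot\eta-h\eta^2/2}.$$
Collecting prefactors gives the closed form
$$(T_h f)(x,\xi)=C_d\,h^{-d/4}\,e^{-\xi^2/(2h)}\int_{\R^d}\hat f(\eta)\,e^{\rmi x\cdot\eta}\,e^{\xi\cdot\eta-h\eta^2/2}\,d\eta,$$
with $C_d=2^{-d/2}\pi^{-3d/4}$.

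The prefactor $e^{-\xi^2/(2h)}$ already supplies the exponential smallness we want on $\{|\xi|>\epsilon\}$. Assuming $\supp\hat f\subset\{|\eta|\le R\}$, the remaining $\eta$-integral is pointwise bounded by $\|\hat f\|_{L^1}e^{R|\xi|}$, uniformly in $h\le 1$; combined with the elementary bound $\xi^2/(2h)-R|\xi|\ge\xi^2/(4h)\ge\epsilon^2/(8h)$, valid on $|\xi|>\epsilon$ for $h\le\epsilon/(4R)$, this yields the pointwise estimate $|(T_h f)(x,\xi)|\le C\,h^{-d/4}e^{-\epsilon^2/(8h)}$.

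To upgrade this to the weighted estimate $|(1+x^2+\xi^2)^N(T_h f)(x,\xi)|\le C_N e^{-\delta h^{-1}}$ demanded by the definition of uniform microlocal exponential smallness, I would integrate by parts in $\eta$ using the identity $x^\alpha e^{\rmi x\cdot\eta}=(-\rmi\partial_\eta)^\alpha e^{\rmi x\cdot\eta}$. Since $\hat f(\eta)e^{\xi\cdot\eta-h\eta^2/2}$ is smooth and compactly supported in $\eta$, each $\partial_\eta^\alpha$ produces at most a factor $C_\alpha(1+|\xi|)^{|\alpha|}$ (uniformly in $h\le 1$); hence arbitrary polynomial growth in $|x|$ is absorbed, while polynomial growth in $|\xi|$ is swallowed by a sliver of the Gaussian prefactor, at the cost of a slightly smaller $\delta$.

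I do not anticipate any substantive obstacle beyond this routine bookkeeping. The essence of the argument is the explicit Gaussian evaluation that isolates the factor $e^{-\xi^2/(2h)}$; the compact spectrum of $f$ then forces the Gaussian in $\eta-\xi/h$ to be evaluated far from its maximum whenever $|\xi|>\epsilon$, which is precisely the source of the exponential smallness away from the zero section.
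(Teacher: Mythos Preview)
Your argument is correct and is essentially the same as the paper's: both express $T_h f$ through the Fourier transform of $f$ (the paper via the identity $T_h f(x,\xi)=e^{\rmi x\xi/h}(T_h\mathcal{F}_h f)(\xi,-x)$, you by direct substitution of Fourier inversion), reduce to an integral over the compact support of $\hat f$, and then use integration by parts to handle the polynomial weights in $x$. Your version isolates the factor $e^{-\xi^2/(2h)}$ explicitly where the paper keeps it as $e^{-(\xi-\xi_0)^2/(2h)}$ with $|\xi_0|\le ch$, but after the substitution $\xi_0=h\eta$ the two computations coincide.
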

\begin{proof}
 Let $c>0$.
 The support properties of $\hat f$ and $\mathcal{F}_h f = h^{-\frac{d}{2}}  \hat f(h^{-1} \xi)$ imply that there exist a $c>0$ such that $\mathcal{F}_h f(\xi_0)=0$ when $| \xi_0 | \geq c h$.
 We have $T_h f(x,\xi)= e^{\frac{\rmi}{h} x \xi} (T_h \mathcal{F}_h f)(\xi,-x)$, and therefore 
 \begin{align*}
   T_h f(x,\xi) &= \alpha_h \int_{\R^d}  e^{-\frac{(\xi-\xi_0)^2}{2h}} e^{-\frac{\rmi}{h} x \xi_0} \hat f(h^{-1} \xi_0) \der \xi_0 \\ &=
 \alpha_h h^{-\frac{d}{2}} \int_{| \xi_0 | \leq c \cdot h}  e^{-\frac{(\xi-\xi_0)^2}{2h}} e^{-\frac{\rmi}{h} x \xi_0} \hat f(h^{-1} \xi_0) \der \xi_0.
 \end{align*}
  For multi-indices $\alpha, \beta \in \mathbb{N}_0^d$ integration by parts actually gives
 \begin{align*}
   \xi^\beta x^\alpha T_h f(x,\xi) &=  \xi^\beta  \alpha_h \int_{\R^d}  e^{-\frac{(\xi-\xi_0)^2}{2h}} \left( (\rmi h \partial_{\xi_0})^\alpha e^{-\frac{\rmi}{h} x \xi_0} \right) \hat f(h^{-1} \xi_0) \der \xi_0 \\ &=  \xi^\beta
  \alpha_h h^{-\frac{d}{2}} \int_{| \xi_0 | \leq c \cdot h}  e^{-\frac{\rmi}{h} x \xi_0}  (-\rmi h \partial_{\xi_0})^\alpha \left( e^{-\frac{(\xi-\xi_0)^2}{2h}} \hat f(h^{-1} \xi_0) \right) \der \xi_0
 \end{align*}

 Using $|\xi - \xi_0 | \geq | | \xi | - | \xi_0 | | $ this implies for $| \xi | > \tilde \delta + c \cdot h> c\cdot h$ that  
  $$
  \| \xi^\beta x^\alpha T_h f(x,\xi) \| \leq C_{\alpha, \beta}\alpha_h h^{-\frac{3d}{2}} (1+| \xi |)^{|\beta|+|\alpha|} e^{-\frac{(| \xi |-c \cdot h)^2}{2h}} 
  \sum_{\gamma \leq \alpha}\| \partial^{\gamma} \hat f \|_{L^1(\R^d)}.
 $$
 We have applied the product rule to the expression
 $$
  (-\rmi h \partial_{\xi_0})^\alpha \left( e^{-\frac{(\xi-\xi_0)^2}{2h}} \hat f(h^{-1} \xi_0) \right)
 $$
 and accumulated extra appearing factors into $C_{\alpha, \beta}$.
 This implies the claimed exponential decay uniformly when $|\xi|\geq\epsilon>0$.
 \end{proof}
 
 \subsection{The algebra of zero energy functions.} For a general analytic manifold we 
define the space  $C^\infty_{0,h,O}(M)$ of families
of compactly supported test functions that are uniformly exponentially small away from any neighborhood of the zero section in $T^*M$.
Intuitively, these test functions  asymptotically do not carry energy. By Prop. \ref{multmic}
the space $C^\infty_{0,h,O}(M)$ is a subalgebra of $C^\infty_{0,h}(M)$. 
If $M$ is compact the functions in $C^\infty_{0,h,O}(M)$ that are independent of $h$ are precisely the real analytic functions. In case $M$ is connected and not compact the zero function is the only function in  $C^\infty_{0,h,O}(M)$ that is independent of $h$.
There are however a lot of families of functions in $C^\infty_{0,h,O}(M)$ that depend on $h$. 
Given a compactly supported smooth function $\chi \in C^\infty_0(\R^d)$ that is one in a neighborhood of a point $y \in \R^d$ the family
$$
 \chi(x) e^{-\frac{(x-y)^2}{2h}}
$$
is an example of a family in $C^\infty_{0,h,O}(\R^d)$. Using analytic coordinate charts one can use these cut-off Gaussians to construct such families on general analytic manifolds $M$. More generally we have the following.
\begin{lemma}[Existence of bump functions] \label{bumplemma}
 Let $M$ be a real analytic manifold, and let $K \subset M$ be a compact subset. Then there exists a family $(\chi_h) \in C^\infty_{0,h,O}(M)$ such that $\chi_h = 1$ in an open neighborhood of $K$.
\end{lemma}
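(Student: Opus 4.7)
The plan is to reduce the statement to a local construction in $\R^d$ and then glue by analytic charts. First, cover $K$ by finitely many relatively compact open sets $V_1,\ldots,V_n$ whose closures are contained in domains $U_1,\ldots,U_n$ of analytic coordinate charts. Supposing, for each $i$, we can build a family $\chi_h^{(i)}\in C^\infty_{0,h,O}(M)$ supported in $U_i$ with $\chi_h^{(i)}\equiv 1$ on an open neighborhood of $\overline{V_i}$, I would then set
\[
 \chi_h \;=\; 1 \;-\; \prod_{i=1}^n\bigl(1-\chi_h^{(i)}\bigr).
\]
This is compactly supported, identically $1$ on $\bigcup_i V_i\supset K$, and, after expanding the product, is a polynomial in the $\chi_h^{(i)}$, hence lies in $C^\infty_{0,h,O}(M)$ by multiplicative and additive closure of this algebra (Prop.~\ref{multmic}).

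For the local step in $\R^d$, given a compact $K'\subset U\subset\R^d$, pick nested open sets $K'\subset W_0\subset W_1\subset W_2\subset U$ with positive distances between successive boundaries, a smooth cutoff $\eta\in C^\infty_0(U)$ equal to $1$ on $W_2$, and the normalized Gaussian $G_h(z)=(2\pi h)^{-d/2}e^{-|z|^2/(2h)}$. I would then define
\[
 \tilde\chi_h(x)=\eta(x)\,\bigl(G_h\ast\mathbf 1_{W_1}\bigr)(x).
\]
A direct Gaussian-in-Gaussian integration (complete the square in the inner $y$-variable) shows $|T_h\tilde\chi_h(x,\xi)|\lesssim h^{-d/4}\,e^{-|\xi|^2/(4h)}$ up to a remainder produced by $(1-\eta)(G_h\ast\mathbf 1_{W_1})$, which is Schwartz-exponentially small because $G_h\ast\mathbf 1_{W_1}$ and all its derivatives decay like $O(e^{-c/h})$ off $W_2$ by the Gaussian tail bound. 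Hence $\tilde\chi_h\in C^\infty_{0,h,O}(\R^d)$. To force exact equality to $1$, note that on any subset of $W_0$ at positive distance from $\partial W_1$,
\[
 1-(G_h\ast\mathbf 1_{W_1})(x)=\int_{\R^d\setminus W_1}G_h(x-y)\,\mathrm d y = O(e^{-c'/h}),
\]
with the same bound for all partial derivatives (extra derivatives only introduce polynomial factors in $h^{-1}$ against the same Gaussian tail). Picking $\eta_0\in C^\infty_0(W_0)$ equal to $1$ on an open neighborhood $N$ of $K'$, set
\[
 \chi_h^{\mathrm{loc}}=\tilde\chi_h+\eta_0\cdot(1-\tilde\chi_h).
\]
Then $\chi_h^{\mathrm{loc}}\equiv 1$ on $N$, while the correction $\eta_0(1-\tilde\chi_h)$ has all Schwartz seminorms of order $O(e^{-c'/h})$, hence is trivially microlocally uniformly exponentially small everywhere. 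Combined with $\tilde\chi_h\in C^\infty_{0,h,O}(\R^d)$, this gives $\chi_h^{\mathrm{loc}}\in C^\infty_{0,h,O}(\R^d)$, as required for the local building block.

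The main technical obstacle is the uniform FBI bound for $\tilde\chi_h$. The Gaussian saddle-point computation is clean in principle and explicitly produces the decay factor $e^{-|\xi|^2/(4h)}$, but one must check that the non-analytic cutoff $\eta$ contributes only exponentially small corrections. The Gaussian tail estimate on $G_h\ast\mathbf 1_{W_1}$ outside $W_2$ handles this directly, effectively insulating the non-analytic part of $\eta$ from the bulk Gaussian mass. An alternative, perhaps more intrinsic, route is to note that $G_h\ast\mathbf 1_{W_1}$ extends to an entire function on $\C^d$ with growth $|(G_h\ast\mathbf 1_{W_1})(x+i\tau)|\le e^{|\tau|^2/(2h)}$, and to invoke a holomorphic-extension criterion for membership in $C^\infty_{0,h,O}$ via the contour shift $y\mapsto y-i\xi$ in the integral defining $T_h$.
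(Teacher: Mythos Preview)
Your proposal is correct and follows a genuinely different strategy from the paper. The paper reduces to $\R^d$ in one stroke via Grauert's analytic proper embedding theorem and then pulls back a single function constructed on $\R^d$; there it convolves a \emph{compactly supported} truncated Gaussian $\tilde\chi(x)(2\pi h)^{-d/2}e^{-|x|^2/(2h)}$ (normalized so its total mass is exactly $1$) against a smooth bump $b$ equal to $1$ on a slight enlargement of the target ball. This convolution is automatically compactly supported and exactly equal to $1$ on the ball, and the FBI bound follows from the one-line identity $T_h(\tilde\chi_h * b)(x,\xi)=\int (T_h\tilde\chi_h)(x-y,\xi)\,b(y)\,\der y$.

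Your route avoids Grauert entirely by working chart-by-chart and gluing via the inclusion--exclusion formula $1-\prod_i(1-\chi_h^{(i)})$, expanded as a polynomial in the $\chi_h^{(i)}$ so that Prop.~\ref{multmic} applies. This is more elementary and makes the local nature of the class $C^\infty_{0,h,O}$ explicit, but it costs you two extra steps the paper bypasses: the outer analytic cutoff $\eta$ (because $G_h\ast\mathbf 1_{W_1}$ is not compactly supported) and the correction $\tilde\chi_h+\eta_0(1-\tilde\chi_h)$ (because convolving against an indicator only gives $1+O(e^{-c/h})$ rather than $1$). Both arguments ultimately rest on the same Gaussian-in-Gaussian computation producing the factor $e^{-|\xi|^2/(4h)}$. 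The paper's version is shorter once the embedding is granted; yours is self-contained and would work equally well in settings where one prefers not to invoke a global embedding.
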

\begin{proof}
 We first show this for $M = \R^d$. Since every compact subset is contained in a compact ball it is sufficient to construct such a family for a closed ball centered at $0$. To show this we construct a family $(\chi_h) \in C^\infty_{0,h,O}(\R^d)$
 such that $\chi_h =1$ on a ball $B_R(0)$ of radius $R>0$ centered at zero.
 We choose a compactly supported bump function $b \in C^\infty_0(\R^d)$ that equals one on $B_{R+\delta}(0)$. We choose another bump function $\tilde \chi \in C^\infty_0(\R^d)$ such that $\tilde \chi$ is one near $x=0$ and supported in the ball $B_\delta(0)$. 
 {
 Then the family $\tilde \chi_h(x) = \tilde \chi(x)(2 \pi h)^{-\frac{d}{2}} e^{-\frac{x^2}{2h}}$ is supported in $B_\delta(0)$, and is an element of $C^\infty_{0,h,O}(\R^d)$. Note that
 $$
  c_h = \int_{\R^d}  \chi_h(x) \der x = 1 + r_h,
 $$
 with an exponentially small remainder term $r_h$. Hence $c_h^{-1} - 1$ is exponentially small.}
Now define $\chi_h$ as the convolution
 $$
  \chi_h  = c_h^{-1} \tilde \chi_h *b.
 $$
 This function will be one on $B_R(0)$. To see that $(\chi_h) \in C^\infty_{0,h,O}(\R^d)$ observe that the FBI transform of the convolution is $T_h(\chi_h)(x,\xi) = \int_{\R^d} (T_h \tilde \chi_h)(x-y,\xi) b(y) \der y$.
 
 To construct such a function on a general manifold we choose an analytic proper embedding $M \to \R^d$. Such an embedding always exists (see \cite{MR98847}). The statement then follows from the fact that restrictions of functions in $C^\infty_{0,h,O}(\R^d)$ to $M$ are automatically in $C^\infty_{0,h,O}(M)$. This is an immediate consequence of Prop. \ref{cochart}.
\end{proof}

We also define the space $C^\infty_{h,O}(\R^d)$ as the space of functions $f_h$ in $C^\infty_h(\R^d)$ with the property that for any compact subset $K \subset M$ and any compactly supported smooth function $\chi \in C^\infty_0(M)$ that equals one in an open neighborhood of $K$ we have for any $\epsilon>0$ the estimate 
$$
  \| (1+ |\xi|)^N T_h(\chi f_h)(x,\xi) \| \leq C_N e^{-\delta h^{-1}} 
$$
uniformly on $K \times (\R^d \setminus B_\epsilon)$. Again, using local charts one can define the algebra $C^\infty_{h,O}(M)$.
For $f_h \in C^\infty_{h,O}(M)$ and $u_h \in C^\infty_{0,h,O}(M)$ we then have $f_h u_h \in C^\infty_{0,h,O}(M)$.

Recall that a time function is a function whose gradient
is everywhere timelike, and whose level surfaces are Cauchy hypersurfaces.

\begin{lemma} \label{ncbumplemma}
 Let $(M,g)$ be a globally hyperbolic analytic spacetime.
 Suppose that $\iota: M \to \R^d, x \mapsto (\iota_1(x),\ldots,\iota_d(x))$ is a proper analytic embedding such that $t(x) = \iota_1(x)$ is a proper time function.  Then, for any $\delta>0$ there exists a family of functions $(\rho_h) \in C^\infty_{h,O}(M)$ such that
 $\rho_h(x)=1$ for all $x \in M$ with $\iota_1(x) \geq \delta$ and $\rho_h(x) =0$ for all $x \in M$ with $\iota_1(x) \leq -\delta$.
 \end{lemma}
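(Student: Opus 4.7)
My plan is to reduce the lemma to a one-dimensional construction by pullback along $\iota$. Specifically, I would construct a smoothed step $F_h\colon\R\to[0,1]$ that is $1$ on $[\delta,\infty)$, $0$ on $(-\infty,-\delta]$, and lies in $C^\infty_{h,O}(\R)$; then extend to $\R^d$ via $\tilde F_h(y):=F_h(y_1)$ and set $\rho_h:=\tilde F_h\circ\iota$. The FBI kernel factorises across the coordinate splitting $y=(y_1,y')$, yielding
$$
T_h\tilde F_h(x,\xi) \;=\; c_{d,h}\,(T_hF_h)(x_1,\xi_1)\cdot e^{-|\xi'|^2/(2h)},\qquad \xi=(\xi_1,\xi'),
$$
so for $|\xi|\geq\varepsilon$ one of the two factors supplies exponential smallness (the one-dimensional transform when $|\xi_1|$ is bounded below, the transverse Gaussian otherwise). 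Hence $\tilde F_h\in C^\infty_{h,O}(\R^d)$, and restriction along the analytic embedding $\iota$ places $\rho_h$ in $C^\infty_{h,O}(M)$ by the argument at the end of the proof of Lemma \ref{bumplemma} (an immediate consequence of Prop. \ref{cochart}). The boundary conditions on $\rho_h$ are built into the construction of $F_h$.

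For $F_h$ itself, I would fix a smooth step $b\in C^\infty(\R,[0,1])$ with $b=0$ on $(-\infty,-\delta/2]$ and $b=1$ on $[\delta/2,\infty)$, and a cut-off $\tilde\chi\in C^\infty_0(\R)$ equal to $1$ near $0$ and supported in $(-\delta/2,\delta/2)$. Set $\tilde\chi_h(s):=\tilde\chi(s)(2\pi h)^{-1/2}e^{-s^2/(2h)}$, $c_h:=\int_\R\tilde\chi_h(s)\,ds$, and $F_h:=c_h^{-1}(\tilde\chi_h\ast b)$. The support conditions on $\tilde\chi_h$ and $b$ then force $F_h\equiv 1$ on $[\delta,\infty)$ and $F_h\equiv 0$ on $(-\infty,-\delta]$. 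Crucially, the derivative $F_h'=c_h^{-1}(\tilde\chi_h\ast b')$ is compactly supported in $(-\delta,\delta)$ and lies in $C^\infty_{0,h,O}(\R)$: this uses precisely the convolution identity $T_h(\tilde\chi_h\ast b')(x,\xi)=\int(T_h\tilde\chi_h)(x-y,\xi)\,b'(y)\,dy$ employed in the proof of Lemma \ref{bumplemma}.

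The central difficulty—and the main obstacle—is showing that the non-compactly supported $F_h$ itself lies in $C^\infty_{h,O}(\R)$. My plan is to use the FBI derivative identity $\partial_x T_hf(x,\xi)=T_h(f')(x,\xi)$ (a short integration by parts in the FBI integral) together with integration from a reference point $X\geq 2\delta$:
$$
T_hF_h(x,\xi) \;=\; T_hF_h(X,\xi) \;-\; \int_x^X T_h(F_h')(s,\xi)\,ds.
$$
For $X\geq 2\delta$, Gaussian localisation of the FBI kernel reduces $T_hF_h(X,\xi)$ to $T_h(1)(X,\xi)=\alpha_h\sqrt{2\pi h}\,e^{-\xi^2/(2h)}$ up to a term of order $e^{-\delta^2/(4h)}$, which is exponentially small for $|\xi|\geq\varepsilon$. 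The integrand $T_h(F_h')(s,\xi)$ is uniformly exponentially small on $|\xi|\geq\varepsilon$ thanks to the $C^\infty_{0,h,O}$-property of $F_h'$, supplemented by direct Gaussian decay of the kernel for $s$ outside $\supp(F_h')$. Exponential decay then dominates the polynomial weight $(1+|\xi|)^N$ appearing in the definition of $C^\infty_{h,O}$, closing the argument.
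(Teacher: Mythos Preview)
Your proposal is correct and shares with the paper the same overall reduction: pull back along $\iota$, reduce to $\R^d$, exploit the coordinate factorisation of the FBI transform to reduce to $d=1$, and then construct a one-dimensional smoothed step. The transverse Gaussian factor you write down is exactly the paper's observation that the FBI transform of a product-type function is the product of the one-variable transforms (with the other factors being the constant function $1$).

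Where you diverge is in the one-dimensional verification that the smoothed step lies in $C^\infty_{h,O}(\R)$. You build $F_h$ as a mollified step and then argue via the identity $\partial_x T_h f = T_h(f')$, integrating from a reference point $X$ and controlling the boundary term by comparison with $T_h(1)$. This works, but it is more laborious than necessary. The paper instead observes that the defining estimate for $C^\infty_{h,O}$ is \emph{local}: after multiplication by a cutoff $\chi$ equal to one near a compact set $K$, only the behaviour of the function near $K$ matters, since the FBI kernel is Gaussian-localised. Consequently one may simply take a compactly supported bump $\tilde\rho_h\in C^\infty_{0,h,O}(\R)$ from Lemma~\ref{bumplemma} (equal to $1$ on $[0,1]$, supported in $[-1,2]$) and define $\rho_h$ by gluing: $\rho_h=\tilde\rho_h$ on $(-\infty,1]$ and $\rho_h=1$ on $(1,\infty)$. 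Near any point one sees either $\tilde\rho_h$ or the constant $1$, both of which satisfy the required FBI estimate, and locality finishes the argument without any integration trick. Your route buys a self-contained computation; the paper's route buys brevity by leaning on the local nature of the $C^\infty_{h,O}$ condition, which you might find worth internalising.
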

 \begin{proof}
We construct such a function on $\R^d$ with the time function $(x_1,\ldots,x_d) \mapsto x_1$, as in the more general case this function can simply be restricted to $M$. The estimate
$$
  \| (1+ |\xi|)^N T_h(u)(x,\xi) \| \leq C_N e^{-\delta h^{-1}} 
$$
is easily checked to hold for the constant function $u=1$. 
The FBI-transform of a function of product type of the form $f(x_1,\ldots,x_d) = f_1(x_1) \cdots   f_d(x_d)$ is the product of the individual FBI-transforms.
It is sufficient to construct the family of functions $(\rho_h)$ in the case $d=1$. In the higher dimensional case we simply take the pull-back under the projection map $(x_1,\ldots,x_d) \to x_1$. This results in a product-type function of the above form, with all the functions except the first being the constant function one. Therefore, assume without loss of generality that $d=1$.

Next we observe that the above estimate is local in the sense that if a function $u$ vanishes near a point $x$, then the FBI-transform satisfies the above estimate near that point. This also implies that it is sufficient to check the estimate for an open cover.
We now use a bump function $\tilde \rho_h \in C^\infty_{0,h,O}(\R)$ as constructed in Lemma \ref{bumplemma} that equals one on the interval $[0,1]$ and has support in $[-1,2]$. Now simply define
$$
 \rho_h(x) = \begin{cases} \tilde \rho_h(x) & x \leq 1\\ 1 & x >1  \end{cases}.
$$
By the above this function satisfies locally the above estimate, and therefore is in $C^\infty_{h,O}(\R)$. This function satisfies the required properties with $\delta=1$. A simple rescaling argument shows that such a function exists for arbitrary $\delta>0$.
 \end{proof}

\section{Analytic states in Quantum Theory} \label{seclytic}

We now look at time-evolution in quantum physics from a microlocal point of view. It is instructive to look at the various wavefront sets that are naturally associated to this setting. First consider a unit vector (state) $\psi \in \mathcal{H}$ in a Hilbert space.
Let $H$ be a self-adjoint operator, the Hamiltonian that we think of as the generator of time-translations. A typical stability assumption of quantum physics is that $H$ should be semi-bounded below, i.e.  the spectrum of $H$ should be contained in the set $[-C,\infty)$ for some constant $C$. With
$$
 U(t) \psi = e^{-\rmi t H} \psi
$$
we then have $\WFa(U(t) \psi ) \subseteq \R \times (-\infty,0]$. 
To see this note that by the spectral theorem for self-adjoint operators we can pass via a unitary transformation to a Hilbert space on which $H$ is a multiplication operator by a function $f$ on some measure space $(X,\mu)$ that acts by multiplication on the Hilbert space $L^2(X,\mu)$. Then, $U(t) \psi$ is unitarily equivalent to the Hilbert space valued function
$$
 e^{-\rmi t f(x)} \psi(x),
$$
and we can compute the FBI transform in the $t$-variable at the point $t_0$
\begin{align*}
 &\TT_h(e^{-\rmi t f(x)} \psi(x))(t_0,\eta) 
 \\&= \alpha_h \int_{\R} e^{-\frac{1}{2h}(t-t_0)^2} e^{-\rmi t f(x)} \psi(x) e^{-\frac{\rmi}{h} (t-t_0) \eta} \der t  \\&=
 \alpha_h \sqrt{2 \pi h} e^{-\frac{1}{2h} (h f(x) + \eta) (h f(x) + \eta +2 \rmi  t_0) +\frac{\rmi}{h} t_0 \eta} \psi(x),
\end{align*}
and therefore
\begin{align*}
 \|\TT_{h}(e^{-\rmi t f(x)} \psi(x))(t_0,\eta) \| = 
 \alpha_h \sqrt{2 \pi h}  \| e^{-\frac{1}{2 h} (h f(x) + \eta)^2} \psi(x)\|.
\end{align*}
Since $f(x)>-C$ this is exponentially fast decaying independent of $t_0$ as $h \to 0$ as soon as $\eta>0$. Consequently the analytic wavefront set is contained in $\R \times (-\infty,0]$. Note that the analytic wavefront set is however empty if and only if the vector $\Psi$ is an analytic vector with respect to the time evolution.
It is a general result that there is always a dense set of analytic vectors for any self-adjoint operator, and more generally for any Lie group represented in a strongly continuous fashion on the Hilbert space. A generic vector is typically not analytic.

 It is instructive to investigate the model of Schr\"odinger quantum mechanics, which we think of as a quantum field in one space-time dimension. 
 In the following let $H$ be the Friedrichs extension of the operator
 $$
   - \Delta + V(x),
 $$
 on  $\mathcal{S(\R^d)}$, where $V$ is a polynomially bounded smooth potential satisfying $V(x) \geq a x^2$ for some $a>0$.  It is easy to check, using the explicit description of the domain of the Friedrichs extension, that 
 $$
  \mathrm{dom}(H) = \{ \phi \in L^2(\R) \mid \phi \in H^2(\R), \; V \phi \in L^2(\R) \},
 $$
 and it follows from the variational principle that there is a spectral gap $\mathrm{spec}(H) \subset [\sigma_0,\infty)$ for some $\sigma_0>0$.
Recall that a vector $\psi$ is analytic for $H^\alpha$ if and only if
$$
 \sum_{n=0}^\infty \frac{\| H^{n \alpha} \psi \|}{n!} t^n
$$
 converges for some $t>0$.
 {
 Since the spectral measure $\der E_\lambda$ is supported on $[\sigma_0,\infty)$ we have for any $ \alpha  \in (0,1]$ the estimate
$$
  \| H^{n \alpha} \psi \|^2 = \int_{\sigma_0}^\infty \lambda^{2n \alpha} \langle \psi, \der E_\lambda \psi \rangle \leq 
  \sigma_0^{2n(\alpha-1)} \int_{\sigma_0}^\infty \lambda^{2n} \langle \psi, \der E_\lambda \psi \rangle =\sigma_0^{2n(\alpha-1)}  \| H^{n} \psi \|^2 .
$$
This shows that, for any $\alpha \in (0,1]$, the set of analytic vectors for the operator $H^\alpha$ is contained in the set of analytic vectors for the operator $H$. We will now refer to this set simply as the set of analytic vectors without reference to $\alpha \in (0,1]$. We note however that there is a dense set of vectors which is analytic for all positive powers of the operator $H$.}
Any analytic vector $\psi$ is in the domain of the operator $e^{s H^\frac{1}{2}}$ for $s \in (-\delta,\delta)$ and sufficiently small  $\delta>0$.
This implies that $\psi_s = e^{s H^\frac{1}{2}} \psi$ satisfies the elliptic equation $(-\partial_s^2 + H) \psi_s =0$ and therefore $\psi_s(x)$ is smooth in $s$ and $x$. In case $V$ is real analytic we can conclude, by the same argument, that $\psi$ a real analytic $L^2$-function on $\R$.\\
We now consider the time evolution $U(t) = e^{-\rmi t H^\alpha}$. The case $\alpha=1$ corresponds to non-relativistic Schr\"odinger mechanics, whereas the case $\alpha=\frac{1}{2}$ is a relativistic counterpart. Since $\alpha$ will be fixed throughout we will suppress it in the notations. 

The domain of $x$, regarded as a multiplication operator in $L^2(\R)$, contains the domain of $H$. Hence, the operator $A = H^{-1} x H^{-1}$ is a bounded self-adjoint operator.
Given $f \in \mathcal{S}(\R)$ we define $A_f = \int_\R f(s) U(-s) H^{-1} x H^{-1} U(s) \der s$ as a Bochner integral.
Since $U(s)$ commutes with $H$ the unbounded operator defined by $x_f = H  A_f H$ defines the formal
expression $x_f= \int_\R f(s) U(-s) x  U(s) \der s$ avoiding the discussion of integrals of unbounded operators.

\begin{proposition} \label{invharm}
If $f \in \mathcal{S}(\R)$ is a Schwartz function, then $x_f$ leaves the domain of smoothness $\mathrm{dom}(H^\infty)$ of $H$ invariant.
For any $\psi \in \mathrm{dom}(H^\infty)$ and $m \in \mathbb{N}$ the map 
$$
\mathcal{S}(\R) \to \mathrm{dom}(H^m), f \mapsto x_f \psi
$$ 
is continuous and therefore defines a tempered vector-valued distribution.
If in addition $f$ has a compactly supported Fourier transform $\hat f \in C^\infty_0(\R)$, then $x_f$ leaves the set of analytic vectors invariant. 
 \end{proposition}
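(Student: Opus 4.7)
The plan is to reduce everything to operator bounds controlled by $H$, exploiting the confining hypothesis $V(x)\geq ax^2$ to tame multiplication by $x$. Indeed, $x^2\leq \tfrac{1}{a}V(x)\leq \tfrac{1}{a}H$ as quadratic forms, so $xH^{-1/2}$ extends to a bounded operator on $L^2(\R)$ with norm $\leq a^{-1/2}$. Since $U(s)=e^{-\rmi sH^\alpha}$ commutes with $H$, it preserves each $\mathrm{dom}(H^m)$ isometrically in the graph norm, and the family $s\mapsto xU(s)$ is uniformly $H^{1/2}$-bounded.

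For the first two claims I would show that for $\psi\in\mathrm{dom}(H^\infty)$ and any $m\in\mathbb{N}$ there is an estimate
\[
\|H^m x_f\psi\|\leq C_m(f)\sum_{k=0}^{N(m)} \|H^k\psi\|,
\]
with $C_m(f)$ a continuous seminorm on $\mathcal{S}(\R)$. Formally $H^m x_f\psi=\int f(s)U(-s)(H^m x)U(s)\psi\,\der s$, so the task is to push $H^m$ past $x$. Using the derivation identity
\[
H^m x=\sum_{k=0}^{m}\binom{m}{k}\mathrm{ad}_H^k(x)\,H^{m-k},
\]
the problem reduces to estimating iterated commutators: since $[H,x]=-2\partial_x$ and $[H,\partial_x]=-V'(x)$, induction shows that $\mathrm{ad}_H^k(x)$ is a differential operator of order $\leq k$ whose coefficients are polynomials in $x$ and derivatives of $V$. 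The polynomial bound on $V$ and its derivatives, together with $\|\partial_x\phi\|\leq\|H^{1/2}\phi\|$ and $\|x\phi\|\leq a^{-1/2}\|H^{1/2}\phi\|$, shows that each $\mathrm{ad}_H^k(x)$ is dominated by a fixed power $H^{N(k)}$. Inserting these bounds in the integrand and using that $f\in\mathcal{S}(\R)$ to control the $s$-integral yields both the invariance of $\mathrm{dom}(H^\infty)$ and the continuity of $f\mapsto x_f\psi$.

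For the third claim I pass to the spectral representation of $H^\alpha$. Writing $U(s)=\int e^{-\rmi s\mu}\,\der E^{H^\alpha}_\mu$ and carrying out the $s$-integration against $f$, the matrix elements of $x_f$ between spectral subspaces of $H^\alpha$ at levels $\lambda$ and $\mu$ become proportional to $\hat f(\mu-\lambda)$. When $\supp \hat f\subset[-R,R]$, this means $x_f$ shifts spectral support in the $H^\alpha$-picture by at most $R$. Combining this with the windowed form bound
\[
\|x_f E^{H^\alpha}_{[\sigma,\sigma+1+R]}\psi\|\leq\|f\|_{L^1}a^{-1/2}(\sigma+1+R)^{1/(2\alpha)}\|E^{H^\alpha}_{[\sigma,\sigma+1+R]}\psi\|,
\]
and the characterisation of analyticity as $\int e^{2t\mu}\,\der\|E^{H^\alpha}_\mu\psi\|^2<\infty$ for some $t>0$, a summation over unit-length spectral windows yields $\int e^{2t'\mu}\,\der\|E^{H^\alpha}_\mu x_f\psi\|^2<\infty$ for any $t'<t$: the exponential factor $e^{-2(t-t')\sigma}$ absorbs both the polynomial correction from the form bound and the bounded window overlap. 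Hence $x_f\psi$ is again analytic.

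The main obstacle is the commutator bookkeeping needed for Parts 1 and 2: one has to track how $\mathrm{ad}_H^k(x)$ is dominated by powers of $H$, which requires simultaneous control of $x$, $\partial_x$, and polynomials in $V$ and its derivatives. Once that is in place, the Bochner integration and the spectral shift argument for Part 3 are essentially routine.
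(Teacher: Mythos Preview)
Your approach is essentially correct but takes a genuinely different route from the paper, and it is worth contrasting the two.

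The paper avoids all commutator bookkeeping by working with the bounded operator $A=H^{-1}xH^{-1}$ from the start. Since $x_f=HA_fH$ and $H$, $U(t)$ preserve both the domain of smoothness and the set of analytic vectors, it suffices to treat $A_f$. The key identity is
\[
U(t)A_f\psi=\Bigl(\int_{\R}f(s+t)\,U(-s)AU(s)\,\der s\Bigr)U(t)\psi=:B(t)\,U(t)\psi,
\]
obtained by translating the integration variable. Because $\|U(-s)AU(s)\|=\|A\|$ is uniformly bounded, $B(t)$ is a bounded-operator-valued function whose regularity in $t$ is inherited directly from that of the translate $t\mapsto f(\cdot+t)$ in $L^1$: smooth for $f\in\mathcal{S}(\R)$, and entire when $\hat f$ has compact support. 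Smoothness or analyticity of $U(t)A_f\psi$ then follows from the product rule, and the temperedness bound $\|H^mA_f\psi\|\le C_m\|f\|_{W^{m,1}}\|(H+1)^m\psi\|$ drops out of the same formula.

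Your argument replaces this translation trick by two separate mechanisms: an $\mathrm{ad}_H$-expansion for the invariance of $\mathrm{dom}(H^\infty)$, and a spectral-shift argument for the analytic case. The spectral-shift part is correct and self-contained. The commutator part, however, is where the cost shows up. Since $\mathrm{ad}_H^2(x)=2V'(x)$ and higher commutators bring in $V'',V''',\dots$, you need polynomial bounds on all derivatives of $V$, whereas the paper's argument uses only the form bound $x^2\le a^{-1}H$ and never differentiates $V$. Moreover, turning pointwise polynomial bounds on $V^{(j)}$ into operator bounds $\|x^j\partial_x^k\phi\|\le C\|H^N\phi\|$ is not as immediate as your sketch suggests: iterating $\|x\phi\|\le a^{-1/2}\|H^{1/2}\phi\|$ and $\|\partial_x\phi\|\le\|H^{1/2}\phi\|$ requires commuting $H^{1/2}$ back past $x$ and $\partial_x$ at each step, which generates further lower-order terms. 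This can be made to work (e.g.\ via an induction showing $\mathrm{dom}(H^m)\subset\mathrm{dom}(x^j\partial_x^k)$ for $j+k\le 2m$), but it is precisely the bookkeeping the paper's bounded-operator reformulation sidesteps.

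In short: your plan is viable under mildly stronger hypotheses on $V$ and with more work; the paper's proof is shorter and uses only the stated form bound, at the price of the less obvious factorisation $x_f=HA_fH$ and the translation identity for $B(t)$.
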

\begin{proof}
We first note that the group $U_\alpha(t)$ and the operators $H^\beta$ leave the set of analytic vectors and the domain of smoothness invariant for all $\alpha,\beta>0$.
 Therefore, we only need to show that these sets are invariant under the action of $A_f$. We prove first that the set of analytic vectors is invariant under $A_f$ if $f$ is the Fourier transform of a compactly supported smooth function.
 Let $\psi$ be an analytic vector. We need to show that $U(t)  A_f \psi$ is real analytic in $t$ near zero. We have
\begin{align} \label{formsmooth}
 U(t)  A_f \psi = \int_{\R} f(s) U(-s+t) A U(s) \der s\, \psi =  \left( \int_{\R} f(s+t) U(-s) A U(s) \der s \right) U(t) \psi.
\end{align}
By assumption the function $U(t) \psi$ is real analytic in $t$. It is therefore sufficient to establish that 
$B(t)= \int_{\R} f(s+t) U(-s) A U(s) \der s$
is an analytic function of $t$ with values in the bounded operators. Since $\| U(-s) A U(s) \| = \| A \|$ is a bounded function of $s$, the integrand is Bochner-integrable. Analyticity of $B(z)$ now follows from the fact that $f$ is entire and $f(x+z)$ is a complex analytic family of $L^1$-functions.\\
The invariance of the domain of smoothness is shown in a completely analogous way, replacing analyticity by differentiability. In case of general $f \in \mathcal{S}(\R)$ the function $B(t)$ is infinitely differentiable. That $x_f \psi$ is a tempered $\mathrm{dom}(H^m)$-valued distribution follows from the fact that $A_f$ is a tempered distribution taking values in the bounded operators from $\mathrm{dom}(H^m)$ to $\mathrm{dom}(H^m)$. One infers this directly from 
$$
 \| H^m A_f \psi \| =  \| \frac{\der^m}{\der t^m}|_{t=0} U(t) A_f \psi \| \leq C_m \left( \| f \|_{W^{m,1}(\R)} \cdot \| (H+1)^m \psi \|_{L^2(\R)}\right),
$$
for some constant $C_m$, by \eqref{formsmooth} and the product rule.
\end{proof}

The operator $x$ corresponds to the physical measurement of position and $x(t) = U(-t) x U(t)$ is the time-dependent family of operators in the Heisenberg picture. 
For a test function $f \in \mathcal{S}(\R)$ then $x_f$ plays the role of the smeared out field operator.
Given $\psi \in L^2(\R)$ we can think of  $x(t) \psi$ as a distribution with respect to the $t$-variable on $\R$ and analyse its analytic wavefront set. We claim that if $\psi$ is an analytic vector for the group $U(t)$ then $\WFa(x(t) \psi) = \R \times [0,\infty)$. To see this note that $U(t) \psi$ is an analytic function in the variable $t$ taking values in the domain of $H$, and hence in the domain of $x$. Thus, $U(-t) x U(t) \psi$ is the boundary value of a holomorphic function near the real axis in the upper half space.
This implies, by Prop. \ref{A3}, that $\WFa(x(t) \psi) \subseteq \R \times [0,\infty)$. 
If the vector $\psi$ is not analytic $x(t) \psi$ generically has wavefront set equal to $\R \times \R \setminus \{0\}$.
In this example we have a dense set of states $\psi$ with the property that $\WFa(U(t) \psi)$ is contained in  $\R \times [0,\infty)$.

A stronger statement is obtained by considering products of these operators and their analytic wavefront sets.
Given a vector in the domain of smoothness $\mathrm{dom}(H^\infty) = \cap_{k} \mathrm{dom}(H^k)$ the formal expression
$$
 x(t_1) \cdots x(t_m) \psi
$$
can be viewed as a Hilbert space valued distribution on $\R^m$. A priori, this expression does not make sense pointwise because $\psi$ may not be in the domain of $x(t_1) \cdots x(t_m)$. By Prop. \ref{invharm} the smeared out operators $x_f$ leave the domain of smoothness invariant, and therefore the expression is well defined as a tempered distribution.
Assuming that $\psi$ is an analytic vector the analytic wavefront of $x(t_1) \cdots x(t_m) \psi$ can only contain vectors of the form
$$
 (t_1,\xi_1,t_2,\xi_2,\ldots,t_m,\xi_m)
$$
where the rightmost non-zero number $\xi_j$ is non-negative. 
This can be inferred from the following proposition.

\begin{proposition}
 Assume that $\psi$ is an analytic vector. Then the analytic wavefront set of the distribution $x(t_1) \cdots x(t_m) \psi$
 is contained in the set $$
 \{(t_1,\xi_1,\ldots,t_m,\xi_m) \mid \xi_1+\ldots+\xi_m \geq 0,  \xi_2+\ldots+\xi_m \geq 0, \ldots, \xi_m \geq 0\}.
$$
\end{proposition}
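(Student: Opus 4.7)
The plan is to realize $F(t_1,\ldots,t_m) := x(t_1)\cdots x(t_m)\psi$ as the distributional boundary value of a jointly holomorphic $\calH$-valued function $\tilde F(z_1,\ldots,z_m)$ on the tube
\[
 T_\Gamma = \{z \in \C^m : 0 \le \Im z_1 \le \Im z_2 \le \cdots \le \Im z_m < \delta\},
\]
for some $\delta > 0$ strictly smaller than the radius of analyticity of $\psi$. Once this is established, the tube-wavefront correspondence (one of the equivalent characterizations of $\WFa$ collected in Appendix~\ref{appa}) forces $\WFa(F) \subseteq \R^m \times \Gamma^*$. Parameterizing $\Gamma$ by $u_k := \Im z_k - \Im z_{k-1} \ge 0$ with $\Im z_0 := 0$, we have $\xi \cdot y = \sum_{k=1}^m u_k(\xi_k + \xi_{k+1} + \cdots + \xi_m)$, so $\xi \in \Gamma^*$ if and only if $\xi_k + \xi_{k+1} + \cdots + \xi_m \ge 0$ for every $k = 1,\ldots,m$, which is exactly the set asserted in the proposition.

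For the holomorphic extension, rewrite
\[
 F(t_1,\ldots,t_m) = U(-t_1)\,x\,U(t_1-t_2)\,x\,U(t_2-t_3)\cdots x\,U(t_{m-1}-t_m)\,x\,U(t_m)\psi
\]
and substitute $z_j = a_j + i b_j$. The imaginary parts of the arguments of the successive $U$-factors are $-b_1,\,b_1-b_2,\,\ldots,\,b_{m-1}-b_m$, and $b_m$; on $T_\Gamma$ the first $m$ of these are non-positive, while $b_m \in [0,\delta)$. Since $H^\alpha \ge \sigma_0^\alpha>0$, the operator $U(\tau)=e^{-i\tau H^\alpha}$ is a bounded, holomorphic operator-valued function of $\tau$ on the lower half plane, and dually $U(-z_1)=e^{iz_1 H^\alpha}$ is bounded and holomorphic on the upper half plane. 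The rightmost factor $U(z_m)\psi$ is defined and holomorphic in $z_m$ throughout $|\Im z_m|<\delta_\psi$ by the assumed analyticity of $\psi$, and we take $\delta < \delta_\psi$.

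I would then build $\tilde F(z)$ iteratively from right to left: form $U(z_m)\psi \in \dom(H^\infty)$, apply $x$, then $U(z_{m-1}-z_m)$, then $x$, and so on. Each intermediate vector remains in $\dom(H^\infty)$: the bounded $U$-factors preserve this domain because they commute with $H$, and $x$ preserves it because in the present setting $[H,x]=-2\partial_x$ and $\partial_x$ is relatively bounded by $H^{1/2}$ (using $V\ge 0$), so a straightforward induction yields $H^k x\phi \in L^2$ for every $k$ and every $\phi \in \dom(H^\infty)$. Joint holomorphy of each partial product follows at each step from the half-plane holomorphy of the $U$-factors together with the closedness of $x$. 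The norm of $\tilde F(z)$ is dominated by a fixed power of $\|e^{\delta_\psi H^\alpha}\psi\|$ times a constant depending on $m$, which supplies the polynomial bounds the tube-wavefront correspondence requires, and the distributional boundary value of $\tilde F$ as the imaginary parts decrease to zero is precisely $F$.

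The main obstacle is the joint bookkeeping: at every stage of the iteration one needs simultaneously that the next bounded $U$-factor keeps the vector in $\dom(H^\infty)$ (which uses only $[U,H]=0$) and that the subsequent $x$ sends the result back into $\dom(H^\infty)$ (which uses the commutator estimate above, made available by the harmonic-like hypothesis $V\ge ax^2$). Once these domain issues are handled uniformly in the complex parameters, the inclusion $\WFa(F) \subseteq \R^m \times \Gamma^*$ is immediate from the tube-wavefront correspondence.
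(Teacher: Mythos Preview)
Your approach is correct and rests on the same idea as the paper's proof: exhibit $F$ as the distributional boundary value of a holomorphic function on a tube and invoke Proposition~\ref{A3}. The differences are in execution rather than strategy.

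The paper first passes to the difference coordinates $z_1=t_1,\ z_k=t_k-t_{k-1}$, which turns your cone $\{0\le y_1\le\cdots\le y_m\}$ into the orthant $\{\Im z_1>0,\ldots,\Im z_m>0\}$ and makes the dual-cone computation trivial; your direct parametrisation by $u_k=y_k-y_{k-1}$ achieves the same thing. More substantively, the paper sidesteps your domain bookkeeping entirely: it writes the operator part as $\partial_{z_1}\cdots\partial_{z_m}K$ with $K(z_1,\ldots,z_m)=U(-z_1)B\,U(-z_2)B\cdots U(-z_m)B$ and $B=xH^{-1}$ bounded, so holomorphy and the temperate boundary-value bound are immediate from boundedness of each factor, and differentiation does not enlarge the wavefront set. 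The analytic-vector hypothesis enters only through the separate analytic factor $U(z_1+\cdots+z_m)\psi$. Your iterative construction instead keeps the unbounded $x$ in place and tracks $\dom(H^\infty)$ through the product via $[H,x]=-2\partial_x$; this is workable, but note that the induction for $H^kx\phi\in L^2$ generates terms involving derivatives $V^{(j)}$ of the potential, so you are implicitly assuming these are polynomially bounded (the paper only states that $V$ itself is). The $B=xH^{-1}$ device avoids this entirely and also delivers the uniform norm bound needed for Proposition~\ref{A3} without further estimates.
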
 
\begin{proof}
It is convenient to change coordinates to 
$$
 z_1=t_1, z_2 = t_2-t_1,\ldots, z_m = t_{m}-t_{m-1}.
$$ 
This is a linear change and therefore induces a continuous map on Schwartz space.
In this coordinate system we have in the sense of distributions
$$
 x(t_1) \cdots x(t_m) \psi = U(-z_1) x U(-z_2) x U(-z_3) x \ldots U(-z_m) x U(z_1 + \ldots z_m) \psi.
$$
By assumption the function $U(z_1 + \ldots +z_m) \psi$ is an analytic functions in the variables $z_1,\ldots,z_m$ taking values in the domain of  $H$. We now consider the wavefront set of the distribution $U(-z_1) x U(-z_2) x U(-z_3) x \ldots U(-z_m) x$ that we regard as a tempered distribution taking values in the bounded operators from $\mathrm{dom}(H)$ to $\dom(H)$. 
To do this consider $H U(-z_1) x U(-z_2) x U(-z_3) x \ldots U(-z_m) x H^{-1}$ as a distribution with values in the bounded operators on $L^2(\R)$. 
This distribution can be written as the distributional derivative as $\partial_1 \cdots \partial_m K(z_1,\ldots,x_m)$ of the distribution
$$
K(z_1,\ldots,z_m) = U(-z_1) B U(-z_2) B U(-z_3) B \ldots U(-z_m) B,
$$
where $B$ is the bounded operator $x H^{-1}$.
Since $U(-z)$ is a bounded holomorphic function of $z$ taking values in the bounded operators in the region $\Im(z)>0$ 
we have that $K$ is the distributional boundary value of a function, holomorphic in small region $\mathcal{V}$ around the real axis intersected with 
$$
 \{(z_1,\ldots,z_m) \mid \Im(z_1),\ldots,\Im(z_m) >0 \}.
$$
Hence, by Prop. \ref{A3}, the analytic wavefront set is contained in
$$
 \{(t_1,\xi_1,\ldots,t_m,\xi_m) \mid \xi_1,\ldots,\xi_m \geq 0\}.
$$
Pull back with respect to the above change of coordinates shows that the analytic wavefront set is contained in the set
$$
 \{(t_1,\xi_1,\ldots,t_m,\xi_m) \mid \xi_1+\ldots+\xi_m \geq 0,  \xi_2+\ldots+\xi_m \geq 0, \ldots, \xi_m \geq 0\}.
$$
The same must be true for all distributional derivatives of $K$.
\end{proof}

\section{Quantum fields on spacetimes} \label{qftsec}

In the following we assume that $(M,g)$ is a connected $n$-dimensional spacetime (time-oriented, oriented Lorentzian manifold). We assume further that $M$ has a real analytic atlas with respect to which the metric $g$ is real analytic. The purpose of the metric $g$ is two-fold. It provides an analytic volume form which is needed to identify functions with distributions. Secondly, it
provides a causal structure in the form of a bundle of light cones. In principle it is not strictly necessary to derive these structures from a metric, but we assume so here for simplicity of the presentation.

We briefly explain the notations, assuming the signature convention is $(+,-,\ldots,-)$.
First, for a point $x \in M$ we denote by $V_x \subseteq T^*_x M$ the closed light-cone in cotangent space with respect to the metric $g$, i.e. the set of covectors $\xi \in T^*_x M$ with $g^{-1}(\xi,\xi) \geq 0$.
Then $V_x \setminus \{0\}$ is the disjoint union of the future/past light cone $V_x^\pm$. A covector $\xi \in T^*_x M$ is called causal if it is in $V_x \setminus \{0\}$. A causal covector is called future directed if it is contained in $V_x^+$.
We write $V^\pm$ for the corresponding bundles, i.e. $V^\pm = \sqcup_x V_x^\pm$. Therefore, $(x,\xi) \in V^+$ will mean that $\xi$ is a future directed causal covector.
Recall that $T^*(M \times \ldots \times M)$ is canonically isomorphic to $T^*M^m$ and we will write typical elements as $(x_1,\xi_1,\ldots,x_m,\xi_m)$.

\subsection{Quantum field theory}

A quantum field theory on $(M,g)$ will be defined as an operator-valued distribution $\Phi$, which we will call the field.
To be more precise, let $\calH$ be a Hilbert space with a dense set $D$. Then $\Phi$ is a map
$\Phi: C^\infty_0(M) \to \mathrm{End}(D)$ such that $f \mapsto \Phi(f) v$ is continuous for every $v \in D$. The operator $\Phi(f)$ can be unbounded. It is referred to as the smeared out field. One requires that $\Phi(f)$ is symmetric on $D$ if $f$ is real-valued. We will assume for convenience and without loss of generality that $D$ is complete with respect to the locally convex topology induced by the family of semi-norms
$$
 p_{f_1,\ldots,f_m}(\phi) = \| \Phi(f_1) \cdots  \Phi(f_m) \phi \|,
$$
where $m \in \mathbb{N}_0$ and $(f_1,\ldots,f_m)$ is an arbitrary $m$-tupel of compactly supported smooth functions.
We refer to this locally convex topology as the graph topology, as it is generated by the graph norms of all the elements of the algebra. Since the adjoints are densely defined the operators $\Phi(f_1) \cdots  \Phi(f_m)$ are closable. Therefore, one can always pass to the completion of the domain, which is then still contained in $\mathcal{H}$. 

Given a spacetime region $\calO \subseteq M$ one can form the $*$-algebra $\calA(\calO)$ generated by the elements $\Phi(f), f \in C^\infty_0(\calO)$.
In case $\Phi(f)$ is essentially self-adjoint for any real-valued $f \in C^\infty_0(\calO)$ one can then consider the weak-$*$-closure of the set of all bounded operators generated by the spectral projections of $\Phi(f)$. By von-Neumann's bi-commutant theorem this algebra can be characterized as
$$
 \calR(\calO) = \{ A \in \mathcal{L}(\calH) \mid \forall  f \in C^\infty_0(\calO), A \textrm{ commutes with }\Phi(f) \}'.
$$
Here $\mathcal{R}' = \{B \in \calL(\calH) \mid \forall A \in \mathcal{R}, A B = B A\}$ as usual is the commutant of $\mathcal{R}$.
Recall that a bounded operator $A$ commutes with a self-adjoint operator $T$ if and only if
$A T = T A$ as an inequality of unbounded operators with equality of domains. In particular,
$A$ leaves the domain of $T$ invariant.
If $T$ is essentially self-adjoint on a dense set $D \subseteq \calH$ then an operator $A \in \calL(\calH)$ commutes with $T$ therefore if and only if for all $v,w \in D$ we have $\langle T w, A v \rangle = \langle w,A T v \rangle$.

For a general symmetric unbounded operator $T$ defined on a dense set $D$ we turn this into a definition and say a bounded operator $A \in \calL(\calH)$ commutes (weakly) with $T$ if for all $v,w \in D$ we have $\langle w,  A\, T v \rangle = \langle T  w,  A\, v \rangle$. The set of operators commuting with $T$ is then a set that is invariant under the map $*$. It is called the weak-commutant of $T$. It is easy to see that if $A$ commutes with $T$ then it also commutes with its closure. It is therefore sufficient to check commutation on a subset of the domain that is dense in the graph norm, so that the closure of the operator on this subset coincides with the operator. 
It is not sufficient to check this on a dense subset of the domain. As an example consider the Laplace operator $\Delta$ on the real line and the Laplace operator $\Delta_D$ on the real line with Dirichlet boundary conditions at the point $x=0$. If we take $T = \Delta$ and $A=(-\Delta_D+1)^{-1}$, then $T$ and $A$ do not commute. However we have the relation
$\langle w,  A\, T v \rangle = \langle T  w,  A\, v \rangle$ for all $v,w$ in the dense set of smooth compactly supported functions that vanish to infinite order at the point $0$. The reason is here that  $\Delta$ and  $\Delta_D$ restrict to the same operator on this space of functions, but the self-adjoint extensions are completely different. This shows that one has to consider domain issues carefully when using this definition and associated conclusions.
For the functional analytic details we refer to  \cite{MR283580} as well as \cite{MR1199168} for a discussion in the context of quantum field theory.
 
The {\sl von-Neumann algebra (weakly) associated with the algebra} generated by
$\Phi(f), f \in C^\infty_0(\calO)$ is then defined by
$$
 \mathcal{R}(\calO) = \{ A \in \mathcal{L}(\calH) \mid \forall  (f \in C^\infty_0(\calO), v,w \in D), \langle w,  A\, \Phi(f) v \rangle = \langle \Phi(\overline{f})  w,  A\, v\rangle \}'.
$$

In this way every quantum field defines a net of von Neumann algebras, i.e. a von Neumann algebra  $\mathcal{R}(\calO)$ associated to every spacetime region $\calO$. Due to the nature of the weak commutant there are further technical conditions that ensure that this is again a local net, i.e. that it satisfies Einstein causality and the algebras of causally separated regions commute. We will however not discuss this here any further but refer to \cite{MR847127} for a detailed discussion of this in the context of Wightman fields on Minkowski space.

We now discuss a mild assumption about the quantum field. 

\subsection{Physical conditions on states} \label{microlocal}

Whereas on a curved spacetime there is no meaningful notion of momentum and energy and hence no preferred vacuum state, the notion of energy-momentum should still exist in an asymptotic sense as a scaling limit. The notion of uniform microsupport for test functions is extremely well suited to capture this. On physical grounds one expects from a reasonably passive state $\Omega$ to not allow for non-physical energy transfer. Assume that $(f_h)$ localizes in phase space to a point $(x,\xi)$ as $t \searrow 0$. For the moment we also allow families $(\phi_h)$ of states in $D$ describing an asymptotic physical situation.
We think of $\phi_h$ as a physical configuration that may become singular as $h \searrow 0$.
In QFT we then expect the following behaviour depending on where in $T^*M$ the point $(x,\xi)$ is localized.
If $(x,\xi) \in V^-$ we expect the state $\Phi(f_h) \Omega$ to have added energy-momentum $-h^{-1} \xi$ near $x$ in an asymptotic sense.
If $(x,\xi) \in V^+$ we expect the operator $\Phi(f_h)$ to erase the energy-momentum $h^{-1}\xi$ near $x$ from the state $\phi_h$.
A state $\Omega \in D$ not asymptotically carrying energy should not allow for asymptotic energy extraction. Hence,
$\Phi(f_h) \Omega$ should be exponentially small if $(x,\xi) \notin V^-$.

The mathematically precise statement depends on the space of test functions employed and we would like to consider two versions.

\subsection{Fields defined on compactly supported smooth test functions}

Recall that   $C^\infty_{0,h,O}(M)$ is the space of families
of test functions that are uniformly exponentially small away from any neighborhood of the zero section in $T^*M$.
Given $(q_{1,h},\ldots,q_{n,h}) \in C^\infty_{0,h,O}(M)$  we expect intuitvely that the state 
$\Phi(q_{1,h})\cdots \Phi(q_{n,h}) \Omega$ contains no asymptotically extractable energy. 
We therefore should have that
$$
 \Phi(f_h)\Phi(q_{1,h})\cdots \Phi(q_{n,h}) \Omega
$$
is exponentially small if $f_h$ microlocalizes uniformly at a point  $(x,\xi) \notin V^-$.
This motivates the following definition.
\begin{definition}
 A vector $\Omega \in D$ is called analytic if the following holds. If
 $f_h \in C^\infty_{h}(M)$ is microlocally uniformly supported in a compact set $K \subset T^*M$ with $K \cap V^- = \emptyset$ then, 
 for all families  $(q_{1,h}),\ldots,(q_{n,h}) \in C^\infty_{0,h,O}(M)$, we have the bound
 $$
  \| \Phi(f_h) \Phi(q_{1,h})\cdots \Phi(q_{n,h}) \Omega \| \leq  C e^{-\delta h^{-1}}
 $$
 for some $C>0, \delta>0$.\\
 The subspace of analytic vectors will be denoted by $D_a \subset D$.
 \end{definition}

Using Prop. \ref{secondchar} the above condition can be completely paraphrased in terms of analytic wavefront sets.
 
 \begin{proposition} \label{altcharanwf}
 A vector $\Omega \in D$ is analytic if and only if for all $n \in \mathbb{N}$ the Hilbert space valued distribution $\Phi(\cdot) \cdots \Phi(\cdot)  \Omega$ on $M^n$ defined by
 $$
  f_1 \otimes \ldots \otimes f_n \mapsto \Phi(f_1) \cdots \Phi(f_n)  \Omega
 $$
 has its analytic wavefront set contained in the set of non-zero covectors $(x_1,\xi_1,\ldots x_n,\xi_n)$ satisfying for all $1 \leq k \leq n$ that
 $$
 (\textrm{if } \xi_j=0 \textrm{ for all } j>k ) \textrm{ then } \xi_k \in V^+,   
 $$
 In other words the first non-zero covector from the right must be future directed and causal.
  \end{proposition}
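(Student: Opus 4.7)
The plan is to use Proposition \ref{secondchar}, which translates microsupport (and hence analytic wavefront set) conditions into exponential-smallness bounds for pairings with microlocally structured test families, together with the behaviour of microsupport under tensor products. The key microlocal observation used throughout is that for $f_h$ uniformly microsupported in a compact $K \subset T^*M$ and families $q_{j,h} \in C^\infty_{0,h,O}(M)$, the tensor product $F_h = f_h \otimes q_{1,h} \otimes \cdots \otimes q_{n,h}$ is uniformly microsupported in the set of points of $T^*M^{n+1}$ whose projection to position $1$ lies in $K$ and whose projections to positions $2,\ldots,n+1$ lie in the zero section of $T^*M$.

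For the forward direction, assuming the wavefront set condition on $u_{n+1}$, take $K$ with $K \cap V^- = \emptyset$ and $q_{j,h} \in C^\infty_{0,h,O}(M)$. By the observation above, at every point of $\muS(F_h)$ off the zero section the rightmost nonzero covector lies in $K$, hence not in $V^-$. Under the covector sign reversal built into Proposition \ref{secondchar}, the hypothesis translates into the statement that $-\muS(u_{n+1})$ off the zero section has its rightmost nonzero covector in $V^-$, hence is disjoint from $\muS(F_h)$ (since $K$ and $V^-$ are disjoint). Proposition \ref{secondchar} then gives $\|u_{n+1}(F_h)\| = \|\Phi(f_h)\Phi(q_{1,h})\cdots\Phi(q_{n,h})\Omega\| = O(e^{-\delta h^{-1}})$, which is the analyticity bound.

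For the backward direction, assume $\Omega$ is analytic and fix $(x_1,\xi_1,\ldots,x_n,\xi_n)$ with rightmost nonzero index $k$ and $\xi_k \notin V^+$. By the coherent-state argument in the proof of Proposition \ref{secondchar}, showing this point is not in $\WFa(u_n)$ reduces to bounding $\|u_n(\Psi^h)\|$ uniformly for $\Psi^h = \prod_{j=1}^n \psi_{(y_j,\eta_j),h}$, with $(y_j,\eta_j)$ ranging over a small ball about $(x_j,-\xi_j)$; here $\eta_j$ is close to $0$ for $j>k$, $\eta_k$ is close to $-\xi_k \notin V^-$, and $\eta_j$ is arbitrary for $j<k$. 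Writing $\|u_n(\Psi^h)\| = \|\Phi(\psi_1)\cdots\Phi(\psi_n)\Omega\|$, I replace the coherent states in positions $j>k$ by cut-off Gaussian representatives $q_{j,h} \in C^\infty_{0,h,O}(M)$ centered at $(y_j,0)$; the error is microlocally controlled since the two families differ by a phase factor $e^{(\rmi/h)(z-y_j)\cdot \eta_j}$ close to $1$ on the essential support $|z-y_j| = O(\sqrt h)$ of the Gaussian whenever $\eta_j$ is small in the $h$-scaled sense. Applying the analyticity hypothesis to the inner factor $\Phi(\psi_k)\Phi(q_{k+1})\cdots\Phi(q_n)\Omega$ --- leftmost factor microsupported away from $V^-$, remaining factors in $C^\infty_{0,h,O}(M)$ --- produces the bound $\leq C e^{-\delta h^{-1}}$. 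Prepending $\Phi(\psi_1)\cdots\Phi(\psi_{k-1})$ is handled by the iterated Cauchy--Schwarz identity $\|\Phi(\psi_j)v\|^2 = \langle v,\Phi(\overline{\psi_j})\Phi(\psi_j)v\rangle$ together with the stability of analyticity under adjoining $C^\infty_{0,h,O}(M)$ factors on the right, so that each prepending costs only a polynomial-in-$h^{-1}$ loss absorbed into the exponential decay.

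The main obstacle is in the last two steps of the backward direction: propagating exponential smallness from the slice $\eta_j=0$ (for $j>k$), where analyticity applies cleanly, to a full neighborhood of the dual point, and controlling the prepended unbounded operators $\Phi(\psi_j)$ for $j<k$ whose microsupports are unconstrained. Both difficulties are resolved by repeatedly invoking the analyticity hypothesis --- which is stable under adjoining further $C^\infty_{0,h,O}(M)$ factors on the right --- together with the Cauchy--Schwarz reduction that trades each unbounded operator norm for the norm of a longer field product applied to $\Omega$, the result being again controlled by analyticity up to polynomial loss.
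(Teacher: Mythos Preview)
Your overall strategy --- Prop.~\ref{secondchar} to translate the wavefront condition into exponential decay of field products, then a Cauchy--Schwarz reduction to absorb the unconstrained left factors --- is exactly the paper's. However, there is a genuine gap in your backward direction, namely the ``replacement'' step in positions $j>k$. You argue that the coherent state $\psi_{(y_j,\eta_j),h}$ with small $\eta_j$ can be traded for a zero-energy function because the phase $e^{(\rmi/h)(z-y_j)\cdot\eta_j}$ is close to $1$ on the Gaussian's essential support $|z-y_j|=O(\sqrt h)$. But $\eta_j$ is a \emph{fixed} nonzero number, independent of $h$; on that support the exponent is of size $\eta_j\, h^{-1/2}$, which diverges as $h\searrow 0$. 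The phase is therefore not close to $1$, and the replacement does not produce a family in $C^\infty_{0,h,O}(M)$ up to an exponentially small error.

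The paper sidesteps this detour entirely. It does not pass to explicit coherent states and then try to repair them; instead it asserts directly (via Prop.~\ref{secondchar}) that in order to verify the wavefront condition at the point in question it suffices to test with tensor products where the $j>k$ factors are taken in $C^\infty_{0,h,O}(M)$ from the outset. The Cauchy--Schwarz step is then a single application: rewrite
\[
\|\Phi(f_{1,h})\cdots\Phi(f_{n,h})\Omega\|^2
=\langle\,\Phi(\overline{f_{k-1,h}})\cdots\Phi(\overline{f_{1,h}})\Phi(f_{1,h})\cdots\Phi(f_{n,h})\Omega,\ \Phi(f_{k,h})\cdots\Phi(f_{n,h})\Omega\,\rangle,
\]
bound the first entry polynomially (all test families involved are polynomially bounded, and $\Phi(\cdot)\cdots\Phi(\cdot)\Omega$ is a distribution), and observe that the second entry is exponentially small \emph{directly from the definition of analyticity}, since $f_{k,h}$ is microsupported away from $V^-$ and $f_{j,h}\in C^\infty_{0,h,O}(M)$ for $j>k$. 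Your iterated Cauchy--Schwarz and the appeal to ``stability of analyticity under adjoining $C^\infty_{0,h,O}(M)$ factors on the right'' are unnecessary: the long vector after Cauchy--Schwarz needs only polynomial boundedness, not any analyticity property, and one application suffices.
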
 
\begin{proof}
 That the wavefront set condition for $k=1$ implies analyticity is a direct consequence of Prop. \ref{secondchar}. 
  We therefore only need to show that analyticity implies the analytic wavefront set condition for any $1 < k \leq n$.
   To show that the wavefront set condition is satisfied it is sufficient, again by Prop. \ref{secondchar}, to show that
  for families $f_{1,h}, \ldots f_{n,h} \in C^\infty_{0,h}(M)$ with the properties
  \begin{itemize}
   \item $(f_{k,h})$ is uniformly microsupported in a compact set of positive distance to $V^-$,
   \item $(f_{j,k}) \in C^\infty_{0,h,O}(M)$ for all $j >k$, 
  \end{itemize}
  we have that the family of vectors $ \Phi(f_{1,h})\cdots \Phi(f_{k-1,h}) \Phi(f_{k,h}) \Phi(f_{k+1,h}) \cdots \Phi(f_{n,h}) \Omega$ is exponentially small in the norm as $h \searrow 0$.
 This is equivalent to 
 \begin{align*}
    &\langle \Phi(f_{1,h})\cdots \Phi(f_{k-1,h}) \Phi(f_{k,h})\cdots \Phi(f_{n,h}) \Omega, \Phi(f_{1,h})\cdots \Phi(f_{k-1,h}) \Phi(f_{k,h}) \cdots \Phi(f_{n,h}) \Omega \rangle \\&=
     \langle  \Phi(\overline{f_{k-1,h}})\cdots \Phi(\overline{f_{1,h}}) \Phi(f_{1,h})\cdots \Phi(f_{k-1,h}) \Phi(f_{k,h}) \cdots \Phi(f_{n,h}) \Omega,  \Phi(f_{k,h})  \cdots \Phi(f_{n,h}) \Omega \rangle
 \end{align*}
 being exponentially small. Since the family of test functions 
 $$
  \overline{f_{k-1,h}} \otimes \ldots \otimes \overline{f_{1,h}} \otimes f_{1,h} \otimes \ldots \otimes f_{k-1,h} \otimes f_{k,h}  \otimes \ldots \otimes f_{n,h}
 $$
 is polynomially bounded the Cauchy-Schwarz inequality shows that exponential smallness is implied by exponential smallness of the family
 $$
  \Phi(f_{k,h}) \cdots \Phi(f_{n,h}) \Omega.
 $$
 By Prop. \ref{secondchar} exponential smallness of this vector as $h \searrow 0$ is a consequence of analyticity.
\end{proof}  
   
For a realistic quantum field we expect the set $D_a$ to be dense in the Hilbert space and in the domain of the field in the following sense. For every $v \in D$ there exists a sequence $v_n \in D_a$ such that for all $ f_1,\ldots,f_m \in C^\infty_0(M)$ we have in the Hilbert space norm
 \begin{align*}
  v_n \to v, \textrm{ and } \Phi(f_1)\cdots \Phi(f_m)  v_n \to \Phi(f_1)\cdots \Phi(f_m) v.
 \end{align*}
  This means that $D_a$ is a dense subset in each of the domains of  $\Phi(f_1)\cdots \Phi(f_m)$ with respect to the graph norm.

\begin{rem}
We note that whereas we assume that $D$ is invariant under $\phi(f)$ this cannot be assumed for $D_a$, as this is not compatible with Einstein causality. To illustrate this we restrict this discussion to bosonic fields, for which the fields commute at spacelike separation.
Assume $\phi$ was a vector in $D_a$ such that $\Phi(f) \phi \in D_a$ for all $f \in C^\infty_0(M)$. Then the distribution $[\Phi(\cdot), \Phi(f)] \phi$ vanishes in the causal complement of the support of $f$ and has its analytic wavefront set in $V^+$. Since $V^+$ is one-sided this distribution has the unique continuation property and must therefore vanish (c.f. Prop. \ref{ucp}). It follows that $[\Phi(\cdot), \Phi(f)] \phi$ vanishes.  Since this is true for all $f$ with sufficiently small support it follows for all compactly supported $f$. Hence, $[\Phi(f_1), \Phi(f_2)] \phi =0$ for all test functions $f_1, f_2 \in C^\infty_0(M)$. The existence of the dense and invariant set of analytic vectors therefore implies that the field algebra is commutative.
\end{rem}

To keep the notations short we write $\calA$ for the algebra generated by $\Phi(f), f \in C^\infty_0(M)$. Given an open set $\calO \subseteq M$ we write $\calA(\calO)$ for the algebra generated by $\Phi(f), f \in C^\infty_0(\calO)$. 
Recall that a vector $\phi \in D$ is called {\sl cyclic} for an algebra of operators $\mathcal{B}$ on $D$  if the set $\mathcal{B} \phi$ is dense in $\calH$. 
In Minkowski theories, one usually assumes that the vacuum $\Omega$ is a cyclic vector for the field algebra, and in fact that $\calA \Omega$ is dense in $D$ with respect to the graph topology.
It is also natural to assume that there are many vacuum-like states $\Omega \in D_a$ in the sense that there is a dense set of cyclic analytic vectors.

The existence of an analytic vector $\Omega \in D_a$ with $\calA \Omega$ being dense in $D$ with respect to the graph topology is one of the weaker conditions one can make, and it readily implies two important properties of the quantum field: the Reeh-Schlieder property and the timelike tube property. The Reeh-Schlieder property means that the vector $\Omega$ is a cyclic vector for the local algebra $\calA(\calO)$ for any non-empty open $\calO \subset M$. The timelike tube property is that the local von Neumann algebra $\calR(\calO)$ of region $\calO$ coincides with the local von-Neumann algebra of a potentially much bigger region, $\calE_T(\calO)$, the timelike hull of $\calO$.
We will state the precise theorems in Section \ref{tlt}.

We will see below that in case the theory satisfies a certain temperedness assumption the existence of a cyclic tempered
analytic vector implies that there is a dense set of tempered analytic vectors.

\subsection{Tempered fields}
In Minkowski space a good choice of test function space is the space of Schwartz functions $\mathcal{S}(\R^d)$.
This space is particularly suited for spectral considerations as it contains the space $\mathcal{F}(C^\infty_0(\R^d))$ of functions that are localized in momentum space. Moreover, this space treats configuration and momentum space on an equal footing.

To define an analogue of Schwartz space on a general analytic manifold $(M,g)$ one needs to specify some extra structure such as a special coordinate chart near infinity.
We will choose here a more flexible path, by embedding the spacetime analytically into $\R^{d'}$.

%To illustrate this let $\chi \in C^\infty_0(\R^d)$ have support in the larger compact set $K$ with $\supp{\chi} \subset \mathrm{int}(K)$. Then a function or distribution $g$ is in $\mathcal{S}(\R^d)$ if and only if $\chi g \in C^\infty(K)$ and $(1-\chi) g \in \mathcal{S}(\R^d)$. This decomposition respects both the topology on $\mathcal{S}(\R^d)$ as well as microlocalisation properties. In other words a sequence $g_n$ is Cauchy in $\mathcal{S}(\R^d)$ if and only if $\chi g$ is Cauchy in $C^\infty_0(K)$. Similarly, $f_h \in \mathcal{S}_h(\R^d)$ is microlocally uniformly 
%supported in a set $L \subset \R^d \times \R^d$ if and only if $\chi f_h \in C^\infty_{0,h}(\R^d)$ and  $(1-\chi) f_h \in \mathcal{S}_h(\R^d)$ are both microlocally uniformly supported in $L$. 
%This gluing property can be used to define Schwartz functions spaces on more general manifolds if charts near infinity are specified.
%In fact one may require the decay properties of Schwartz functions with respect to an overdetermined system of parameters.
%We therefore assume more generally that the manifold $M$ is equipped with the following structure.
%Namely $(M,g)$ admits an open cover of the form
%$M =M_c \cup M_1 \cup \ldots \cup M_r$ such that
%\begin{itemize}
% \item $\overline{M_c}$ is compact in $M$,
% \item for each $k$ the $\overline{M_k}$ is a manifold with compact boundary,
% \item the sets $M_j, k=1,\ldots,r$ are pairwise disjoint,
% \item for each $k$ we fix a proper analytic embedding $\iota_k: M_k \to \R^{d'}$.
%\end{itemize}

Any real analytic manifold $(M,g)$ can be analytically embedded into $\R^{d'}$ in such a way that the embedding is proper (see \cite{MR98847}). Given such an embedding, and restricting the space of functions to $M \subset \R^{d'}$, we can define Schwartz spaces and spaces of analytic functions on a general real analytic manifold. 
In the following we fix a proper analytic embedding $\iota: M \to \R^{d'}$. We denote by $\mathcal{S}(M)$
the space $\iota^*(\mathcal{S}(\R^{d'}))$ of the restrictions of Schwartz functions. We equip $\mathcal{S}(M)$ with the natural quotient topology. The resulting topology is stronger than the $C^\infty(M)$-topology of uniform convergence of all derivatives on compact subsets. We have now constructed nuclear Frech\'et space $\mathcal{S}(M)$ in which $C^\infty_0(M)$ is dense.
Of course, also $\mathcal{S}_a(M) = \iota^*(\mathcal{F}(C^\infty_0(\R^{d'}))$ is dense in $\mathcal{S}(M)$. This space is subspace of the space of real analytic Schwartz functions that arise from restrictions of entire functions to $M$.

It is worth noting that the map $\tilde \iota: \R^d \to \R^d$ defined as $$\tilde \iota(x_1,\ldots,x_{d}) = (x_1 \exp(x_1^2),\ldots, x_{d} \exp(x_{d'}^2))$$ is also a proper analytic embedding and the pull back $\iota^*\mathcal{S}(\R^d)$ then consists of exponentially decaying real analytic functions.
This means the embedding can always be modified so that $\mathcal{S}(M)$ consists of extremely fast decaying functions. 

For Schwartz functions a uniform notion of localisation at the zero section can be defined as follows.
We say $f_h \in \mathcal{S}_{h,O}(\R^d)$ if $f_h$ microlocalizes uniformly away from any tubular neighborhood of the form $\{(x,\xi) \in \R^{d} \times \R^d \mid \|\xi \| < \epsilon\}, \epsilon>0$.
Now define $\mathcal{S}_{h,O}(M) = \iota^* \mathcal{S}_{h,O}(\R^{d'})$ and $\mathcal{S}_a(M) = \iota^*\mathcal{S}_{a}(\R^{d'})$ by restriction. This definition is well behaved under various canonical constructions.
\begin{itemize}
\item In case $d'>d$ we have for the standard embedding $\iota: \R^d \to \R^{d'}, (x_1,\ldots,x_d) \mapsto (x_1,\ldots,x_d,0,\ldots,0)$ that $\iota^* \mathcal{S}_{h,O}(\R^{d'}) = \mathcal{S}_{h,O}(\R^{d})$. 
\item If $N$ is closed (compact without boundary) then $\mathcal{S}_{h,O}(N)$ is independent of the embedding and equal to 
$C^\infty_{0,h,O}(N)$.
%\item  If $N$ is a closed analytic manifold embedded in $\R^{d'}$ and $\iota: M = \R^k \times N \to \R^{k+d'}$ is the standard embedding, then $\mathcal{S}_{h,O}(M) = \mathcal{S}_{h,O}(\R^k) \otimes C^\infty_{0,h,O}(N)$, where the tensor product is the projective tensor product taken point-wise in $h$. {Still needs to be checked}
\end{itemize}

The notion of $(f_h) \in C^\infty_{0,h,O}(M)$ is defined with respect to local analytic coordinates, whereas the notion of $(f_h) \in \mathcal{S}_{h,O}(M)$ is defined relative to an analytic embedding. However, for families
$(f_h) \in C^\infty_{0,h}(M)$ we have  $(f_h) \in \mathcal{S}_{h,O}(M)$ if and only if
$(f_h) \in C^\infty_{0,h,O}(M)$. Hence, the notions by analytic coordinates and by embeddings coincide. In particular this also implies that the dependence of the space $\mathcal{S}_{h,O}(M)$ on the embedding disappears upon restriction to $C^\infty_{0,h,O}(M)$. The main purpose of the embedding is thus to control the microlocal properties of the functions near infinity.

\begin{example}
 The Schwarzschild-Kruskal spacetime is a four dimensional analytic spacetime which is analytic-diffeomorphic to
 $\calO \times \mathbb{S}^2$, where  $\calO$ is the region $\{(T,X) \in \R^2 \mid T^2 - X^2<1\}$ in $\R^2$.
 The equation
 $$
  T^2 - X^2 = (1-\frac{r}{2 M})e^{\frac{r}{2 M}}
 $$
 implicitly defines a function $r(T,X)$. Then the metric is given by
 $$
  g = \frac{32 M^3}{r} e^{-\frac{r}{2 M}}(\der T^2 - \der X^2) - r^2 g_{\mathbb{S}^2}.
 $$
 The above description is called the Kruskal-Szekerez coordinate system.
 We can embed this spacetime analytically in $\R^6$ as follows. We choose the canonical embedding $\tilde \rho: \mathbb{S}^2 \to \R^3$. We embed $\calO$ into $\R^3$ by the map
 $$
  \rho: \calO \to \R^3, \quad (T,X) \mapsto (\frac{T}{r},T,X).
 $$
 Then $\iota=\rho \oplus \tilde \rho$ embeds the entire spacetime analytically into $\R^6$.
 A function in $\mathcal{S}(M)$ with respect to this embedding is a function $f: M \to \C$ that can be written in the form
 $$
  f(T,X,y) = g(\frac{T}{r},T, X,y)
 $$
 where $g \in \mathcal{S}(\R^6)$. 
 One can check that the function $\iota_1 = \frac{T}{r}$ is a global time function whose level surfaces are spacelike Cauchy hypersurfaces.
 \end{example}

%More generally, given a decomposition $M =M_c \cup M_1 \cup \ldots \cup M_k$ with real analytic proper embeddings $\iota_k: M_k \to \R^{d_k}$ we can define the Schwartz space $\mathcal{S}(M)$ as follows.
%Choose a partition of unit $1=\chi_c + \chi_1 + \ldots \chi_k$ with $\supp(\chi_c)\subset M_c$ and $\supp(\chi_k)\subset M_k$.
%Then a function $u : M \to \C$ is in $\mathcal{S}(M)$ iff $\chi_c u \in C^\infty(M_c)$ and $\chi_k u \in \mathcal{S}(M_k)$. 
%As above this space is equipped with the natural topology associated with this decomposition, and it is easy to see that this definition and the topology do not depend on the choice of cut-off function.
%One then also defines $\mathcal{S}_{h}(M)$ and a similar construction allows a consistent definition of $\mathcal{S}_{h,O}(M)$.

\begin{definition}
A vector $\Omega \in D$ is called {\sl tempered analytic} if the following holds. If
 $f_h \in \mathcal{S}_{h}(M)$ is microlocally uniformly supported in a compact set $K \subset T^*M$ with $K \cap V^- = \emptyset$ then, 
 for all families  $(q_{1,h}),\ldots,(q_{n,h}) \in  \mathcal{S}_{h,O}(M)$, we have the bound
 $$
  \| \Phi(f_h) \Phi(q_{1,h})\cdots \Phi(q_{n,h}) \Omega \| \leq  C e^{-\delta h^{-1}}
 $$
 for some $C>0, \delta>0$.
 The subspace of analytic vectors will be denoted by $D_{ta} \subset D$.
\end{definition}

It is clear that $D_{ta} \subseteq D_a \subseteq D$. The condition of being tempered analytic seems to be a stronger conditition than that of being analytic. In particular, the existence of a tempered analytic vector readily implies that there are many other tempered analytic vectors. 
The following theorem should be compared with Prop. \ref{invharm}.

\begin{theorem}
 Suppose that there is a proper embedding $\iota: M \to \R^{d'}$ such that the quantum field 
$\Phi(\cdot)$ extends as an operator-valued distribution to the test function space $\mathcal{S}(M)$. Assume that 
$\Omega \subset D_{ta}$ is tempered analytic. Then for any collection of test functions $g_1,\ldots,g_m \in \mathcal{S}_a(M)$ the vector $\Phi(g_1)\cdots \Phi(g_m) \Omega$
 is also tempered analytic.
\end{theorem}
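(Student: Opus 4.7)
The strategy is to recognize that any $g \in \mathcal{S}_a(M)$, when viewed as a constant family in $h$, already belongs to $\mathcal{S}_{h,O}(M)$. Granting this, the desired conclusion follows by folding the constant family $g_1, \ldots, g_m$ into the list of zero-energy test functions $(q_{j,h})$ appearing in the definition of tempered analyticity of $\Omega$.

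More precisely, I would proceed as follows. Fix a family $(f_h) \in \mathcal{S}_h(M)$ uniformly microsupported in a compact set $K \subset T^*M$ with $K \cap V^- = \emptyset$, and arbitrary families $(q_{1,h}), \ldots, (q_{n,h}) \in \mathcal{S}_{h,O}(M)$. I want to estimate
\begin{equation*}
 \|\Phi(f_h)\Phi(q_{1,h})\cdots\Phi(q_{n,h})\Phi(g_1)\cdots\Phi(g_m)\Omega\|.
\end{equation*}
The key step is to set $q_{n+j,h} := g_j$ for $j = 1, \ldots, m$, viewed as $h$-independent families. Since each $g_j \in \mathcal{S}_a(M) = \iota^*\mathcal{F}(C^\infty_0(\R^{d'}))$ is the restriction of the Fourier transform of a compactly supported smooth function on $\R^{d'}$, Proposition \ref{alemma} shows that this constant family is uniformly microsupported in the zero section $\R^{d'} \times \{0\}$, i.e.\ lies in $\mathcal{S}_{h,O}(\R^{d'})$. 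Pulling back under the proper embedding $\iota$, by the very definition $\mathcal{S}_{h,O}(M) = \iota^*\mathcal{S}_{h,O}(\R^{d'})$, we conclude that the constant family $g_j$ lies in $\mathcal{S}_{h,O}(M)$. Polynomial boundedness is automatic for constant families.

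Having placed all of $q_{1,h}, \ldots, q_{n,h}, g_1, \ldots, g_m$ in $\mathcal{S}_{h,O}(M)$, the tempered analyticity of $\Omega$ applied to the enlarged list of $n+m$ factors yields directly
\begin{equation*}
 \|\Phi(f_h)\Phi(q_{1,h})\cdots\Phi(q_{n,h})\Phi(g_1)\cdots\Phi(g_m)\Omega\| \le Ce^{-\delta h^{-1}}
\end{equation*}
for some $C, \delta > 0$. As $(f_h)$ and $(q_{j,h})$ were arbitrary satisfying the microlocal hypotheses, this is exactly the statement that $\Phi(g_1) \cdots \Phi(g_m)\Omega \in D_{ta}$.

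There is no real obstacle here beyond verifying that the definitions line up; the content of the theorem is essentially the observation from Proposition \ref{alemma} that functions whose Fourier transform has compact support carry no asymptotic energy, so the algebra $\mathcal{S}_a(M)$ acts on $D_{ta}$. The only mild point to be careful about is that the definition of $\mathcal{S}_{h,O}(M)$ is formulated via the fixed proper embedding $\iota: M \to \R^{d'}$, but this is precisely the embedding used to define $\mathcal{S}_a(M) = \iota^*\mathcal{F}(C^\infty_0(\R^{d'}))$, so the two notions are compatible and no further verification involving coordinate changes or cutoffs is required.
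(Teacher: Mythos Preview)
Your proof is correct and follows exactly the same approach as the paper: the paper's proof is the single line ``This follows immediately from the inclusion $\mathcal{S}_{a}(M) \subseteq \mathcal{S}_{h,O}(M)$ and Prop.~\ref{alemma},'' and you have simply unpacked this in detail. Your observation that the constant families $g_j$ can be appended to the list of zero-energy test functions in the definition of tempered analyticity is precisely the content of that inclusion.
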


\begin{proof}
 This follows immediately from the inclusion $\mathcal{S}_{a}(M) \subseteq \mathcal{S}_{h,O}(M)$ and Prop. \ref{alemma}.
\end{proof}

This means the set of analytic vectors is invariant under the action of fields smeared out with certain real analytic test functions. 
Since $\mathcal{S}_{a}(M)$ is dense in $\mathcal{S}(M)$ the continuity assumption implies that the sets
\begin{align*}
 \{\Phi(g_1)\cdots \Phi(g_m) \Omega \mid g_1,\ldots,g_m \in \mathcal{S}_a(M) \},
  \{\Phi(g_1)\cdots \Phi(g_m) \Omega \mid g_1,\ldots,g_m \in \mathcal{S}(M) \}
\end{align*}
have the same closure.
In particular, if $\Omega$ is cyclic, the existence of a single tempered analytic vector implies that there is a dense set of tempered analytic vectors. 
The counterexample at the end of Appendix  \ref{comprules} shows the difficulty of proving such a statement based on a restriction on the analytic wavefront set, as in Prop. \ref{altcharanwf}, without a hypothesis of temperedness.

We now discuss the relation to Minkowski theories in more detail.

\subsection{Wightman fields in Minkowski spacetime as an example}

It is instructive to understand the above assumption in the context of Wightman field theories on Minkowski space, where it is automatically satisfied. Indeed, let $(\Phi,\calH,\Omega)$ be a Wightman quantum field theory on $d$-dimensional Minkowksi spacetime.
In this case the invariant domain $D$ would be cyclically generated from the vacuum vector $\Omega$, i.e. is the graph-closure of the span of the set of vectors of the form
$$
 \Phi(f_1)\ldots \Phi(f_n) \Omega, \quad f_1,\dots, f_n \in \mathcal{S}(\R^d).
$$
This domain is invariant by definition. It is known that the set $\tilde D_a$ defined as the span of
$$
 \Phi(f_1)\ldots \Phi( f_n) \Omega, \quad f_1,\dots, f_n \in \mathcal{S}_a(\R^d).
$$
is a dense set in the sense discussed before, and we have
$$
 \WFa(\Phi(\cdot) \phi) \subseteq V^+, \quad \textrm{ for all } \phi \in \tilde D_a.
$$
The set of vectors $\tilde D_a$ can be interpreted as the set of the states with finite spacetime momentum. Of course functions that are compactly supported in Fourier (momentum) space cannot be compactly supported in spacetime. It is therefore necessary to use Schwartz functions rather than compactly supported smooth functions as test functions. In fact, a stronger statement is true.

\begin{theorem}
 Let $(\Phi(\cdot),D \subset \mathcal{H},\Omega)$ be a (tempered) Wightman quantum field in $d$-dimensional Minkowksi spacetime satisfying the spectrum condition. Then the vector $\Omega$ is a tempered analytic vector.
\end{theorem}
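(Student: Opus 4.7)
The plan is to reduce the norm estimate to a Fourier-space analysis of the Wightman $(2n+2)$-point distribution, where the spectrum condition combines with the microlocal hypotheses on the test functions to force exponential decay in $h^{-1}$.

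First, I would expand
$$\|\Phi(f_h)\Phi(q_{1,h})\cdots \Phi(q_{n,h})\Omega\|^2 = W_{2n+2}(F_1\otimes\cdots\otimes F_{2n+2}),$$
where $F_1,\ldots,F_{2n+2}$ is the ordered list $\bar q_{n,h},\ldots,\bar q_{1,h}, \bar f_h, f_h, q_{1,h},\ldots, q_{n,h}$ of test functions, and $W_{2n+2}$ is the tempered $(2n+2)$-point Wightman distribution. By translation invariance $W_{2n+2}(x_1,\ldots,x_{2n+2}) = w(y_1,\ldots,y_{2n+1})$ with $y_k = x_k-x_{k+1}$, and the spectrum condition implies that the reduced Fourier transform $\tilde w$ is a tempered measure supported in $(V^-)^{2n+1}$ (using the Fourier sign convention of the paper, under which $\Phi(g)\Omega$ vanishes when $\supp\hat g\cap V^- = \emptyset$). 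Applying Parseval and introducing the momentum-flow variables $E_k = \eta_1+\cdots+\eta_k$, the inner product becomes
$$\int_{(V^-)^{2n+1}} \tilde w(E_1,\ldots,E_{2n+1})\,\hat F_1(-E_1)\Bigl(\prod_{k=2}^{2n+1}\hat F_k(E_{k-1}-E_k)\Bigr)\hat F_{2n+2}(E_{2n+1})\,dE.$$

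Next, I would translate the microlocal hypotheses into Fourier-support statements via the standard relationship between FBI-microsupport and semiclassical Fourier support (in the spirit of Proposition~\ref{alemma}). For $q_{k,h}\in\mathcal{S}_{h,O}(M)$, each $\hat q_{k,h}(\eta)$ is exponentially small with rate $\sim h^{-1}$ outside $\{|\eta|<\epsilon/h\}$ for every $\epsilon>0$. For $f_h$ microsupported in the compact set $K\subset T^*M$ with $K\cap V^-=\emptyset$, $\hat f_h(\eta)$ (respectively $\hat{\bar f_h}(\eta)$) is exponentially small outside an $\epsilon/h$-neighborhood of $K_\xi/h$ (respectively $-K_\xi/h$), where $K_\xi$ denotes the fibre projection of $K$.

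The decisive geometric step is that these effective supports are incompatible with the spectral cone $(V^-)^{2n+1}$ at scale $h^{-1}$. The $q$-factor Fourier constraints together with $E_0 = E_{2n+2}=0$ force $|E_k|\lesssim n\epsilon/h$ for all $k\leq n$ and $k\geq n+2$, while the $f_h$-factors then concentrate $E_{n+1}$ within distance $O((n+1)\epsilon/h)$ of $K_\xi/h$. Since $K_\xi$ lies at positive distance $d>0$ from $V^-$, choosing $\epsilon<d/(2(n+1))$ guarantees that the effective support of the integrand lies at distance $\geq d/(2h)$ from the integration region $V^-$, so that at least one factor $\hat F_k$ is exponentially small with rate $\gtrsim h^{-1}$ throughout the integration. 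Combined with the temperedness (polynomial bound) of $\tilde w$ and Schwartz seminorm bounds on the $\hat F_k$, this yields $\|\Phi(f_h)\Phi(q_{1,h})\cdots\Phi(q_{n,h})\Omega\|^2 \leq C e^{-\delta/h}$. The main technical obstacle is balancing the polynomial growth of $\tilde w$ against the exponential decay on an integration region that itself grows like $h^{-1}$; this is a standard Paley--Wiener-type estimate once the geometric incompatibility has been made quantitative.
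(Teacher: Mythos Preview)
Your strategy differs from the paper's in a genuine way, and it is worth flagging where the real work hides.

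The paper does \emph{not} pass to ordinary Fourier space. Instead it freezes the left half of the $2n$-point function, defines $u_h(\tilde f_1,\ldots,\tilde f_n)=w_{2n}(\overline{f_{n,h}},\ldots,\overline{f_{1,h}},\tilde f_1,\ldots,\tilde f_n)$, and then expands each $f_{k,h}$ via the coherent-state (FBI) inversion formula $f_{k,h}=\int T_h f_{k,h}(x,\xi)\,\psi_{x,\xi,h}\,dx\,d\xi$. This turns $c_h$ into an integral over $(x,\xi)\in\R^{2N}$ with weight $\prod_k (T_h f_{k,h})(x_k,\xi_k)$ against $v_h$ paired with explicit Gaussians. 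A single cutoff in $(x,\xi)$ splits the integral: one piece is small because the FBI transforms are uniformly exponentially small there (this is exactly the definition of microsupport), and the other is small because the spectrum condition forces the support of $v_h$ into a region where the Gaussians localise away. Only the crudest consequence of the spectrum condition---that the \emph{total} momentum $\xi_1+\cdots+\xi_n$ lies in $V^-$---is used.

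Your route through the full $(2n+2)$-point function and the difference variables $E_k$ can be made to work, but the step you describe as ``translate the microlocal hypotheses into Fourier-support statements via the standard relationship'' is not provided by Proposition~\ref{alemma}: that proposition runs the \emph{other} direction (compact Fourier support $\Rightarrow$ microsupport in the zero section). The reverse implication---that $q_h\in\mathcal{S}_{h,O}$ forces $\hat q_h(\eta)$, \emph{together with all its derivatives}, to be $O(e^{-\delta/h})$ for $|\eta|>\epsilon/h$---is exactly what one obtains by inserting the coherent-state decomposition and splitting on $|\xi|\lessgtr\epsilon$, i.e.\ by redoing the paper's computation in disguise. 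Without derivative control you cannot bound the pairing with the tempered distribution $\tilde w$; pointwise smallness of the integrand on $\supp\tilde w$ is not enough, since $\tilde w$ is generally not a measure. So the ``standard Paley--Wiener-type estimate'' you invoke at the end is precisely where the substance lives, and carrying it out brings you back to the FBI machinery the paper uses from the start.
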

\begin{proof}
 We assume that the family $(f_{k,h})$ is uniformly microsupported in the zero section $\R^d \times \{ 0\}$ if $k=2,\ldots,n$, and that $(f_{1,h})$ is a family  uniformly microsupported
in a subset $\calU$ that has positive distance from the backward light-cone $V^- =\{(x,\xi) \mid g(\xi,\xi)\leq 0, \xi_0 \leq 0\}$.
For brevity we write $N = n \cdot d$ and we introduce the following sets
\begin{align*}
  K = \{((x_1,\xi_1),\ldots,(x_n,\xi_n)) \in \R^{2N} \mid \xi_1 +\ldots+ \xi_n \in V^-\},\\
  K_\epsilon = \{(x,\xi) \in \R^{2N} \mid \mathrm{dist}((x,\xi),K) \leq \epsilon\},\\
  Q = \mathrm{pr}_2(K)=\{(\xi_1,\ldots,\xi_n) \in \R^{N}  \mid \xi_1 +\ldots+ \xi_n \in V^-\},\\
  Q_\epsilon = \{\xi \in \R^N \mid \mathrm{dist}((x,\xi),Q) \leq \epsilon\}.
\end{align*}
 
Hence, the family $(f_{1,h} \otimes \cdots \otimes f_{n,h})$ is uniformly microlocally exponentially small on
$K_\epsilon$ for some $\epsilon>0$.

We need to show that 
$$
 \| \Phi(f_{1,h})\Phi(f_{2,h}) \cdots \Phi(f_{n,h}) \Omega \|
$$ is exponentially small. This is of course equivalent to
$$
 w_{2n}(\overline{f_{n,h}}, \overline{f_{n-1,h}},\dots,\overline{f_{1,h}},f_{1,h}, f_{2,h},\dots,f_{n,h})
$$
being exponentially small. We
consider the following $h$-dependent family $(u_h) \in \mathcal{S}_h'(\R^N)$ defined by
$$
 u_h(\tilde f_1,\ldots,\tilde f_n) = w_{2n}(\overline{f_{n,h}}, \overline{f_{n-1,h}},\dots,\overline{f_{1,h}},\tilde f_1,\dots,\tilde f_{n}).
$$
We need to show that $c_h = u_h(f_{1,h}, f_{2,h},\ldots, f_{n,h})$ is exponentially small.
Now consider the inverse semi-classical Fourier transform $v_h=\mathcal{F}^{-1}_h(u_h)$. 
We obtain
\begin{equation} \label{chint}
 c_h= (v_h) (\mathcal{F}_h f_{1,h}, \ldots, \mathcal{F}_h f_{n,h}).
\end{equation}
By \eqref{csrep} we have the representation
$$
 f_{k,h} = (2 \pi)^{-\frac{d}{2}}  \int_{\R^{2d}} (T_h f_{k,h})(x,\xi) \psi_{x,\xi,h} \der x \der \xi,
$$
which gives
\begin{align*}
 c_h =  \int_{\R^{2N}}  (T_h f_{1,h})(x_1,\xi_1)\cdots(T_h f_{n,h})(x_{n},\xi_{n})
 \left( v_h(k_{x,\xi,h}) \right) \der x_1 \der \xi_1 \cdots \der x_{n} \der \xi_{n},
\end{align*}
where $k_{{h,x,\xi}}$ is the family of test functions in $\mathcal{S}(\R^N)$ defined by
$$
 k_{x,\xi,h} = (2 \pi)^{-\frac{N}{2}}  \mathcal{F}_h\psi_{x_1,\xi_1,h} \otimes \mathcal{F}_h\psi_{x_1,\xi_2,h} \otimes \cdots \otimes \mathcal{F}_h\psi_{x_{n},\xi_{n},h} 
$$
and we abbreviate $(x,\xi) = (x_1,\xi_1,\ldots, x_{n},\xi_{n})$.
Since the semi-classical Fourier transform $\mathcal{F}_h \psi_{x_0,\xi_0,h}$ of a coherent state $\psi_{x_0,\xi_0,h}$ equals
$$
\mathcal{F}_h \psi_{x_0,\xi_0,h}(\eta) = (\pi h)^{-\frac{d}{4}} e^{-\rmi x_0 \eta} e^{-\frac{(\eta-\xi_0)^2}{2h}}. 
$$
the functions $k_{h,x,\xi}$ form a family of Gaussians localising at the point $\xi$ as $h \searrow 0$. 
Now recall that $v_h$ is a polynomially bounded family of tempered distributions. This means we have the bound
$(v_h)(k_{x,\xi,h}) \leq h^{-m} p(k_{x,\xi,h}), h \in (0,1]$ in terms of a Schwartz space semi-norm $p$ for some $m>0$.
This implies that $(v_h)(k_{x,\xi,h})$ is a polynomially bounded function, i.e.
$$
 (v_h)(k_{x,\xi,h}) \leq C \left( \frac{(1 + |x| + |\xi|)}{h} \right)^M,
$$
for some $C,M>0$ and all $h \in (0,1]$. We can now split the integral \eqref{chint} 
into two parts $I_{1,h}$ and $I_{2,h}$, inserting $1-\chi$ and $\chi$ in the integral, using a smooth bounded cutoff function 
$\chi \in C^\infty(\R^{2N})$ with bounded derivatives with the following properties.
\begin{itemize}
 \item $\supp \chi$ has positive distance from $K_{\frac{\epsilon}{2}}$,
 \item $\supp (1- \chi)$ has positive distance from the complement of $K_{\epsilon}$.
\end{itemize}
Since these sets have positive distance such a function always exists.
The first integral $I_{1,h}$ is exponentially small because the family 
$(T_h f_{1,h})(x_1,\xi_1)\cdots(T_h f_{n,h})(x_{n},\xi_{n})$ is uniformly exponentially small in $K_\epsilon$.
To see that the second integral $I_{2,h}$ is exponentially small as well we note that, by the spectrum condition, $v_h$ is supported in $K$,
and therefore we can replace the test functions $k_{x,\xi,h}$ by the family $\chi(x,\xi) \tilde \chi(\eta) k_{x,\xi,h}(\eta)$,
where $\tilde \chi$ is another cutoff function with bounded derivatives supported in $Q_{\epsilon}$, and equal to one on $Q_{\frac{\epsilon}{2}}$.
Since $v_h$ is polynomially bounded as a tempered distribution, we can estimate 
$v_h(\chi(x,\xi) \tilde \chi(\cdot) k_{x,\xi,h}(\cdot))$ by $h^{-M_1} p(\chi(x,\xi) \tilde \chi(\cdot) k_{x,\xi,h}(\cdot))$ for some Schwartz semi-norm $p$. This shows that for $h \in (0,1]$ we have 
$$
 | v_h(\chi(x,\xi) \tilde \chi(\cdot) k_{x,\xi,h}(\cdot))| \leq C_1 h^{-M_2} (1 + |x|^2 + |\eta|^2)^{M_3} e^{-\delta_1 h^{-1}}
$$
for some $\delta_1,C_1,M_2,M_3>0$.
The integral $I_{2,h}$ is just the pairing of $v_h(\chi(x,\xi) \tilde \chi(\cdot) k_{x,\xi,h}(\cdot))$ with the polynomially bounded family
$$
 (T_h f_{1,h})(x_1,\xi_1)\cdots(T_h f_{n,h})(x_{n},\xi_{n})
$$
of test functions in $\mathcal{S}(\R^{2N})$. We therefore obtain an exponentially small integral $I_{2,h}$.
 \end{proof}

For fields satisfying the Wightman axioms with the cluster property and vanishing one-point distribution
the general form of the two-point function is given by the K\"allen-Lehmann representation
$$
 w_2(f_1,f_2) = \int w_{2,m}(f_1,f_2) \der \rho(m),
$$
where $w_{2,m}(x,y)$ is the two-point function of the free scalar field of mass $m \geq 0$, and $\der\rho$ is a polynomially bounded measure supported on $[0,\infty)$ that we refer to as the spectral measure (see for example \cite{MR0493420}*{Theorem IX.34}.
This allows one to compute the analytic wavefront set of $w_2(f_1,f_2)$. If the Fourier transform of the spectral measure is not analytic the analytic wavefront set of $w_2$ can contain timelike vectors. Since one can construct polynomially bounded measures, for which the Fourier transform is not analytic, this shows that the two-point function cannot be expected to contain only lightlike vectors. An example is the spectral measure 
$$
 \der \rho(m) = \begin{cases} 
 e^{-m^\alpha} \der m & m \geq m_0 \\
 0 &  m < m_0
 \end{cases}
$$
for some $m_0>0$ and $0 < \alpha < 1$.
Then the Fourier transform of the measure is a Gevrey function, but is not analytic at $0$. This leads to elements in the analytic wavefront set of the form $(x,-\xi,x,\xi)$, where $\xi$ is future directed and timelike. 
One can construct spectral measures of the form $\sigma(m) \der m$ with rapidly decreasing $\sigma$ such that points
$(x,-\xi,y,\xi)$ occur in the analytic wavefront set where $x \not=y$ is in the interior of the light cone based at $y$, and such that $\xi$ is timelike.

\subsection{The Free Klein-Gordon field as an example}

In this section we will show that under relatively mild assumptions any analytic Hadamard state for the free Klein-Gordon field is in fact tempered analytic. 
We assume here that $M$ is a globally hyperbolic spacetime and we fix a mass $m \geq 0$. Then the Klein-Gordon operator
$\Box + m^2$ admits unique retarded and advanced fundamental solutions $G_{\ret/\av}: C^\infty_0(M) \to C^\infty(M)$.
These maps are continuous and uniquely determined by the properties
\begin{itemize}
 \item $\supp (G_{\ret/\av} f) \subseteq J^\pm(\supp f)$ for any $f \in C^\infty_0(M)$,
 \item $(\Box + m^2)G_{\ret/\av} f =  G_{\ret/\av} (\Box + m^2) f=f$ for any $f \in C^\infty_0(M)$.
\end{itemize}
Here $J^\pm(K)$ is the causal future/past of the set $K \subseteq M$.
It is also convenient to define the map $G = G_\ret - G_\av$, which maps $C^\infty_0(M)$ onto the space of solutions of $(\Box + m^2) f=0$ with space-like compact support, i.e. with support that has compact intersection with any spacelike Cauchy surface. We will denote by $\tilde G \in \mathcal{D}'(M \times M)$ its integral kernel, so that
the distribution $\tilde G(\cdot, f)$ equals $G f$ for all  $f \in C^\infty_0(M)$.

The Klein-Gordon field algebra is the $*$-algebra $\mathcal{A}$ with unit $\mathbf{1}$ generated by symbols $\Phi(f), f \in C^\infty_0(M)$ and relations
\begin{align*}
f &\to \Phi(f)  \textrm{ is linear},\\
 \Phi((\Box + m^2) f)&=0,\textrm{ for all } f \in C^\infty_0(M),\\
 [\Phi(f_1),\Phi(f_2)] & = - \rmi \tilde G(f_1,f_2) \mathbf{1}, \textrm{ for all } f_1,f_2 \in C^\infty_0(M),\\
 \Phi(f)^* &= \Phi(\overline{f}) \textrm{ for all } f \in C^\infty_0(M).
\end{align*}
A state $\omega: \mathcal{A} \to \C$ then defines via the GNS-construction a quantum field theory.

\begin{assumption} \label{assume}
We now fix an embedding $\iota: M \to \R^{d'}$ and make the following assumptions.
\begin{enumerate}
 \item The projection $\mathrm{pr}_1 \circ \iota$ to the first component is a global proper time function $t: M \to \R$ which induces a foliation of $M$ into spacelike Cauchy surfaces, such that $\iota$ equals the projection of $\iota$ to the first component.
 \item $\Box$ extends to a continuous map $\mathcal{S}(M) \to \mathcal{S}(M)$. 
 \item $G$ extends to a continuous map from $\mathcal{S}(M)$ to $\mathcal{S'}(M) \cap C^\infty(M)$. \label{tempcon1}
% \item $\chi_{\mp}(t)G_{\ret/\av}$ extend to continuous maps $\mathcal{S}(M) \to \mathcal{S}(M)$ if $\chi_\pm \in C^\infty(
% \R)$ are cut-off functions such that  $\chi_\pm(t) = 1$ for $\pm t > 2$ and $\chi_\pm(t) = 0$ for $\pm t < -2$. Here $\chi_{\mp}(t)$ is understood as a multiplication operator.
 \item $G$ maps $\mathcal{S}_{h,O}(M)$ to the set of families of distributions whose microsupport is contained in the zero section. \label{tempcon2}
 \end{enumerate}
\end{assumption}

Condition \eqref{tempcon1} is clearly a temperedness assumption which implies in particular that $\tilde G$ is a continuous bilinear form on $\mathcal{S}(M) \times \mathcal{S}(M)$.
To understand the meaning of \eqref{tempcon2} note that analogous conditions automatically hold for compactly supported test functions. Namely, if $(f_h) \in C^\infty_{0,h,O}(M)$ then $G_{\ret/\av} f_h$
has its microsupport in the zero section. Indeed, this follows from propagation of singularities, \cite{MR1872698}*{Theorem 4.3.7 and Remark 4.3.10}, 
as $(\Box + m^2) G_{\ret/\av} f_h = f_h$ and therefore any non-zero element in the microsupport would propagate away from the support of $f_h$ to the future and the past (
the assumption of a bounded $L^2$-norm in that reference can be replaced by a polynomially bounded $L^2$-norm, by multiplying with an appropriate power of $h$).
The last condition is therefore also a temperedness assumptions on the way singularities propagate.  Checking assumption \eqref{tempcon2} in particular spacetimes would involve showing some kind of uniform analyticity of the Green's function when one of the variables goes to timelike infinity.

One can check that all the conditions are satisfied for the free Klein-Gordon field on Minkowski spacetime if $t$ is chosen as the time-coordinate of an inertial coordinate system. 

We have the following theorem.

\begin{proposition}
Suppose the Assumptions \ref{assume} hold and let $\omega$ be an analytic state for the Klein-Gordon field.
Assume further that for every $n \in \mathbb{N}$ the $n$-point function $\omega_n$ extends to a continuous map $$\omega_n: \mathcal{S}(M) \otimes \ldots \otimes \mathcal{S}(M) \to \C.$$
Then $\omega$ is tempered analytic.
\end{proposition}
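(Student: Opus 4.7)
\emph{Proof plan.} The idea is to view the squared norm
\begin{equation*}
\|\Phi(f_h)\Phi(q_{1,h})\cdots\Phi(q_{n,h})\Omega\|^2=\omega_{2n+2}(\bar q_{n,h},\ldots,\bar q_{1,h},\bar f_h,f_h,q_{1,h},\ldots,q_{n,h})
\end{equation*}
as a pairing of the tempered distribution $\omega_{2n+2}$ on $M^{2n+2}$ (tempered by the continuity hypothesis in the statement) with a Schwartz tensor-product test function, and to derive exponential smallness from a disjointness of microsupports. The key structural inputs are Prop.~\ref{altcharanwf}, which converts the analyticity of $\Omega$ into an analytic wavefront set bound, and a tempered analogue of Prop.~\ref{secondchar}, whose validity rests on the temperedness assumptions of Assumption~\ref{assume}.

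Applying Prop.~\ref{altcharanwf} to the analyticity of $\Omega$, the analytic wavefront set of $\omega_{2n+2}$ is contained in the set $\Gamma$ of covectors $(x_1,\xi_1,\ldots,x_{2n+2},\xi_{2n+2})$ whose first nonzero $\xi$ read from the right lies in $V^+$. The Hermiticity identity $\omega_{2n+2}(g_1,\ldots,g_{2n+2})=\overline{\omega_{2n+2}(\bar g_{2n+2},\ldots,\bar g_1)}$, combined with Prop.~\ref{altcharanwf} applied to the reversed expression, yields a second bound $\WFa(\omega_{2n+2})\subseteq \Gamma'$, where $\Gamma'$ is the set of covectors whose first nonzero $\xi$ read from the \emph{left} lies in $V^-$. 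Hence $\WFa(\omega_{2n+2})\subseteq \Gamma\cap\Gamma'$.

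The uniform microsupport of the Schwartz tensor-product test function lies in $\{0\}^n\times\muS(\bar f_h)\times\muS(f_h)\times\{0\}^n\subseteq\{0\}^n\times\bar K\times K\times\{0\}^n$, where $\bar K=\{(x,-\xi):(x,\xi)\in K\}$. Its reflection (the relevant set for the sign convention in Prop.~\ref{secondchar}) equals $\{0\}^n\times K\times\bar K\times\{0\}^n$. An element of this reflected set falls in one of two cases: if the position-$(n{+}2)$ covector is nonzero, it lies in $\bar K\setminus\{0\}$ and is disjoint from $V^+$ (since $K\cap V^-=\emptyset$ implies $\bar K\cap V^+=\emptyset$), so the element is not in $\Gamma$; if position $n{+}2$ is zero but position $n{+}1$ is nonzero, the covector lies in $K\setminus\{0\}$, disjoint from $V^-$ by hypothesis, so the element is not in $\Gamma'$. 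Either way the element lies outside $\Gamma\cap\Gamma'$, giving the required disjointness from $\WFa(\omega_{2n+2})$.

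A tempered analogue of Prop.~\ref{secondchar} then closes the argument: for a tempered distribution $u$ with $\WFa(u)\subseteq\Lambda$ and a Schwartz family $(g_h)$ whose reflected uniform microsupport is disjoint from $\Lambda$, one has $|u(g_h)|=O(e^{-\delta h^{-1}})$. One proves this by splitting $g_h=\chi g_h+(1-\chi)g_h$ using a bump $\chi\in C^\infty_{0,h,O}(M^{2n+2})$ from Lemma~\ref{bumplemma} equal to $1$ on a compact neighborhood of the projection of the microsupport of $g_h$: the compactly supported piece $\chi g_h\in C^\infty_{0,h,O}$ is controlled by the standard Prop.~\ref{secondchar}, while the tail $(1-\chi)g_h$ has exponentially small Schwartz seminorms since the uniform microsupport definition carries the $(1+x^2+\xi^2)^N$ prefactor, combining FBI exponential smallness with Schwartz decay at infinity, so pairing with the tempered $u$ produces exponential smallness. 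The main obstacle is to execute this tempered-microsupport argument cleanly in the multi-variable manifold setting via the embedding $\iota$, using Assumption~\ref{assume}(2)--(4) (in particular item~(4) on the microlocal behaviour of $G$ on $\mathcal{S}_{h,O}(M)$) to guarantee the temperedness of $\tilde G$ and to extend the wavefront-set bound from compactly supported to Schwartz test functions on $M^{2n+2}$.
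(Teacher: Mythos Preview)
Your approach has a genuine gap in the ``tempered analogue of Prop.~\ref{secondchar}'' step, and it bypasses the very structure that makes the statement provable. The projection to the base of the microsupport of $g_h=\bar q_{n,h}\otimes\cdots\otimes q_{n,h}$ is \emph{not} compact: the factors $q_{j,h}\in\mathcal{S}_{h,O}(M)$ are only uniformly microsupported in tubular neighborhoods of the zero section, so their microsupport projects onto all of $M$. Hence there is no compactly supported bump $\chi$ equal to $1$ on that projection, and your splitting $g_h=\chi g_h+(1-\chi)g_h$ cannot be arranged as described. Moreover, even for a large cutoff, the tail $(1-\chi)g_h$ is \emph{not} exponentially small in $h$: the $(1+|x|^2+|\xi|^2)^N$ weight in the uniform-microsupport estimate controls decay only \emph{off} the microsupport, and on the zero section (where the $q_{j,h}$ live) you only get Schwartz decay in space with no gain in $h$. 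Pairing such a tail with a tempered $\omega_{2n+2}$ yields something $O(1)$, not $O(e^{-\delta/h})$.

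The paper's proof is structurally different and explains why Assumption~\ref{assume}(4) is needed. First one reduces $f_h$ to a compactly supported family by cutting off near $K$ (this works because $K$ is compact, unlike the zero section). Then one argues by induction on $n$, using the Klein--Gordon commutation relation $[\Phi(f_h),\Phi(q_{1,h})]=-\rmi\,\tilde G(f_h,q_{1,h})\mathbf{1}$ to move $\Phi(f_h)$ one slot to the right. The commutator term is controlled precisely by Assumption~\ref{assume}(4): since $G q_{1,h}$ has microsupport in the zero section and $f_h$ is uniformly microsupported away from $V^-$, Prop.~\ref{secondchar} makes $\tilde G(f_h,q_{1,h})$ exponentially small. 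The remaining term is handled by the inductive hypothesis and Cauchy--Schwarz. Your wavefront-set argument never invokes the CCR and hence never engages Assumption~\ref{assume}(4); without that input (or some genuine substitute handling non-compactly supported $q_{j,h}$), the passage from analytic to tempered analytic does not go through.
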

\begin{proof}
We split the proof into two steps.\\
{\sl Step 1:}
Assume that $(q_{j,h}) \in \mathcal{S}_{h,O}(M)$. Assume that $f_h \in \mathcal{S}_{h}(M)$ is microlocally uniformly supported in a compact set $K \subset T^*M$ with $K \cap V^- = \emptyset$. 
We need to show that
$$
  \| \Phi(f_h) \Phi(q_{1,h})\cdots \Phi(q_{n,h}) \Omega \| \leq  C e^{-\delta h^{-1}}.
 $$
First we note that there exists a bump function $\chi_K$ which is smooth and compactly supported and equal to one on $K$. Then we can write 
$f_h = \chi_K f_h + (1-\chi_K) f_h$.  By Prop. \ref{multmic} the first term is uniformly microsupported in $K$, and the second term is uniformly exponentially small everywhere, i.e. in Schwartz space. By continuity 
$$
 \| \Phi((1-\chi_K)f_h) \Phi(q_{1,h})\cdots \Phi(q_{n,h}) \Omega \| \leq  C_1 e^{-\delta h^{-1}}.
$$
We can therefore assume without loss of generality that
$f_h \in C^\infty_{0,h,O}(M)$, simply by replacing $f_h$ by $\chi_K f_h$. \\

{\sl Step 2:}
We now proceed by induction in $n$. For $n=0$ we are dealing with the vector
$\Phi(f_h) \Omega$. Since the state is analytic and $f_h \in C^\infty_{0,h,O}(M)$ this implies the estimate.
By induction assume the estimate is correct for $n-1$. We now can write
\begin{align*}
  \Phi(f_h) \Phi(q_{1,h})\cdots \Phi(q_{n,h}) \Omega =  -\rmi \tilde G(f_h,q_{1,h}) \Phi(q_{2,h})\cdots \Phi(q_{n,h}) \Omega +
   \Phi(q_{1,h})  \Phi(f_h) \Phi(q_{2,h})\cdots \Phi(q_{n,h}) \Omega,
\end{align*}
where we have used the relation $[\Phi(f_1),\Phi(f_2)] = - \rmi \tilde G(f_1,f_2) 1$. The second term is exponentially small by the Cauchy-Schwartz inequality, the exponential smallness of the family $ \Phi(f_h)  \Phi(q_{2,h})\cdots \Phi(q_{n,h}) \Omega$, and the fact that $\Phi(q_{1,h})^* \Phi(q_{1,h})  \Phi(f_h) \Phi(q_{2,h})\cdots \Phi(q_{n,h}) \Omega$ is polynomially bounded. The statement then follows if we show that $\tilde G(f_h,q_{1,h})$ is exponentially small, thus establishing the required estimate.  Since the functions 
$q_{2,h},\ldots q_{n,h}$ are no longer required for the argument we will write $q_{h}$ for $q_{1,h}$. We are thus left to establish that $\tilde G(f_h,q_{h})$ is exponentially small for all $(f_h) \in C^\infty_{0,h}(M)$ with the required support properties.
By Prop. \ref{secondchar} it is now sufficient to show that given $q_h \in \mathcal{S}_{h,O}(M)$, $v_{q_h} = G q_h$ has its microsupport contained in $V^+$.
This follows from Assumption \ref{assume}, \eqref{tempcon2}, as in fact the microsupport is contained in the zero-section.
\end{proof}

For the Klein-Gordon field ground and KMS-states on analytic stationary spacetimes are known to be analytic Hadamard states (\cite{SVW}). It has also been shown recently that general analytic globally hyperbolic spacetimes admit analytic Hadamard states (\cite{MR3919442}).

\subsection{Relation to the microlocal spectrum conditions}

The existence of an analytic vector $\Omega$ with $\calA \Omega$ dense in $D$ with respect to the graph topology
has a natural interpretation in terms of analytic microlocal spectrum conditions if the field is constructed in the usual manner from its $m$-point functions. To understand this assume that
we are given a family of $m$-point functions, i.e.
 a family of distributions $(\omega_m)_{m \in \mathbb{N}}, \omega_m \in \mathcal{D}'(M\times \ldots \times M)=\mathcal{D}'(M^m)$ by
$$
 \omega_m(f_1 \otimes \ldots \otimes f_m) =  \langle \Omega, \Phi(f_1) \cdots  \Phi(f_m) \Omega \rangle.
$$
The Wightman reconstruction theorem states roughly that the field theory can be reconstructed from the set of $m$-point functions. This is based on the very general and robust GNS construction that provides a Hilbert space representation for every state on an abstract $*$-algebra.

Spectrum conditions have been postulated in this context for quantum field theory on curved spacetimes.
The introduction of microlocal spectrum conditions in quantum field theory on curved spacetimes started with the realisation by Radzikowski (\cite{MR1400751}) that the Hadamard condition for the two point function of the Klein-Gordon field can be formulated in a microlocal manner, as described by Duistermaat-H\"ormander (\cite{MR388464}) in terms of the wavefront set of the two point function. Since then there were several attempts to find a condition for interacting fields that imposes a similar condition as the spectrum condition for Wightman fields at least microlocally. 

One version of a smooth microlocal spectrum condition was introduced by Brunetti, Fredenhagen, and K\"ohler (\cite{BFK}) as condition on the wavefront set of the $n$-point functions. It is satisfied by any Wightman quantum field in Minkowski spacetime.  The analytic version of this microlocal spectrum condition on real analytic spacetimes was introduced in \cite{SVW} and it was shown that it implies the Reeh-Schlieder property. 
Hollands and Wald in \cite{MR2563799} give a slighlty different condition, allowing for interaction vertices on the spacetime, however requiring the vectors to be lightlike. Their condition excludes certain generalized free fields and is stronger than the condition imposed by  Brunetti, Fredenhagen, and K\"ohler.
The precise form of microlocal spectrum condition is perhaps still not final and may depend on the type of theory one wants to consider. Microlocal spectrum conditions restrict the (analytic) wavefront set of the $m$-point distribution to a closed conic set $\Gamma_m \subseteq T^*M^m \setminus 0$.
\begin{definition}
 A state $\omega$ given by a family of $m$-point distributions is said to satisfy the {\sl analytic microlocal spectrum condition} with respect to $\Gamma_m$
 if $\WFa(\omega_m) \subseteq \Gamma_m$.
\end{definition}
 All proposed microlocal spectrum conditions have the following property in common.
\begin{itemize}
 \item If $(x_1,\xi_1,x_2,\xi_2,\ldots,x_{j-1},\xi_{j},x_{j+1},0,x_{j+1},0,\ldots,x_m,0) \in \Gamma_m$ then $\xi_j$ must either vanish or be future directed and causal.
\end{itemize}
This also implies that $\Gamma_m \cap (-\Gamma_m) = \emptyset$.

As shown in \cite{SVW} any Wightman field theory in Minkowski spacetime satisfying the usual Wightman axioms satisfies the analytic microlocal spectrum condition with respect to certain $\Gamma_m$.

\begin{proposition} \label{Hilbertwavefront}
 Assume the analytic microlocal spectrum condition holds with respect to $\Gamma_m$ satisfying the above condition. Then $\Omega \in D_a$, i.e. $\Omega$ is analytic.
 \end{proposition}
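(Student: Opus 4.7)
The plan is to invoke Proposition \ref{altcharanwf} and reduce the statement to an analytic-wavefront inclusion: for every $n$, the Hilbert-space valued distribution $v_n(f_1 \otimes \cdots \otimes f_n) := \Phi(f_1) \cdots \Phi(f_n)\Omega$ should satisfy that $\WFa(v_n)$ is contained in the set of $(\mathbf{x},\mathbf{\xi}) \in T^*(M^n) \setminus 0$ whose rightmost nonzero covector lies in $V^+$. Working in analytic coordinate charts (with covariance supplied by Proposition \ref{cochart}), I would establish the contrapositive: if $(\mathbf{x}, \mathbf{\xi})$ has $\xi_k \neq 0$ and $\xi_j = 0$ for all $j > k$ but $\xi_k \notin V^+$, then $\|T_h v_n(\mathbf{x},\mathbf{\xi})\|$ is uniformly exponentially small on a neighborhood of the point.

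The central identity is that, writing $\phi_{x,\xi,h}(y) = e^{-(x-y)^2/(2h)}\, e^{(\rmi/h)(x-y)\cdot\xi}$ for the FBI kernel (so that $T_h u(x,\xi) = \alpha_h\, u(\phi_{x,\xi,h})$), the relation $T_h v_n(\mathbf{x},\mathbf{\xi}) = \alpha_h^n\, \Phi(\phi_{x_1,\xi_1,h}) \cdots \Phi(\phi_{x_n,\xi_n,h})\Omega$ together with the involution $\overline{\phi_{x,\xi,h}} = \phi_{x,-\xi,h}$ gives
\[
\|T_h v_n(\mathbf{x},\mathbf{\xi})\|^2 \;=\; T_h^{2n}\omega_{2n}(\mathbf{y}, \mathbf{\eta}),
\]
where $(\mathbf{y},\mathbf{\eta}) \in T^*(M^{2n})$ is the phase-space point with base $(x_n,\ldots,x_1,x_1,\ldots,x_n)$ and covector sequence, read left to right, equal to $(-\xi_n, -\xi_{n-1}, \ldots, -\xi_1, \xi_1, \xi_2, \ldots, \xi_n)$. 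The whole problem therefore reduces to showing $(\mathbf{y}, \mathbf{\eta}) \notin \muS(\omega_{2n})$.

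Inspecting $\mathbf{\eta}$: since $\xi_j = 0$ for every $j > k$, the components at positions $n+k+1,\ldots,2n$ (namely $\xi_{k+1},\ldots,\xi_n$) vanish, while the component at position $n+k$ equals $\xi_k \neq 0$. Hence the rightmost nonzero component of $\mathbf{\eta}$ is $\xi_k$ itself. By the stated property of $\Gamma_{2n}$, membership of $(\mathbf{y},\mathbf{\eta})$ in $\Gamma_{2n}$ would force $\xi_k \in V^+$, contradicting our hypothesis. Therefore $(\mathbf{y},\mathbf{\eta}) \notin \Gamma_{2n} \supseteq \WFa(\omega_{2n})$, and since $\xi_k \neq 0$ the point also lies off the zero section, so $(\mathbf{y},\mathbf{\eta}) \notin \muS(\omega_{2n})$. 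The desired exponential smallness of $T_h^{2n}\omega_{2n}$ at $(\mathbf{y}, \mathbf{\eta})$ follows, and is uniform on a neighborhood because $\muS(\omega_{2n})$ is closed and the map $(\mathbf{x}, \mathbf{\xi}) \mapsto (\mathbf{y}, \mathbf{\eta})$ is continuous.

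The main technical point is bookkeeping of signs: the involution $\overline{\phi_{x,\xi,h}} = \phi_{x,-\xi,h}$ introduces a sign flip on the conjugated half of the tensor product, and it is crucial that the covector component at position $n+k$ in the resulting $\mathbf{\eta}$ comes out as $+\xi_k$ rather than $-\xi_k$ — this is exactly what aligns the ``rightmost nonzero'' of $\mathbf{\eta}$ with the ``rightmost nonzero'' of $\mathbf{\xi}$ and permits direct application of the hypothesis on $\Gamma_{2n}$. Everything else is routine: uniformity follows from closedness of $\Gamma_{2n}$, and the passage from $\R^d$ to the analytic manifold $M$ is handled as elsewhere in the paper via analytic coordinate charts and the coordinate-invariance of the microsupport.
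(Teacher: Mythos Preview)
Your argument is correct and is essentially the same as the paper's proof. The only difference is cosmetic: the paper invokes a result from \cite{SVW} stating that $(x,\xi)\in\WFa(u)$ for a Hilbert-space valued distribution $u$ if and only if $(x,-\xi,x,\xi)\in\WFa(\langle u(\bar\cdot),u(\cdot)\rangle)$, whereas you re-derive this fact directly via the FBI-transform identity $\|T_h v_n(\mathbf{x},\mathbf{\xi})\|^2 = (T_h\omega_{2n})(\mathbf{y},\mathbf{\eta})$. One small technical point worth making explicit: your kernels $\phi_{x,\xi,h}$ are not compactly supported, so strictly speaking $\Phi(\phi_{x,\xi,h})$ is not defined; this is handled (as you indicate at the end) by inserting cutoff functions equal to one near the relevant base points, which changes nothing modulo exponentially small errors.
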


\begin{proof}
 Given a Hilbert space valued distribution $u(\cdot)$ on a real analytic manifold $X$ one can form the complex valued distribution $\langle u(\bar \cdot), u(\cdot) \rangle$ on $X \times X$. The observation (Prop 2.6, 2), Equ. (12) in \cite{SVW}) is that $(x,\xi)$ is in the wavefront set of $u$ if and only if $(x,-\xi,x,\xi)$ is in the wavefront set of $\langle u(\bar \cdot), u(\cdot) \rangle.$ We apply this to the vector-valued distribution $\Phi(\cdot) \cdots \Phi(\cdot) \Omega$.
 Assume that $(x_1,\xi_1,\ldots, x_j,\xi_j,x_{j+1},0,\ldots,x_m,0)$ is contained in the analytic wavefront set of $u$. By the above
   we know that $$(x_m,0,x_{m-1},0,\ldots,0,x_{j-1},0,x_j,-\xi_j,\dots, x_1,-\xi_1, x_1,\xi_1,x_2,\xi_2,\ldots,x_j,\xi_j,x_{j+1},0,\ldots,x_m,0)$$ is in the analytic wavefront set of $\omega_{2m}$.
 By the microlocal spectrum condition the covector $\xi_j$ must be in the closed forward lightcone. 
 \end{proof}

\begin{rem} \label{r412}
 In the same way as analytic vectors are defined one can also define smooth vectors by replacing the analytic wavefront set by the usual smooth wavefront set.  Hence, a vector $\Omega \in D$ is smooth if and only if for all $n \in \mathbb{N}$ 
the first non-zero covector from the right in the wavefront set of the Hilbert space
 valued distribution $\Phi(\cdot) \cdots \Phi(\cdot)  \Omega$ on $M^n$ 
 is future directed and causal.
It is clear that any analytic vector is also smooth.
 It is easy to see that the set of smooth vectors is invariant under the action of fields $\Phi(f), f \in C^\infty_0(M)$.
 For any smooth vector $\phi$ the wavefront set of the Hilbert space-valued distribution $\Phi(\cdot) \phi$ is contained in the lightcone. This has two immediate consequences. Since the wavefront set does not intersect the normal of a timelike curve, the distribution $\Phi(\cdot) \phi$ can be restricted to such a curve. This reproduces results in \cite{MR1885176} in a natural framework. Secondly, this also allows to generalize to curved spacetimes a result by Borchers (\cite{MR192797})  that field operators becomes smooth when smeared out in timelike directions.
 Indeed, let $t$ be a global time function that induces a diffeomorphism $M \cong \R \times \Sigma$. Then, for any smooth vector $\phi$ and any function $h \in C^\infty_0(\R)$ the distribution $f \mapsto \Phi(h \otimes f) \phi$ on $\Sigma$ is a smooth Hilbert space valued function.
This means that smearing the field operator only in the $t$-direction results in a strongly smooth operator function on the domain of smooth vectors.  
This can be seen immediately from the computation rules for wavefront sets summarized in Appendix \ref{comprules}: since the covectors in $T^*M \cong T^*\R \times T^* \Sigma$ of the form $(t,0,y,\eta)$ are not in the wavefront set, the wavefront set of the distribution $\Phi(h \otimes \cdot) \phi$ on $\Sigma$ is empty. 
\end{rem}

\subsection{Internal degrees of freedom}

In the above description the field is scalar-valued. In general one would like a description of fields with spin or of several different types. This can be done by twisting with a vector bundle as follows.
One fixes a complex real analytic vector bundle $E \to M$. 
We assume that $E$ is equipped with a non-degenerate real analytic sesquilinear form that identifies the dual bundle $E^*$ with the complex conjugate bundle $\overline{E}$. Using this identification the bundle $F= E \oplus E^*$ then has a complex conjugation $\bar\cdot$ defined by $\overline{(v,w)} = (w^*,v^*)$. The space of distributions $\mathcal{D}'(M;F)$ taking values in $F$ are identified with the dual of $C^\infty_0(M;F^*)$. The field would then be defined as a map $\Phi: C^\infty_0(M;F^*) \to \mathrm{End}(D)$ with the requirement that $\Phi(f)$ is symmetric on $D$
 if $\overline{f}=f$. The paper can be read in this more general context if one thinks of the bundles as being suppressed in the notation.

\section{The timelike tube theorem} \label{ttt}

Given two points $p,q \in M$ and a smooth timelike curve-segment $\gamma: [0,1] \to M$ connecting $p$ and $q$, we let $I_0(p,q,\gamma)$ be the set of points that can be reached via continuous deformations $\gamma_s$ of smooth timelike curves with fixed endpoints $p,q$. The restriction to smooth curves here is merely for convenience and the definition can also be stated with $C^1$-curves.

\begin{figure}[!htbp] 
 \includegraphics[scale=0.6]{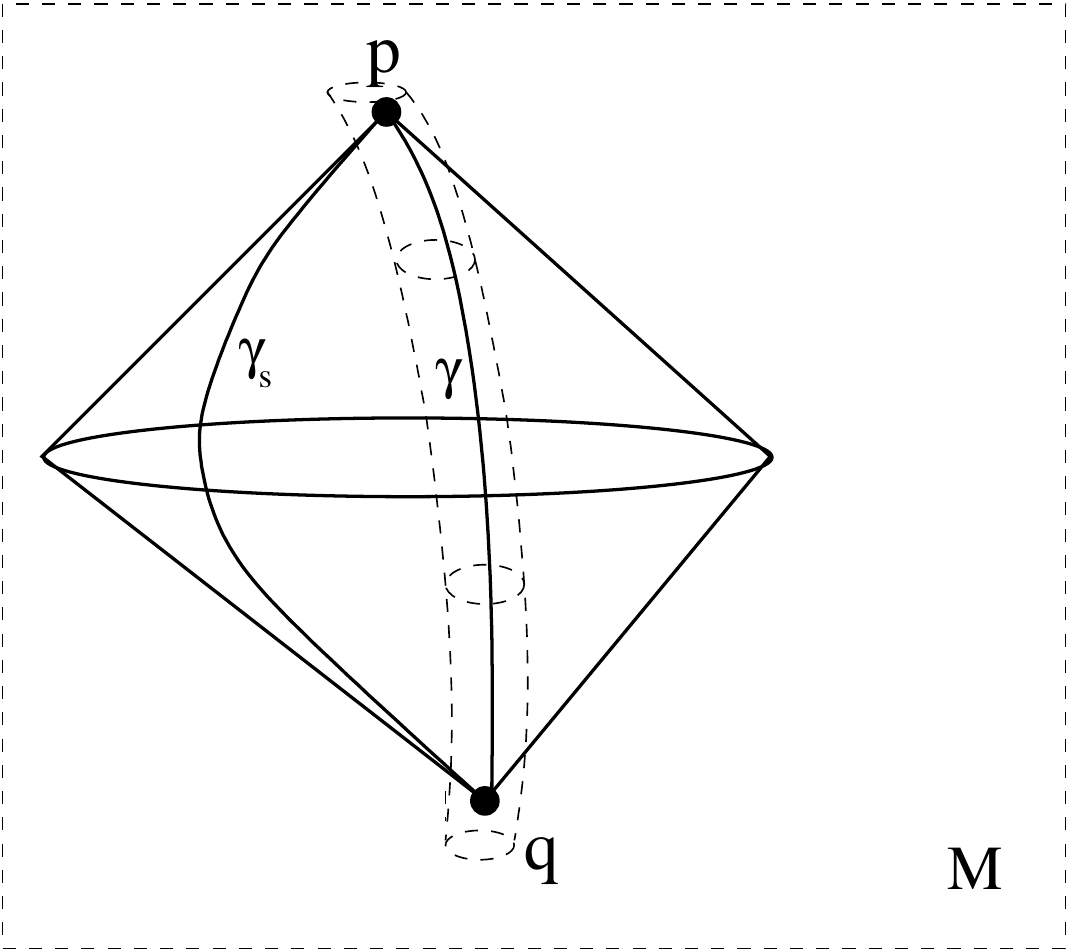}
 \caption{The set $I_0(p,q,\gamma)$ with $\gamma$ contained in a timelike tube.}
\end{figure}

The {\sl timelike tube envelope} $\mathcal{E}_T(\calO)$ of an open set $\calO$ is defined as the smallest set $A$ containing $\calO$ with the property that for any smooth timelike curve-segment $\gamma:[0,1] \to M$  lying in the interior of $A$ with endpoints $p,q$ also the set $I_0(p,q,\gamma)$ is in $A$. 

In addition we can also form that causal envelope $\mathcal{E}_c(\calO)$ as the smallest set $A$ containing $\calO$ such that both $\mathcal{E}_T(\calO)$ and the domain of dependence of $\calO$ are contained in $A$.

\subsection{The Reeh-Schlieder theorem and the timelike tube theorem}\label{tlt}

\begin{theorem} \label{Reeh-Schlieder}
 Suppose that $(\Phi,D \subset \mathcal{H})$ is a quantum field theory and assume that $\Omega$ is an analytic vector such that $\calA \Omega$ is dense in $D$ with respect to the graph topology. Then, for any non-empty open subset $\calO$ the set $\calA(\calO) \Omega$ is dense in $\calH$.
\end{theorem}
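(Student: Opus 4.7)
Suppose $\psi \in \calH$ is orthogonal to $\calA(\calO)\Omega$. Since $\calA \Omega$ is dense in $D$ for the graph topology (which refines the norm topology) and $D$ is dense in $\calH$, the set $\calA\Omega$ is itself norm-dense in $\calH$; so it suffices to prove that $\langle \psi, \Phi(f_1)\cdots\Phi(f_n)\Omega\rangle = 0$ for every $n\geq 1$ and all $f_1,\ldots,f_n\in C^\infty_0(M)$, as this forces $\psi\perp\calA\Omega$ and hence $\psi=0$.

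For each $n$ I introduce the scalar distribution $G_n\in\mathcal{D}'(M^n)$ given by
\[
 G_n(f_1\otimes\cdots\otimes f_n) = \langle \psi, \Phi(f_1)\cdots\Phi(f_n)\Omega\rangle,
\]
which vanishes on $\calO^n$ by hypothesis. Analyticity of $\Omega$ combined with Prop.~\ref{altcharanwf} controls the analytic wavefront set of the Hilbert space--valued distribution $\Phi(\cdot)\cdots\Phi(\cdot)\Omega$ on $M^n$: it is contained in the set $\Gamma_n$ of nonzero covectors $(x_1,\xi_1,\ldots,x_n,\xi_n)$ whose rightmost non-vanishing entry lies in $V^+$. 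Since pairing with the fixed vector $\psi$ cannot enlarge the analytic wavefront set, $\WFa(G_n)\subseteq\Gamma_n$ as well.

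The plan is to show by induction on $k\in\{0,1,\ldots,n\}$ that $G_n$ vanishes on the open set $\calO^{n-k}\times M^k$: the case $k=0$ is the hypothesis and $k=n$ is the desired conclusion. For the inductive step from $k$ to $k+1$ I fix $\phi_1,\ldots,\phi_{n-k-1}\in C^\infty_0(\calO)$ and $\eta_{n-k+1},\ldots,\eta_n\in C^\infty_0(M)$ and let $u\in\mathcal{D}'(M)$ denote the partial evaluation of $G_n$ against these test functions, with the $(n-k)$-th slot left free. By the inductive hypothesis, $u$ vanishes on $\calO$. The wavefront set bound $\WFa(G_n)\subseteq\Gamma_n$ restricts every nonzero covector $(x,\xi)\in\WFa(u)$ to arise from a covector in $\Gamma_n$ of the form $(x_1,0,\ldots,x_{n-k-1},0,x,\xi,x_{n-k+1},0,\ldots,x_n,0)$, so that $\xi$ is automatically the rightmost non-vanishing component and must therefore lie in $V^+$. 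Since $V^+$ is one-sided and $M$ is connected, the unique continuation property (Prop.~\ref{ucp}) gives $u\equiv 0$ on $M$. As the $\phi_i,\eta_j$ were arbitrary, $G_n$ vanishes on $\calO^{n-k-1}\times M^{k+1}$, completing the induction.

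The principal obstacle is justifying the projection-type formula $\WFa(u)\subseteq V^+$, because the $\phi_i,\eta_j$ are only smooth and compactly supported, not real analytic, so a priori partial pairing could inject spurious analytic singularities into $\WFa(u)$. I handle this by implementing the partial pairings microlocally: each test function is approximated by semi-classical families of bump functions in $C^\infty_{0,h,O}(M)$ produced by Lemma~\ref{bumplemma}, which are uniformly microsupported on the zero section of $T^*M$ and hence behave, for the purposes of analytic wavefront set computations, like real analytic test functions. Combined with the microsupport characterisation in Prop.~\ref{secondchar}, this converts the exponential smallness estimates afforded by analyticity of $\Omega$ into the required $V^+$ confinement of $\WFa(u)$, after which the one-dimensional unique continuation closes the induction and the theorem.
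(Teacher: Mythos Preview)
Your inductive scheme has a real gap at exactly the place you flag. You want $\WFa(u)\subseteq V^+$ for the one-variable distribution obtained by freezing smooth (non-analytic) test functions in the remaining slots. As the paper itself warns in Appendix~\ref{comprules}, the projection rule $\WFa(u_1)\subseteq\{(x,\xi):(x,\xi,y,0)\in\WFa(u)\}$ for partial pairing with a compactly supported smooth test function \emph{fails} for the analytic wavefront set; this is precisely why your obstacle is genuine. Your proposed cure does not work: Lemma~\ref{bumplemma} manufactures families $(\chi_h)\in C^\infty_{0,h,O}(M)$ that are identically $1$ on a compact set, not approximations of an arbitrary $\phi_i\in C^\infty_0(\calO)$. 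Multiplying a fixed non-analytic $\phi_i$ by such a $\chi_h$ does \emph{not} yield a family in $C^\infty_{0,h,O}(M)$, because by Prop.~\ref{multmic} the microsupport of the product contains $\muS(\phi_i)=\supp(\phi_i)\times\{0\}\cup\WFa(\phi_i)$, and $\WFa(\phi_i)$ is typically nonempty. So neither the analyticity estimate nor Prop.~\ref{secondchar} applies to the object you actually need, and the inductive step as written does not close.

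The paper's proof sidesteps the whole difficulty by never slicing: it applies unique continuation directly to $G_n$ on the product manifold $M^n$. Since the rightmost nonzero covector in any element of $\Gamma_n$ lies in $V^+$ and $V^+\cap(-V^+)=\emptyset$, one has $\Gamma_n\cap(-\Gamma_n)=\emptyset$, hence $\WFa(G_n)\cap-\WFa(G_n)=\emptyset$. By Prop.~\ref{ucp} on the connected manifold $M^n$, vanishing of $G_n$ on the nonempty open set $\calO^n$ forces $G_n\equiv 0$ on all of $M^n$ in one stroke. No induction, no partial pairing, no need to control the analytic wavefront set of a sliced distribution. Once you have $\WFa(G_n)\subseteq\Gamma_n$ (which you correctly derive from Prop.~\ref{altcharanwf}), this is a one-line finish.
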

\begin{proof}
 It suffices to check that the orthogonal complement of $\calA(\calO) \Omega$ is the zero vector. Assume that
 $\phi$ is a vector in the orthogonal complement of $\calA(\calO) \Omega$. This means the distribution
 $$
  w_m := \langle \phi, \Phi(\cdot) \cdots \Phi(\cdot) \Omega \rangle
 $$
 vanishes on $\calO \times \ldots \times \calO \subset M^m$ for any $m \in \mathbb{N}$. Since $\Omega$ is analytic the wavefront set of this distribution $w_m$ satisfies $\WFa(w_m) \cap - \WFa(w_m) = \emptyset$. By unique continuation, Prop. \ref{ucp},
 this implies that $w_m$ vanishes everywhere. Hence, $\phi$ is orthogonal to the dense set $D$. It follows that $\phi=0$.
\end{proof}

\begin{rem}
 The minor modification of the proof actually shows that a stronger statement holds. Namely, for any countable collection $(\calO_k)_{k \in \mathbb{N}}$ of non-empty open subsets $\calO_k \subset M$ the span of the set 
 $$
  \{\Phi(f_k) \Phi(f_{k-1}) \ldots \Phi(f_1) \Omega \mid\; k \in \mathbb{N} , \quad f_j \in C^\infty_0(\calO_j)\}
 $$
 is dense in $\calH$.
\end{rem}

\begin{theorem} \label{timeliketube}
 Suppose that $(\Phi,D \subset \mathcal{H})$ is a quantum field theory and assume that $\Omega$ is an analytic vector such that $\calA \Omega$ is dense in $D$ with respect to the graph topology. Then, for an open subset $\calO$ let, as above, be $\mathcal{R}(\calO)$ be the local von-Neumann algebra associated with the quantum field. Let $\mathcal{E}_T(\calO)$ be the timelike tube envelope of $\calO$. Then $\mathcal{R}(\mathcal{E}_T(\calO))= \mathcal{R}(\calO)$.
\end{theorem}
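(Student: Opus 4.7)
The plan is to prove the two inclusions separately. The direction $\mathcal{R}(\calO) \subseteq \mathcal{R}(\mathcal{E}_T(\calO))$ is automatic: enlarging the region only enlarges the set of fields one must weakly commute with, hence shrinks the weak commutant, and taking commutants reverses the inclusion. For the nontrivial direction $\mathcal{R}(\mathcal{E}_T(\calO)) \subseteq \mathcal{R}(\calO)$, since $\mathcal{R}(\cdot)$ is defined as the commutant of the weak commutant of the fields, it suffices to show that every bounded operator $B$ which weakly commutes with $\Phi(f)$ for all $f\in C^\infty_0(\calO)$ also weakly commutes with $\Phi(f)$ for all $f\in C^\infty_0(\mathcal{E}_T(\calO))$.

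By density of $\calA\Omega$ in $D$ in the graph topology, it is enough to verify the identity $\langle w, B\Phi(f)v\rangle = \langle \Phi(\bar f)w, Bv\rangle$ for vectors of the form $v=\Phi(h_1)\cdots\Phi(h_l)\Omega$ and $w=\Phi(g_1)\cdots\Phi(g_k)\Omega$, with a continuity argument extending the result to all of $D$. For such $v,w$ I consider the scalar distribution
\[
F(x) \;=\; \langle w, B\Phi(x)v\rangle - \langle \Phi(x)w, Bv\rangle \;\in\; \mathcal{D}'(M),
\]
which vanishes on $\calO$ by hypothesis on $B$. The problem reduces to showing $F\equiv 0$ on $\mathcal{E}_T(\calO)$.

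The microlocal input comes from the analyticity of $\Omega$. Rewrite the two terms as pairings $\langle B^*w,\Phi(\cdot)v\rangle$ and $\overline{\langle Bv,\Phi(\cdot)w\rangle}$ with the Hilbert-space valued distributions $\Phi(\cdot)v$ and $\Phi(\cdot)w$. Applying Proposition~\ref{altcharanwf} to the full multi-point distributions $\Phi(x_0)\Phi(x_1)\cdots\Phi(x_l)\Omega$ and $\Phi(x_0)\Phi(x_1)\cdots\Phi(x_k)\Omega$, and then partially integrating against the fixed test functions $h_j$ (resp.\ $g_i$) via the standard wavefront calculus, one reads off
\[
\WFa(\Phi(\cdot)v)\subseteq V^+, \qquad \WFa(\Phi(\cdot)w)\subseteq V^+,
\]
because the right-most non-zero covector condition forces the surviving covector (that of $x_0$) to lie in $V^+$. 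Pairing against a fixed Hilbert-space vector preserves this bound, while complex conjugation flips the sign of covectors, so
\[
\WFa(F) \;\subseteq\; V^+ \cup V^-,
\]
i.e.\ every analytic singularity of $F$ points in a causal direction.

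The main obstacle is the remaining geometric step: a distribution $F$ on the analytic spacetime $(M,g)$ whose analytic wavefront set is contained in $V^+\cup V^-$ and which vanishes on an open set $\calO$ must vanish on the timelike envelope $\mathcal{E}_T(\calO)$. This is a Lorentzian analytic propagation-of-vanishing theorem of Holmgren--Kashiwara type; its hypothesis is compatible with $F$ because the conormal of any timelike hypersurface is spacelike and therefore disjoint from $V^+\cup V^-$, and the vanishing can be continued across the one-parameter families of timelike-curve deformations $\gamma_s$ that by definition generate $\mathcal{E}_T(\calO)$. This is precisely where the argument genuinely exploits the timelike-tube geometry and goes beyond the Reeh--Schlieder case, in which the weaker microlocal condition $\WFa\cap -\WFa=\emptyset$ and Proposition~\ref{ucp} already sufficed. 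Once $F\equiv 0$ on $\mathcal{E}_T(\calO)$ is established, the desired weak commutation of $B$ with $\Phi(f)$ for $f\in C^\infty_0(\mathcal{E}_T(\calO))$ holds on the dense set $\calA\Omega$ and extends to all of $D$ by continuity, completing the proof.
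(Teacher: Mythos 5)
Your overall strategy (reduce to weak commutants, test on the dense set $\calA\Omega$, then use analyticity plus a Holmgren-type continuation across timelike deformations) is the right one, and the first reduction steps agree with the paper. But there is a genuine gap at the microlocal heart of the argument: you pass from the multi-variable distributions to the single-variable distribution $F(x)=\langle w,B\Phi(x)v\rangle-\langle\Phi(x)w,Bv\rangle$ by smearing the remaining slots with the \emph{fixed, compactly supported, non-analytic} test functions $h_j,g_i$, and then claim that ``standard wavefront calculus'' yields $\WFa(\Phi(\cdot)v)\subseteq V^+$. For the analytic wavefront set this step is invalid: partial smearing with a general $C^\infty_0$ function does not propagate $\WFa$ bounds (the test function itself is nowhere analytic), and the paper explicitly warns about exactly this in Appendix~\ref{comprules}, where the rule is stated only for compactly supported distributions smeared with \emph{real analytic} test functions, and is even shown to fail for tempered distributions. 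Worse, the conclusion you want is false in general: if $\WFa(\Phi(\cdot)\Phi(h)\Omega)\subseteq V^+$ held for every $h\in C^\infty_0(M)$, then for a bosonic field the commutator distribution $[\Phi(\cdot),\Phi(h)]\Omega$ would have one-sided analytic wavefront set and vanish in the spacelike complement of $\supp h$, so by Prop.~\ref{ucp} it would vanish identically (this is the argument of the Remark following Prop.~\ref{altcharanwf}); already for the free Klein--Gordon field $[\Phi(f),\Phi(h)]\Omega=-\rmi\tilde G(f,h)\Omega\neq 0$, so your intermediate bound cannot hold. Consequently the bound $\WFa(F)\subseteq V^+\cup V^-$ is unsubstantiated, and the continuation step has nothing to run on.

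The paper circumvents precisely this obstruction by never smearing the spectator variables: it works with the full distributions $w_{m,m'}$ on $M^{1+m+m'}$ (equation \eqref{wequ}) and observes that the conormal of $S\times M^{m+m'}$, for $S$ a timelike hypersurface, consists of covectors $(x,\xi,x_1,0,\ldots,y_{m'},0)$ with spacelike $\xi$ and \emph{zero} covectors in all other slots. These are exactly the covectors excluded by Prop.~\ref{altcharanwf} (the ``first non-zero covector from the right'' condition) applied to the two Hilbert-space valued distributions, and a Cauchy--Schwarz argument (Prop.~2.6 of \cite{SVW}) transfers the exclusion to $\WFa(w_{m,m'})$. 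Then the microlocal Holmgren theorem, Prop.~\ref{ucphyp}, gives unique continuation across $S\times M^{m+m'}$ on the product manifold, and the tube-deformation/compactness argument (which you only sketch, but which is essentially what the paper carries out with the normal-bundle tubes $\mathcal{T}_s$ and an infimum argument) completes the proof. So your geometric step is recoverable, but the reduction to a one-variable distribution with $\WFa(F)\subseteq V^+\cup V^-$ must be abandoned in favour of the all-variables formulation.
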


Of course, if in addition the time-slice axiom is satisfied we have $\mathcal{R}(\mathcal{E}_c(\calO))= \mathcal{R}(\calO)$.

\begin{proof}[Proof of Theorem \ref{timeliketube}]
 We show that $\mathcal{R}(\calO) = \mathcal{R}(\mathcal{E}_T(\calO))$ by establishing the equality

 \begin{align}
 &\{  A \in \mathcal{L}(\calH) \mid \forall ( f \in C^\infty_0(\calO),\, v,w \in D), \langle w,  A\, \Phi(f) v \rangle = \langle \Phi(\overline{f})  w,  A\, v\rangle \} \nonumber\\ = &
 \{ A \in \mathcal{L}(\calH) \mid \forall  (f \in C^\infty_0(\mathcal{E}_T(\calO)), v,w \in D), \langle w,  A\, \Phi(f) v \rangle = \langle \Phi(\overline{f})  w,  A\, v\rangle \} .\label{eqset}
 \end{align}
 
 Consider an operator $A \in \calL(\calH)$ such that  $\langle w , A \Phi(f) v \rangle = \langle \Phi(\bar f)  w , A v \rangle$
 for all $v,w \in D$ and  $f \in C^\infty_0(\calO)$.
 This is equivalent for the distribution
 $$
  \langle A^* w , \Phi(\cdot) v \rangle - \langle \Phi(\bar \cdot)  w , A v \rangle
 $$
 to vanish on $\calO$ for all $v,w \in D$. This in turn is equivalent to the distributions defined by 
 \begin{align} \label{wequ}
 &w_{m,m'}: h \otimes f_1 \otimes \ldots \otimes f_m \otimes h_1 \otimes \ldots \otimes h_{m'} \mapsto \nonumber \\
 &\langle A^* \Phi(\bar f_1) \cdots \Phi(\bar f_m) \Omega , \Phi(h) \Phi(h_1) \cdots \Phi(h_{m'}) \Omega \rangle - 
 \langle \Phi(\bar h)  \Phi(\bar f_1) \cdots \Phi(\bar f_m) \Omega , A \Phi(h_1) \cdots \Phi(h_{m'}) \Omega \rangle
 \end{align}
 to vanish in the set $\calO \times M^{m+m'}$ for all $m,m' \in \mathbb{N}$.
 We will show below that if $w_{m,m'}$ vanishes in $\calO \times M^{m+m'}$  it automatically vanishes in $\mathcal{E}_T(\calO) \times M^{m+m'}$, thus showing the equality \eqref{eqset}.
 
 Let $S \subset M$ be any co-dimension one timelike hypersurface in $M$ then $S \times M^{m+m'}$ is a hypersurface in $M^{m+m'+1}$ with its conormal at $(x,x_1,\ldots,x_{m},y_1,\ldots,y_{m'}) \in S \times M^{m+m'}$ spanned by $(x, \xi,x_1,0,\ldots,x_{m},0,y_1,0,\ldots,y_{m'},0)$, $\xi$ being any non-zero spacelike co-normal vector to $x \in S$.
 We will show below that the conormal of $S \times M^{m+m'}$ does not intersect $\WFa(w_{m,m'})$.
 By Prop. \ref{ucphyp} we have unique continuation for $w_{m,m'}$ across any such
$S \times M^{m+m'}$. We will now use a deformation argument to show that that $w_{m+m'}$ vanishes on $\mathcal{E}_T(\calO) \times M^{2m}$. 
{We argue by contradiction. Assume that there exists a point $(x_0,y_0) \in \mathcal{E}_T(\calO) \times M^{2m}$
in the support of $w_{m+m'}$. Then there exists a continuous family $\gamma_s$ of smooth timelike curves such that
$\gamma_s(0)=q, \gamma_s(1)=p$, such that $\gamma_0$ is in $\calO$, and such that $\gamma_{s_0}$ passes through the point $x_0$. We introduce a complete Riemannian metric $\tilde g$ on $M$.
Since $K=\{\gamma_s(t) \mid (s,t) \in [0,s_0] \times [0,1] \}$ is compact so is the closure of the set 
$K_1 =\cup_{x \in K} \overline{B_1(x)}$, where $\overline{B_1(x)}$ is the closed unit ball centered at $x$ with respect to the complete metric $\tilde g$. 
Let $N \gamma_s$ be the normal bundle of $\gamma_s$ in $TM$. The bundle of balls of radius $r$ in $N \gamma_s$ will be denoted by $N_r \gamma_s$. Its boundary $\partial N_r \gamma_s$  is the bundle of spheres $S_r \gamma_s$ in the normal bundle. 
For sufficiently small $r>0$ the exponential map of the metric $\tilde g$ then is a local diffeomorphism 
from $N_r \gamma_s$ to a neighborhood of $\gamma_s((0,1))$ for every $s \in [0,s_0]$. 
We now choose $0<\delta<1$ and $0 < \delta' <1$ such that the following hold.
\begin{enumerate}
 \item $\forall t \in [0,1] \forall \tilde s \in [0,\delta'], B_\delta(\gamma_{\tilde s}(t)) \in \calO$, 
 \item for every $s \in [0,s_0]$ the exponential map defines an immersion $S_\delta \gamma_s \to M$ with timelike image.
\end{enumerate}
Such a choice of $\delta$ exists uniformly in $s \in [0,s_0]$ by compactness of $K_1$.
The image of $N_\delta \gamma_s$ then defines a family of tubes $\mathcal{T}_s$ with boundaries $\partial \mathcal{T}_s$ that form a continuous family of immersed timelike hypersurfaces. By assumption the open set 
$I=\{s \in \R \mid \exists (x,y) \in \mathrm{supp}(w_{m+m'}), x \in \mathcal{T}_s \}$ contains $T$ and is bounded below by $\delta'>0$. Let $s_0 = \inf I$. Then $\mathcal{T}_{s_0} \times M^{2m}$ does not intersect the support of $w_{m+m'}$, but its boundary $S \times M^{2m} = \partial \mathcal{T}_{s_0} \times M^{2m}$ does. If there is unique continuation across $S \times M^{2m}$ we obtain a contradiction.
}

It now only remains to show that indeed the conormal of $S \times M^{m+m'}$ does not intersect $\WFa(w_{m,m'})$ as claimed. We must prove that $(x, \xi,x_1,0,\ldots,x_{m},0,y_1,0,\ldots,y_{m'},0)$ with non-zero spacelike $\xi$ is not in the wavefront set of $w_{m,m'}$. By Prop. \ref{altcharanwf} the Hilbert space valued distribution
$\Phi(h) \Phi(h_1) \cdots \Phi(h_{m'}) \Omega $  does not have $(x, \xi,y_1,0,\ldots,y_{m'},0)$ in its wavefront set.
Similarly, the Hilbert space distribution  $\Phi(\bar h)  \Phi( \bar f_1) \cdots \Phi( \bar f_m) \Omega $ taking values in the conjugate vector space does not have $(x, -\xi,x_1,0,\ldots,x_{m},0)$ in its wavefront set. Essentially by the Cauchy-Schwarz inequality we have therefore (see Prop 2.6, 2), Equ. (13) in \cite{SVW}) that $(x, \xi,x_1,0,\ldots,x_{m},0,y_1,0,\ldots,y_{m'},0)$ is not in the wavefront set of either term in \eqref{wequ} and therefore not in the wavefront set of $w_{m,m'}$.   
 \end{proof}

\begin{rem}
The above proof shows that requiring the analytic wavefront set to be contained in the set of lightlike vectors rather than causal vectors implies the time-slice axiom. This is for example the case in generalized free theories if the spectral measure $\rho$ on $[0,\infty)$
is exponentially decaying. This is consistent with the conditions on the spectral measure for generalized free fields to satisfy the time-slice axiom identified in \cite{MR376001}. Even though the time-slice axiom is expected to hold also for reasonable interacting quantum field theories the wavefront set of the $m$-point functions will in general be expected to contain time-like vectors. 
\end{rem} 

{
\section{The algebra of a timelike curve and restrictions of the field operator}
Given any subset $S \subset M$ such as a hypersurface one can define the algebra of observables $\mathcal{R}(S)$
as the intersection $\bigcap\limits_{\calO \supset S} \mathcal{R}(S)$ of the field algebras of all open subsets containing $S$. Given a timelike curve $\gamma: [0,1] \to M$ let us denote by $\gamma^\circ$ the image $\gamma((0,1))$ of the corresponding open interval, i.e. the curve without its endpoints.
One consequence of the timelike tube theorem is that for a timelike curve $\gamma^\circ$ the corresponding
algebra $\mathcal{R}(\gamma^\circ)$ is that of the open set $\calE(\gamma^\circ)$ obtained by deforming the curve in a timelike manner, with the fixed endpoints removed.}
{
There is however another way to define the algebra of observables of a timelike curve based on the fact that the field operator and its derivatives can in fact be restricted to the curve in the sense explained in Remark \ref{r412}.
Namely, on can consider what is essentially the double commutant of the fields restricted to timelike curves. To be more precise, recall that $A$ commutes with the field operator in $\calO$ if and only if
$$
  \langle A^* w , \Phi(\cdot) v \rangle - \langle \Phi(\bar \cdot)  w , A v \rangle
$$
vanishes in $\calO$ for all vectors in the domain $D \subset \calH$.
One can now define the algebra $\mathcal{R}^\infty(\gamma^\circ)$ as the commutant of the set of operators
$A$ with the property that 
$$
  \langle A^* w , \Phi(\cdot) v \rangle - \langle \Phi(\bar \cdot)  w , A v \rangle
$$
is flat on $\gamma$. Here a distribution is called flat on $\gamma$ if its wavefront set does not intersect the normal bundle of $\gamma$ and the restriction of of all its derivatives to $\gamma$ vanish. It was conjectured in \cite{MR1885176} that the two definitions give the same local algebra, i.e. $\mathcal{R}^\infty(\gamma^\circ)= \mathcal{R}(\gamma^\circ)$. Our arguments imply this conjecture if $\gamma$ is analytic. 
\begin{theorem}
 Assume that $\gamma:[a,b] \to M$ is a real analytic timelike curve. Then $\mathcal{R}^\infty(\gamma^\circ)= \mathcal{R}(\gamma^\circ)$.
\end{theorem}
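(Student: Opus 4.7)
The plan is to prove both inclusions separately.

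For $\mathcal{R}^\infty(\gamma^\circ) \subseteq \mathcal{R}(\gamma^\circ)$, I would first observe that for any open neighborhood $\calO$ of $\gamma^\circ$ and any $A$ weakly commuting with $\Phi(f)$ for all $f \in C^\infty_0(\calO)$, the distribution
$$
 \langle A^* w, \Phi(\cdot) v\rangle - \langle \Phi(\bar\cdot) w, A v\rangle
$$
vanishes on $\calO$ and is therefore, trivially, flat on $\gamma$. Thus the defining set of $\mathcal{R}^\infty(\gamma^\circ)$ contains every such $A$, and taking commutants gives $\mathcal{R}^\infty(\gamma^\circ) \subseteq \mathcal{R}(\calO)$. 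Intersecting over $\calO \supset \gamma^\circ$ yields the inclusion.

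For the reverse inclusion $\mathcal{R}(\gamma^\circ) \subseteq \mathcal{R}^\infty(\gamma^\circ)$, I plan to show that every $A$ whose commutator distribution $T_{v,w}$ is flat on $\gamma$ for all $v,w \in D$ already commutes weakly with every $\Phi(f)$ supported in some open neighborhood $\calO$ of $\gamma^\circ$; given this, $A \in \mathcal{R}(\calO)' \subseteq \mathcal{R}(\gamma^\circ)'$, and taking commutants finishes the argument. By density of $\calA\Omega$ in $D$ in the graph topology and continuity, it suffices to establish the vanishing of $T_{v,w}$ in $\calO$ for $v = \Phi(h_1)\cdots\Phi(h_{m'})\Omega$ and $w = \Phi(f_1)\cdots\Phi(f_m)\Omega$; for such test vectors $T_{v,w}$ is a partial smearing, in the inner variables, of the distribution $w_{m,m'}$ appearing in the proof of Theorem \ref{timeliketube}, and the analytic-wavefront-set computation given there (using Prop. \ref{altcharanwf} applied to $\Phi(\cdot)v$ and to the conjugate distribution $\Phi(\bar\cdot)w$) applies verbatim to show
$$
 \WFa(T_{v,w}) \subseteq V^+ \cup V^-.
$$
Since $\gamma$ is timelike, its conormal bundle $N^*\gamma$ consists entirely of spacelike covectors away from the zero section, and hence $\WFa(T_{v,w}) \cap N^*\gamma = \emptyset$ along $\gamma$.

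The main obstacle is then an analytic Holmgren-type unique continuation: a distribution $u$ on $M$ that is flat on the real analytic submanifold $\gamma$ (in the sense that its smooth wavefront set is disjoint from $N^*\gamma$ and all normal derivatives restrict to zero on $\gamma$) and whose analytic wavefront set is disjoint from $N^*\gamma$ must vanish in an open tubular neighborhood of $\gamma$. My plan for this step is to work in local analytic coordinates $(t,y)$ straightening $\gamma$ to $\{y=0\}$: the hypothesis on $\WFa(u)$ forces $u$ locally to be the boundary value of a function holomorphic in the transverse variables $y$ in a complex tube around $\{y=0\}$, while the flatness hypothesis annihilates the full Taylor expansion of that holomorphic extension along $\gamma$; by analytic continuation the extension is identically zero, and so is its boundary value $u$. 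Real analyticity of $\gamma$ is essential here, since Alinhac--Baouendi-type examples preclude the purely smooth analogue. Covering $\gamma^\circ$ by finitely many such coordinate charts and unioning the resulting neighborhoods produces the desired open set $\calO$ on which $T_{v,w}$ vanishes for all $v,w$, completing the argument.
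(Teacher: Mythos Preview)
Your overall architecture matches the paper's: reduce to showing that the two ``pre-commutant'' sets coincide, observe that vanishing near $\gamma$ implies flatness trivially, and for the hard direction use the analytic wavefront computation from the proof of Theorem~\ref{timeliketube} together with a unique continuation result to pass from flatness on $\gamma$ to vanishing on a tubular neighbourhood. The paper does exactly this, working with the full distributions $w_{m,m'}$ on $M^{m+m'+1}$ and citing Boman's local vanishing theorem \cite{MR1194524} (and its Denjoy--Carleman generalisation \cite{MR1382568}) for the unique continuation step.

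The gap in your proposal is precisely in that step. Your claimed mechanism --- that $\WFa(u)\cap N^*\gamma=\emptyset$ forces $u$ to be the boundary value of a function holomorphic in the transverse variables $y$ in a complex tube around $\{y=0\}$ --- is not correct. In straightened coordinates $(t,y)$, the conormal $N^*\gamma$ consists of covectors $(t,0;0,\eta)$ with \emph{vanishing} $t$-component; your hypothesis therefore excludes only those. Holomorphicity of $u$ in the transverse $y$-variables in a tube would require the much stronger condition that $\WFa(u)$ contain no covector with a nonzero $y$-component at all. A simple model such as $u(t,y)=\delta(t-y)$ already shows the distinction: its analytic wavefront set is $\{(t,t;\tau,-\tau)\}$, disjoint from $N^*\gamma$, yet $u$ is certainly not holomorphic in $y$. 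The decomposition of Prop.~\ref{A3} only yields, for each individual conormal direction, a sum of boundary values from cones in the complementary half-space; these decompositions for different $\eta$ need not be compatible, so you cannot assemble them into a single holomorphic extension in a transverse tube. Boman's theorem establishes the desired vanishing by a genuinely different (and nontrivial) argument, and that is what the paper invokes.

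A secondary point: because you work with the partially smeared distributions $T_{v,w}$ on $M$, you need the resulting vanishing neighbourhood $\calO$ to be independent of $v,w$ in order to conclude $A\in\calR(\calO)'$ for a single $\calO$. The paper's choice to apply Boman to the unsmeared $w_{m,m'}$ on $\gamma^\circ\times M^{m+m'}$ sidesteps this issue, since vanishing near the product submanifold already controls all test functions in the auxiliary slots simultaneously.
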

\begin{proof}
The proof is based on the proof of Theorem \ref{timeliketube} and we will use the notation from this proof.
A bounded operator $A$ is in the commutant of $\mathcal{R}^\infty(\gamma^\circ)$ if the distribution $w_{n,m'}$ in the proof of Theorem \ref{timeliketube} is flat on $\gamma^\circ \times M^{m+m'}$. By a theorem of Boman \cite{MR1194524} (see also \cite{MR1382568} for a generalization to the Denjoy-Carleman class) flatness of the distribution implies its vanishing in an open neighborhood of 
$\gamma^\circ \times M^{m+m'}$. Hence, $A$ is in the commutant of $\mathcal{R}(\gamma^\circ)$. Of course
a distribution vanishing near $\gamma^\circ$ is flat on $\gamma^\circ$ and therefore $\mathcal{R}^\infty(\gamma^\circ)$ and $\mathcal{R}(\gamma^\circ)$ have the same commutant.
\end{proof}}

 \begin{appendix} 
 
 \section{Wavefront sets and microlocal concepts} \label{sec2}

\subsection{The wavefront set.}

As usual the space of distributions $\mathcal{D}'(\R^d)$ is the topological dual of the space of compactly supported smooth test functions $C^\infty_0(\R^d)$. We will use the formal notation
$$
 (u,\varphi) = \int_{\R^d} u(x) \varphi(x) \der x
$$
for the distributional pairing if $u \in \mathcal{D}'(\R^d)$ and $\varphi \in C^\infty_0(\R^d)$.
 The space of tempered distributions $\mathcal{S}'(\R^d)$ is the dual of the Schwartz space $\mathcal{S}(\R^d)$. Finally recall that a distribution of compact support can be paired with a test function in $C^\infty(\R^d)$ and this identifies the dual of  $C^\infty(\R^d)$ with the space of compactly supported distributions.

The Fourier transform can be used to analyse distributions locally in phase space. We recall some basic properties of the Fourier transform.
If $u \in \mathcal{D}'(\R^d)$ is a compactly supported distribution then its Fourier transform is a holomorphic function in the complex plane of uniform exponential type. Such a distribution is a smooth function if and only if its Fourier transform is decaying faster than any power. This means
$$
 u \in C^\infty_0(\R^d) \Leftrightarrow \forall N \in \mathbb{N}, \sup_{\xi \in \R^d} |\xi|^N |\hat u(\xi) | < \infty.
$$
This statement can of course be localized by saying that a general distribution $u \in \mathcal{D}'(\R^d)$ is smooth near the point $x_0$ if and only if there exists a function $\chi \in C^\infty_0(\R^d)$ with
$\chi(x_0)=1$ such that
$$
 \sup_{\xi \in \R^d} |\xi|^n |\widehat{\chi \cdot u}(\xi)| < \infty
$$
for any $n \in \mathbb{N}$.
The purpose of the cut-off function is to turn $u$ into a distribution of compact support and localize the statement at the point $x_0$.
The notion of wavefront set seeks to refine the notion of singular support, i.e. the set of points where the distribution is not smooth. It singles out the directions in which a distribution may fail to be smooth by analysing in which directions in the $\xi$ Fourier transform of $\chi \cdot u$ fails to decay faster than any power. 
\begin{definition}
 The wavefront set $\mathrm{WF}(u)$ of a distribution $u \in \mathcal{D}'(\R^d)$ is the complement of the set of points
 in $(x_0,\xi_0) \in \R^d \times \R^d \setminus \{0\}$ such that  there exists a cut-off function $\chi \in C^\infty_0(\R^d)$ with
$\chi(x_0)=1$ and an open conic neighborhood $\Gamma$ of $\xi_0$ such that
$$
 \sup_{\xi \in \Gamma} |\xi|^n |\widehat{\chi \cdot u}(\xi)| < \infty
$$
for all $n \in \mathbb{N}$.
\end{definition}
This additional asymptotic localisation of the singular support property is sometimes called microlocalisation.
One of the key observations relevant to general relativity and geometry is that the wavefront set transforms like a subset of the cotangent bundle under smooth change of coordinates. Hence, the notation of the wavefront set makes sense on smooth manifolds in the absence of a global and invariantly defined Fourier transform.
It is customary to introduce a semi-classical parameter $h>0$ as an energy scale and reformulate this in terms of the semi-classical Fourier transform $\mathcal{F}_h$, which is defined as
\begin{equation}
 (\mathcal{F}_h f)(\xi) = \frac{1}{(2 \pi h)^\frac{d}{2}} \int_{\R^d} f(x) e^{-\frac{\rmi}{h} x \xi} \der x.
\end{equation}
Therefore, $(x_0,\xi_0) \in \R^d \times \R^d \setminus \{0\}$ is not in the wavefront set of $u \in \mathcal{D}'(\R^d)$ if and only if for some cut-off function $\chi \in C^\infty_0(\R^d)$, 
with $\chi(x_0)=1$ we have for each $n \in \mathbb{N}$ a constant $C_n>0$ such that
 $$
  |\mathcal{F}_h(\chi \cdot u)(\xi)| < C_n h^n
$$
for all $n \in \mathbb{N}$, $h \in (0,1]$, and all $\xi$ in an open neighborhood of $\xi_0$. This is just a reformulation of the above, where the semi-classical parameter $h$ is being used to rescale the momentum $\xi$. This definition is sometimes more flexible and also allows to define the wavefront set of an $h$-dependent distribution. This is sometimes called the frequency set or semi-classical wavefront set $\WF_h$. This is introduced for families of distributions that are polynomially bounded in $h$. More precisely, we say an $h$-dependent family of tempered distributions $(u_h)$
is polynomially bounded if there exists a  continuous semi-norm $p: \mathcal{S}(\R^d) \to [0,\infty)$ such that for some $N>0$ we have $u_h(f) \leq h^{-N} p(f)$ for all $f \in \mathcal{S}(\R^d), h \in (0,1]$. The semi-classical wavefront set $\WF_h(u_h)$ is then defined as the complement in $\R^d \times \R^d$ of the set of points $(x_0,\xi_0)$ such that there exists a cut-off function $\chi \in C^\infty_0(\R^d)$, 
with $\chi(x_0)=1$ so that we have for each $n \in \mathbb{N}$ a constant $C_n>0$ with
 $$
  |\mathcal{F}_h(\chi \cdot u_h)(\xi)| < C_n h^n
$$
for all $n \in \mathbb{N}$, $h \in (0,1]$, and all $\xi$ in an open neighborhood of $\xi_0$. For an $h$-independent family we have $\WF_h(u) = \mathrm{supp}(u) \times 0 \cup \WF(u)$.

 \section{Analytic wavefront sets} \label{appa}

Recall from the main body of the text that we define the analytic wavefront set of a distribution $u \in \mathcal{D}'(M)$ 
as the elements in $T^*M \setminus 0$ of the microsupport of $(u_h)=(u)$, regarded as a family of distributions independent of $h$.
For the analytic wavefront set $\WFa$ there are several other equivalent definitions, which we now review. Each has their own advantages and disadvantages.  In the following we assume that $u \in \mathcal{D}'(\R^d,\calH)$ is a distribution taking values in a Hilbert space $\calH$. We will suppress the $\calH$ to keep the notations short. All integrals will need to be understood as distributional pairings, and we will use this convention without further explanation.

\subsection{The classical FBI transform}
One can use another localized version of the Fourier transform $\mathcal{T}_{a} u$, which is also sometimes referred to as the FBI transform. 
It is defined as
$$
 \mathcal{T}_{a} u(x,\xi) = \int_{\R^d} e^{-\frac{a}{2}(x-y)^2} u(y) e^{-\rmi \xi (y-x)} \der y.
$$
This only differs from its semi-classical counterpart by an irrelevant pre-factor and the lack of the extra scaling in the $\xi$-variable.
The distribution $u$ can also be recovered from the transforms $\mathcal{T}_{\xi}$ of $u$ and $x u$. This is done by the 
inversion formula (see \cite{MR1065136}*{(9.6.7)} or \cite{MR3444405}*{(3.36)} in the one dimensions case)
$$
 u(y) = \frac{1}{(2 \pi)^d} \int_{\R^d} (\mathcal{T}_{|\xi|} u)(y,\xi) \der \xi + \frac{1}{(2 \pi)^d} \int_{\R^d}  \frac{1}{|\xi|} \xi \cdot  (\mathcal{T}_{|\xi|} f)(y,\xi) \der \xi.
$$
Here $f(x)$ is the vector-valued function $f(x) = -\frac{1}{2}\rmi (x-y) u(x)$.

\begin{proposition}\label{A1}
 A vector $(x_0,\xi_0) \in \R^d \times \R^d \setminus \{0\}$ is not in the analytic wavefront set if and only if the following holds: there exists a cutoff function $\chi \in C^\infty_0(\R^d)$ which is equal to one near $x_0$ and an open conic neighborhood $\calO$ of $(x_0,\xi_0)$ in $\R^d \times \R^d \setminus \{0\}$ such that for some $C,\delta>0$ we have
 $$
  \| (\mathcal{T}_{|\xi|} (\chi u))(x, \xi ) \| \leq C e^{-\delta |\xi |}.
 $$ 
\end{proposition}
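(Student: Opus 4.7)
The plan is to compare the classical FBI transform $\mathcal{T}_{a}$ with the semi-classical FBI transform $T_h$ used to define the microsupport. A direct computation from the definitions yields the identity
$$\mathcal{T}_{|\xi|} u(x, \xi) \;=\; \alpha_{1/|\xi|}^{-1}\; T_{1/|\xi|} u\bigl(x, \xi/|\xi|\bigr), \qquad \alpha_{1/|\xi|}^{-1} = 2^{d/2} \pi^{3d/4} |\xi|^{-3d/4},$$
obtained by setting $h = 1/|\xi|$ in $T_h$ and rescaling the momentum by $|\xi|$. The prefactor is a negative power of $|\xi|$, hence bounded away from $\xi=0$. The whole argument then reduces to dictionary-translation between small-$h$ asymptotics at a fixed momentum and large-$|\xi|$ asymptotics along rays, using the fact (already asserted in the text) that $\WFa$ is conic.

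First I would reduce to the case of compactly supported $u$. Choose $\chi \in C^\infty_0(\R^d)$ with $\chi \equiv 1$ on an open neighborhood $V$ of $x_0$; since $(1-\chi)u$ vanishes on $V$, its microsupport has empty fiber over $x_0$, and so $(x_0, \xi_0) \in \WFa(u)$ if and only if $(x_0, \xi_0) \in \WFa(\chi u)$. The classical FBI condition in the proposition is already formulated with $\chi u$, so both sides of the equivalence refer to the same localized object.

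For the forward direction, assume $(x_0, \xi_0) \notin \WFa(u)$; WLOG $|\xi_0|=1$. By the microsupport definition there exist $\epsilon, \delta, C > 0$ with $\|T_h(\chi u)(x, \eta)\| \leq C e^{-\delta/h}$ on $B_\epsilon(x_0, \xi_0)$ uniformly for small $h$, and after shrinking we may assume this ball is disjoint from $\{\eta=0\}$. Consider the conic neighborhood
$\mathcal{O} = \{(x,\xi) : |x-x_0| < \epsilon/2,\ \xi/|\xi| \in U\}$, where $U \subset \{|\eta - \xi_0| < \epsilon/2\}$ is an open neighborhood of $\xi_0$ on the unit sphere. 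For $(x,\xi) \in \mathcal{O}$ with $|\xi| \geq R$, both $(x, \xi/|\xi|) \in B_\epsilon(x_0, \xi_0)$ and $h = 1/|\xi| \leq 1/R$ is small, so the identity gives
$\|\mathcal{T}_{|\xi|}(\chi u)(x,\xi)\| \leq C' |\xi|^{-3d/4} e^{-\delta|\xi|} \leq C'' e^{-\delta|\xi|}.$
For $(x,\xi) \in \mathcal{O}$ with $|\xi| \leq R$, the set is bounded in $(x,\xi)$ and $\chi u$ is compactly supported, so $\mathcal{T}_{|\xi|}(\chi u)$ is uniformly bounded there; adjusting constants and the rate yields the required exponential estimate on all of $\mathcal{O}$.

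The reverse direction inverts the same identity: from $\|\mathcal{T}_{|\xi|}(\chi u)(x,\xi)\| \leq C e^{-\delta|\xi|}$ on a conic neighborhood of $(x_0, \xi_0)$ we deduce, with $h = 1/|\xi|$ and $\eta = \xi/|\xi|$, that $\|T_h(\chi u)(x,\eta)\| \leq C''' e^{-\delta/h}$ uniformly on a Euclidean neighborhood of $(x_0, \xi_0/|\xi_0|)$ for small $h$. Hence $(x_0, \xi_0/|\xi_0|) \notin \muS(\chi u)$, and by conic invariance of $\WFa$ together with the localization reduction, $(x_0, \xi_0) \notin \WFa(u)$. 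The main technical point is the bookkeeping around converting Euclidean ball neighborhoods in $\R^d \times \R^d$ (used in the microsupport definition) into conic neighborhoods in $\R^d \times (\R^d \setminus 0)$; this is handled by the normalization $\xi \mapsto \xi/|\xi|$ combined with the conic invariance of $\WFa$, with the polynomial prefactor $|\xi|^{-3d/4}$ absorbed trivially into the exponential rate.
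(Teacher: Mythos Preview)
Your proposal is correct and follows precisely the route the paper indicates: the paper does not give a self-contained proof but simply notes that the classical FBI condition is equivalent to exponential decay of the semi-classical transform $T_h(u)(x,\xi)$ for unit vectors $\xi$ near $|\xi_0|^{-1}\xi_0$, and then cites H\"ormander \cite{MR1065136}*{Theorem 9.6.3}. Your identity $\mathcal{T}_{|\xi|}u(x,\xi)=\alpha_{1/|\xi|}^{-1}\,T_{1/|\xi|}u(x,\xi/|\xi|)$ is exactly this rescaling, and the rest of your argument (localization via $\chi$, conversion between Euclidean and conic neighborhoods, absorption of the polynomial prefactor) is the bookkeeping the paper leaves implicit in its citation.
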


The condition of Prop. \ref{A1}. is satisfied if and only if the semi-classical FBI-transform
$\TT_h(u)(x,\xi)$ is exponentially decaying in $h^{-1}$ for all unit vectors $\xi$ in an open neighborhood of $|\xi_0|^{-1} \xi_0$ in the unit sphere.  The proof of the equivalence can be found in \cite{MR1065136}*{Section 9.6, in particular Theorem 9.6.3} with a slighlty different notation. 

\subsection{H\"ormander's approach}

The following provides a criterion that is used by H\"ormander (\cite{MR1065136}) as a definition of the analytic wavefront set. This is also the approach favored in \cite{SVW} in the analysis of quantum field theory and the Reeh-Schlieder property.

\begin{proposition}\label{A2}
 A vector $(x_0,\xi_0) \in \R^d \times \R^d \setminus \{0\}$ is not in the analytic wavefront set if and only if the following holds: there exists an open neighborhood $\calO$ of $x_0$, an open conic neighborhood $\Gamma$ of $\xi_0$, and a bounded sequence $u_n$ of distributions with compact support that equals to $u$ in $\calO$ such that for some $C>0$ we have
 $$
   |\xi|^n \| \hat u_n(\xi) \| \leq C( C (n+1))^n,
 $$ 
 for all $\xi \in \Gamma$.
 \end{proposition}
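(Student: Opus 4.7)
My plan is to prove Proposition A2 by establishing equivalence with the FBI characterization in Proposition A1, which has already been related to the microsupport definition of $\WFa$.

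For the forward direction, I assume $(x_0,\xi_0) \notin \WFa(u)$ and construct the sequence $u_n$ as follows. First, using Proposition A1, I fix a cutoff $\chi_0 \in C^\infty_0(\R^d)$ equal to $1$ in a neighborhood of $x_0$ such that $\mathcal{T}_{|\xi|}(\chi_0 u)$ decays exponentially in a conic neighborhood of $(x_0,\xi_0)$. I then invoke the standard construction of almost-analytic cutoff sequences (H\"ormander Vol.~I, §1.4; also called Ehrenpreis–H\"ormander cutoffs): for a pair of open neighborhoods $\calO \Subset \calO' \Subset \{\chi_0 = 1\}$, there is a sequence $\chi_n \in C^\infty_0(\calO')$ with $\chi_n \equiv 1$ on $\calO$ and $|\partial^\alpha \chi_n| \leq (C n)^{|\alpha|}$ for $|\alpha| \leq n$, obtained as iterated convolutions of an indicator with scaled bumps. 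Setting $u_n = \chi_n u$ (which equals $\chi_n \chi_0 u$), the sequence is bounded in $\calE'(\R^d)$, compactly supported in a fixed set, and equals $u$ in $\calO$.

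The Fourier bound $|\xi|^n \|\hat u_n(\xi)\| \leq C(C(n+1))^n$ for $\xi$ in a conic neighborhood $\Gamma$ of $\xi_0$ then follows from the relation between the classical Fourier transform and the FBI transform, together with the exponential FBI estimate and the explicit derivative bounds on $\chi_n$. Concretely, writing
\begin{equation*}
  \hat u_n(\xi) = \Big(\tfrac{|\xi|}{2\pi}\Big)^{d/2} \int e^{-ix\xi}(\mathcal{T}_{|\xi|}(\chi_n\chi_0 u))(x,\xi)\,dx,
\end{equation*}
splitting the $x$-integral between a neighborhood of $x_0$ (where the FBI estimate applies directly) and its complement (where the Gaussian kernel of the FBI transform gives fast decay after $n$-fold integration by parts against $\chi_n$), and using the derivative bounds on $\chi_n$, one extracts the Gevrey-type estimate in the required form.

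For the reverse direction, I assume the $u_n$ exist and derive exponential FBI decay to invoke Proposition A1. Fix $\chi \in C^\infty_0(\calO)$ equal to $1$ near $x_0$. Since $\chi u = \chi u_n$ for every $n$, I compute
\begin{equation*}
 (\mathcal{T}_{|\xi|}(\chi u))(x,\xi) = c_d\, e^{ix\xi} |\xi|^{-d/2} \int \widehat{\chi u_n}(\eta) e^{ix(\eta - \xi)} e^{-(\eta - \xi)^2/(2|\xi|)} \,d\eta,
\end{equation*}
and express $\widehat{\chi u_n} = (2\pi)^{-d}\hat\chi * \hat u_n$. The $\eta$-integral is decomposed: for $\eta$ in a smaller subcone of $\Gamma$, I transfer the hypothesis bound on $\hat u_n$ to $\widehat{\chi u_n}$ using that $\hat\chi$ is Schwartz and concentrated near zero, giving $|\widehat{\chi u_n}(\eta)| \lesssim (C(n+1)/|\eta|)^n$; the contribution from $\eta$ outside this subcone is controlled by the Gaussian kernel $e^{-(\eta-\xi)^2/(2|\xi|)}$, which is exponentially small in $|\xi|$ uniformly for $(x,\xi)$ in a sufficiently small conic neighborhood of $(x_0,\xi_0)$. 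Finally, optimizing $n \asymp |\xi|/(Ce)$ in the first piece and applying Stirling converts the Gevrey bound $(Cn/|\xi|)^n$ into exponential decay $e^{-\delta|\xi|}$.

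The main obstacle is the bookkeeping in the convolution $\hat\chi * \hat u_n$: the hypothesis bounds $\hat u_n$ only on $\Gamma$, and one must show that this bound survives convolution with the fast-decaying but not analytic $\hat\chi$, uniformly on a smaller subcone and with constants $n$-dependent in the correct way to permit the Stirling optimization. This is a classical but delicate estimate; once handled, the equivalence with Proposition A1 (and hence the microsupport definition of $\WFa$) follows.
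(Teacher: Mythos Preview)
The paper does not prove Proposition~A2 at all: like Propositions~A1 and~A3, it is stated as a known equivalent characterization and referred to H\"ormander's book \cite{MR1065136}. So there is no ``paper's own proof'' to compare against; the equivalence of the FBI, sequence, and boundary-value descriptions of $\WFa$ is treated as background literature.

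That said, here is an assessment of your sketch against the standard argument. Your reverse direction (H\"ormander sequence $\Rightarrow$ FBI decay) is correct and is exactly the classical route: write the FBI transform as a Gaussian convolution in Fourier space, transfer the bound on $\hat u_n$ through the convolution with $\hat\chi$ on a slightly smaller cone, and optimize $n \sim \delta|\xi|$ via Stirling. The ``bookkeeping'' obstacle you flag is real but standard, and your description of it is accurate.

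Your forward direction is where the sketch is thin. The formula expressing $\hat u_n(\xi)$ as an $x$-integral of the FBI transform is fine, and the contribution from $x$ near $x_0$ is indeed handled by the FBI hypothesis (after checking that replacing $\chi_0$ by $\chi_n$ costs only an exponentially small error, via the Gaussian tail). But the phrase ``$n$-fold integration by parts against $\chi_n$'' for the region away from $x_0$ does not parse: the integration variable there is the FBI output variable $x$, while $\chi_n$ is a function of the input variable $y$, so there is nothing to integrate by parts against. In fact the special derivative bounds on $\chi_n$ do not enter the FBI estimate for $x$ near $\supp(\chi_n)$ but away from the $x$-neighborhood $U$ of the hypothesis, and that region is genuinely uncontrolled by your argument. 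The way H\"ormander actually closes this direction is not FBI $\Rightarrow$ sequence directly, but rather FBI $\Rightarrow$ boundary-value representation (via the inversion formula, essentially Theorem~8.4.11 and~8.4.15) $\Rightarrow$ sequence condition (via Cauchy estimates on the holomorphic pieces, Lemma~8.4.4). If you want a self-contained proof you should route the forward implication through Proposition~A3 rather than attempt a direct FBI-to-sequence argument.
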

 One can choose the sequence $u_N$ always in such a way that $u_n = \chi_n u$, where $\chi_n$ is a sequence of suitably chosen cut-off functions that are equal to one on $\calO$ and with the property that for each multi-index $\alpha$ there exists a constant $C_\alpha>0$ such that
$$
 |(\partial^{\alpha + \beta} \chi_n)(x) | \leq C_\alpha( C_\alpha (n+1))^{|\beta|}.
$$
Such a sequence always exists. 
The sequence $\chi_n$ plays a similar role here to that of the Gaussian factor in the FBI-transform and is used to provide an $n$-dependent localisation.

\subsection{Boundary values of analytic functions}

Another characterisation of the analytic wavefront set is based on boundary values of analytic functions.
We state this here as in \cite{MR3444405}*{Th. 3.38} but refer to \cite{MR1065136}*{Section 8.4, and Th. 8.4.15} for proofs.

 \begin{proposition}\label{A3}
 A vector $(x_0,\xi_0) \in \R^d \times \R^d \setminus \{0\}$ is not in the analytic wavefront set if and only if the following holds: there exists an open complex neighborhood $W$ of $x_0$ in $\C^d$, and open convex convex cones $\Gamma_1,\ldots,\Gamma_k$ in $\{y \in \R^d \mid y \cdot \xi_0 < 0\}$,  functions $u_1,\ldots,u_k$, where $u_j$ is holomorphic in $W \cap \R^d + \rmi \; \Gamma_j$ so that $u$ can be written as the sum of distributional boundary values of the $u_j$. 
 \end{proposition}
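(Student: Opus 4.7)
The plan is to prove both directions by using the FBI-transform characterization of Prop.~\ref{A1} as the working definition of $\WFa$, and to pass between wedge holomorphy and exponential decay of the FBI transform via contour shifts and partitions of unity in Fourier space.

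For the sufficient direction ($\Leftarrow$), suppose $u = \sum_{j=1}^k u_j^\flat$ near $x_0$, where $u_j^\flat$ is the distributional boundary value of $u_j \in \mathcal{O}((W\cap \R^d) + \rmi\,\Gamma_j)$ and $\Gamma_j \subset \{y : y\cdot\xi_0 < 0\}$. Pick a cutoff $\chi \in C^\infty_0(W)$ with $\chi \equiv 1$ near $x_0$ and pick $s_j \in \Gamma_j$ with $s_j\cdot\xi_0 < 0$. To estimate $\mathcal{T}_{|\xi|}(\chi\, u_j^\flat)(x_0,\xi)$, represent the boundary value as $\lim_{\varepsilon\to 0^+} u_j(\,\cdot\, + \rmi\varepsilon s_j)$ and translate the contour $y \mapsto y - \rmi\varepsilon s_j$ inside the region of analyticity of $u_j$. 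The Gaussian weight and the oscillatory factor pick up $\exp(\varepsilon\, s_j\cdot\xi + \tfrac{\varepsilon^2}{2}|s_j|^2|\xi| + \ldots)$; since $s_j\cdot\xi \leq -c|\xi|$ on a conic neighborhood of $\xi_0$, choosing $\varepsilon$ small but fixed yields $|\mathcal{T}_{|\xi|}(\chi u_j^\flat)(x_0,\xi)| \leq C e^{-\delta |\xi|}$ in that cone, hence $(x_0,\xi_0)\notin \WFa(u_j^\flat)$, and summing over $j$ gives the claim.

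For the necessary direction ($\Rightarrow$), assume $(x_0,\xi_0)\notin\WFa(u)$ and apply the inversion formula \eqref{inversion} to $\chi u$ for an appropriate cutoff $\chi$. Choose a finite open conic cover $V_0, V_1, \ldots, V_k$ of $\mathbb{S}^{d-1}$ where $V_0$ is a small neighborhood of $\xi_0/|\xi_0|$ and each remaining $V_j$ is contained in an open half-space $\{\xi : y_j\cdot\xi > 0\}$ for some $y_j$ with $y_j \cdot \xi_0 < 0$. Correspondingly split the directional integral in \eqref{inversion} into $v_0 + v_1 + \cdots + v_k$. The $V_0$ contribution is real-analytic near $x_0$ because the FBI transform is exponentially small there by hypothesis, so $v_0$ can be absorbed into any $u_j$. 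For each remaining piece, the amplitude is effectively Fourier-supported in a closed convex cone $\overline{\Gamma_j^*}$ with $\Gamma_j^* \subset \{y_j\cdot \xi > 0\}$; by the classical Paley--Wiener/Martineau principle each such piece admits a holomorphic extension to the tube $\R^d + \rmi\,\Gamma_j$, where $\Gamma_j$ is the open dual cone of $\overline{\Gamma_j^*}$. The construction of the $y_j$ ensures $\Gamma_j \subset \{y : y\cdot\xi_0 < 0\}$ as required, and $\chi u$ (hence $u$ on a smaller neighborhood of $x_0$) is recovered as the sum of the boundary values.

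The main obstacle is the tension between the $C^\infty_0$ cutoff $\chi$ and the demand for holomorphy: in the $(\Leftarrow)$ step the integrand contains a holomorphic factor $u_j$ multiplied by a merely smooth $\chi$, so the naive contour shift is not justified. The standard fix is to replace $\chi$ by a Hörmander-type sequence of cutoffs $\chi_n$ whose derivatives grow like $(Cn)^{|\alpha|}$ (as in Prop.~\ref{A2}) or, equivalently, to use almost-analytic extensions of $\chi$ and control the $\overline\partial$-error by an exponentially small term. A parallel care is needed in $(\Rightarrow)$: one must ensure that the conic partition of unity and the associated amplitudes retain enough regularity and decay for the wedge extension of each $v_j$ to be genuinely holomorphic with a distributional boundary value, and that the whole decomposition matches $u$ modulo a real-analytic remainder. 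Once these technicalities are handled, the equivalence follows.
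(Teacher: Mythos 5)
You should first note that the paper does not actually prove Proposition \ref{A3}: it is stated with references (Folland, Th.\ 3.38; H\"ormander, Sect.\ 8.4, Th.\ 8.4.15), so there is no in-paper argument to compare with. Your outline is essentially the classical proof of that cited theorem transplanted into the FBI language of Prop.\ \ref{A1} (contour deformation into the tube for sufficiency; a conic splitting of the inversion formula \eqref{inversion} plus tube holomorphy of each piece for necessity), and that strategy is sound.

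Two of the justifications you rely on, however, fail as written. First, in the $(\Leftarrow)$ direction, an almost-analytic extension of a $C^\infty_0$ cutoff has $\overline\partial$-error vanishing only to infinite order at the reals, never exponentially small (even Gevrey cutoffs give at best $e^{-c/|\Im z|}$, which after balancing against the gain $e^{-c' t|\xi|}$ from the shift yields only $e^{-c''\sqrt{|\xi|}}$, not the $e^{-\delta|\xi|}$ needed for $\WFa$). The standard repair in the FBI setting is geometric, not analytic: deform by $y\mapsto y+\rmi\varepsilon\theta(y)s_j$ with $\theta$ equal to one near $x_0$ and supported where $\chi\equiv1$; wherever $\theta$ or $\chi$ is non-constant one is at distance $\geq c>0$ from $x$, and the Gaussian weight $e^{-\frac{|\xi|}{2}(x-y)^2}$ alone supplies $e^{-\delta|\xi|}$ there (your alternative, H\"ormander's sequences $\chi_n$ with $n\sim|\xi|$, does work but amounts to arguing via Prop.\ \ref{A2}). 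The estimate must also be made uniform for $x$ near $x_0$ and in the boundary-value regularization, not only at $x=x_0$. Second, in the $(\Rightarrow)$ direction, the pieces $v_j$ cut out of \eqref{inversion} are not Fourier-supported in cones — the FBI kernel localizes frequencies only up to Gaussian tails — so Paley--Wiener/Martineau does not apply to them; holomorphy must be read off directly from the kernel: replacing $x$ by $x+\rmi w$ multiplies the integrand's modulus by $e^{(|w|^2/2-w\cdot\xi)/h}\leq e^{-c|w|/h}$ for $w$ in a small cone about $y_j$, $|w|$ small, uniformly over $\xi\in V_j$, and the same complexification together with the hypothesis gives analyticity of the $V_0$ piece. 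Relatedly, $\Gamma_j$ must be a small convex cone about $y_j$; the full dual cone of the cone spanned by $V_j$ is generally not contained in $\{y\cdot\xi_0<0\}$, so your containment claim holds only after this shrinking. With these repairs your argument goes through and coincides with the proof cited by the paper.
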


For tempered distributions on $\R^d$ there is a direct way to achieve the above decomposition and give an explicit criterion for the analytic wavefront set. To understand this let us discuss this in one dimension first:\\
The function $K(z) = \frac{1}{4} \mathrm{sech}(\frac{\pi}{2} z) = \left( 4 \cosh(\frac{\pi}{2} z) \right)^{-1}$ is holomorphic in the strip $|\Im(z)|<1$ and decays exponentially along the real axis. We have
$\int K(x + \rmi y) \der x = \pi$ for all $|y|<1$. Moreover, the family $\frac{1}{\pi} \Re K(x + \rmi y)$ is a $\delta$-family as $y \to \pm 1$. Given a tempered distribution $u \in \mathcal{S}'(\R)$ we can convolve with $K$ to obtain a function
$$
 K_u(z) = \int K(z-x) u(x) \der x = (K*u)(z)
$$
which is holomorphic in the strip $\Im(z)<1$. Moreover, 
$$
 u(x) = \lim_{\epsilon \searrow 0}\left( K_u(x + \rmi - \rmi \epsilon) + K_u(x - \rmi + \rmi \epsilon) \right) = u_+ + u_-.
$$
This decomposes $u$ as the sum of two distributions, that are one sided boundary values of holomorphic functions.
Therefore, if $\xi>0$ then $(x,\xi) \notin \WFa(u)$ if and only if $u_+$ is real analytic near $x$. Similarly, if
$\xi<0$ then $(x,\xi) \notin \WFa(u)$ if and only if $u_-$ is real analytic near $x$.

This construction can be generalized to higher dimensions as follows. One defines
\begin{align*}
 I(\xi) &= \int_{|\omega|=1} e^{-\langle \omega, \xi \rangle} \der \omega,\\
 K(z) &= \frac{1}{(2 \pi)^d} \int_{\R^d} e^{ \rmi \langle \xi, z \rangle} / I(\xi) \der \xi.
\end{align*}
The analogue of the above decomposition is
$$
 u(x) = \int_{|y|=1} K_u(x + \rmi y) \der y,
$$
where as before $K_u = K * u$. The function $K_u(z)$ is holomorphic in $|\Im(z)| <1$. 

\begin{proposition} Let $u \in \mathcal{S}'(\R^d)$. Then, given a direction $y \in \R^d$ with $|y|=1$,  we have that $(x,y) \notin \WFa(u)$ if and only if $K_u$ is complex analytic in $x+ \rmi y$ near $(x,y)$.
\end{proposition}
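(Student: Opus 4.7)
The plan is to leverage the fact that for each $y' \in \mathbb{S}^{d-1}$, the distribution $u_{y'}(x) := K_u(x+\rmi y')$ is the distributional boundary value of $K_u$ (which is holomorphic on $|\Im z|<1$) at the boundary point $\rmi y'$. After the translation $w = z - \rmi y'$, the function $w \mapsto K_u(w+\rmi y')$ is holomorphic in the local cone $\{w : \Im w \cdot y' < 0\}$ near the origin, so by Prop.~\ref{A3} the boundary value satisfies $\WFa(u_{y'}) \subset \Gamma(y')^*$ where $\Gamma(y') = \{\eta : \eta \cdot y' < 0\}$. An elementary cone-duality computation gives $\Gamma(y')^* = \{-s y' : s \geq 0\}$, a single ray whose direction depends continuously on $y'$. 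This identification of the wavefront ray of each slice with a single point on the sphere is the essential geometric input beyond Prop.~\ref{A3}.

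For the implication ``$K_u$ complex analytic near $x+\rmi y \Rightarrow (x,y) \notin \WFa(u)$'' I would start from the representation $u = \int_{|y'|=1} u_{y'}\, \der y'$ and split using a smooth cutoff $\chi$ on the sphere supported in a small neighborhood of the direction whose associated wavefront ray points to $y$. The piece integrated against $\chi$ is real analytic near $x$, since $K_u(x'+\rmi y')$ is jointly holomorphic in $(x',y')$ in a neighborhood of $(x,y)$ by hypothesis, so its local $\WFa$ near $x$ is empty. The complementary piece is a superposition of boundary values whose individual wavefront rays, parametrised by $y'$ in the support of $1-\chi$, lie uniformly outside an open conic neighborhood of $y$. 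Adding the two contributions and using that $\WFa$ of a sum is contained in the union excludes the direction $y$ from $\WFa(u)$ at $x$.

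For the converse I would apply Prop.~\ref{A3} to obtain, locally near $x$, a decomposition $u = \sum_{j=1}^k u_j + r$ where $r$ is real analytic near $x$ and each $u_j$ is the boundary value of $F_j$ holomorphic in $\R^d + \rmi \Gamma_j$, with open cones $\Gamma_j \subset \{\eta : \eta \cdot y < 0\}$ (with the sign matched to the conventions used in the definition of $\WFa$). After multiplying by a smooth spatial cutoff to make each $u_j$ compactly supported modulo a real-analytic error, I would analyze $K*u_j$ through its Fourier representation $\widehat{K*u_j}(\xi) = \hat K(\xi)\hat{u_j}(\xi)$. The Laplace-method asymptotics $1/I(\xi) \sim c |\xi|^{-(d-1)/2} e^{-|\xi|}$ for $\hat K$ combine with the polynomial boundedness of $\hat{u_j}$ and its boundary-value concentration on $\Gamma_j^*$ to make the Fourier inversion integral converge for $\Im z$ in a complex neighborhood of $y$ extending past the unit sphere at $y$, producing a holomorphic extension of each $K*u_j$ across $x+\rmi y$; summing over $j$ and absorbing $r$ gives the desired analyticity of $K_u$.

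The main obstacle is the quantitative Fourier-decay estimate in the converse direction: verifying that the exponential decay $e^{-|\xi|}$ of $\hat K$ dominates the growth $e^{-\xi \cdot \Im z}$ uniformly on $\Gamma_j^*$ for $\Im z$ slightly past the unit sphere at $y$. Since the decay of $\hat K$ is tight precisely along the direction $-y$, one needs $\Gamma_j^*$ to be bounded away from $-y$; by cone duality this is automatic from $\Gamma_j \subset \{\eta \cdot y < 0\}$, but turning the heuristic into a uniform estimate, and patching together the contributions from the finitely many $u_j$ without losing the neighborhood size reached past the boundary sphere, is the technical heart of the argument.
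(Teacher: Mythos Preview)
The paper does not give its own proof of this statement; it simply cites H\"ormander, \cite{MR1065136}*{Theorem 8.4.11}. Your outline is in the spirit of that argument, and the forward direction is essentially right modulo two points. First, a sign: with $\Gamma(y')=\{\eta:\eta\cdot y'<0\}$ the dual cone in the convention of Prop.~\ref{A3} is $\{s y':s\geq 0\}$, not $\{-s y':s\geq 0\}$, so the cutoff $\chi$ should be centred at $y$ itself. Second, the complementary piece $\int(1-\chi)(y')\,u_{y'}\,\der y'$ is a continuous superposition, and the rule $\WFa(\text{sum})\subset\bigcup\WFa$ does not pass to integrals without uniformity. The clean fix is to partition $\mathbb{S}^{d-1}\setminus\{\chi>0\}$ into finitely many small caps $S_k$; for each cap the intersection $\bigcap_{y'\in S_k}\{v:v\cdot y'<0\}$ contains an open cone $C_k$, so $\int_{S_k}K_u(\cdot+\rmi y')\,\der y'$ is a genuine boundary value from $C_k$ and Prop.~\ref{A3} applies, with dual cone bounded away from $y$.

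The converse has a genuine gap as written. The decomposition $u=\sum u_j+r$ from Prop.~\ref{A3} is only \emph{local}: each $u_j$ is defined on a neighbourhood $W$ of $x$, so $\hat u_j$ makes no sense until you cut off. Your phrase ``modulo a real-analytic error'' is the problem: multiplying by a smooth compactly supported $\psi$ introduces analytic singularities along $\supp\der\psi$ in \emph{every} direction, so $\widehat{\psi u_j}$ is merely polynomially bounded on all of $\R^d$, with no residual concentration on $\Gamma_j^*$. Since $1/I(\xi)\sim c\,|\xi|^{(d-1)/2}e^{-|\xi|}$ is exactly critical at $|\Im z|=1$, you then have zero margin to push past the unit sphere and the integral diverges precisely in the dangerous direction $-y$. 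H\"ormander's actual argument does not run purely on the Fourier side: it uses that the kernel $K(z)$ itself extends holomorphically across $|\Im z|=1$ away from a specific set, so that the contribution of $(1-\psi)u$ to $K_u$ is already holomorphic near $x+\rmi y$, and the remaining local piece is handled by a contour deformation exploiting the holomorphy of $F_j$ on $W+\rmi\Gamma_j$ rather than by a global Fourier estimate. Your heuristic identifies the right exponential competition but does not by itself close this localisation gap.
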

 
The above statements and proofs can be found in \cite{MR1065136}*{Theorem 8.4.11}.

 \section{Computational rules for the analytic wavefront set} \label{comprules}
 
Here we summarize some computational rules for analytic wavefront sets. We will be mostly concerned with distributions defined on an arbitrary analytic manifold $M$, but occasionally will also encounter different dimensionalities. In the latter case
we let $M_1, M_2$ be different analytic manifolds of dimensions $d_1$ and $d_2$, respectively.

We start with some simple rules that are easy to see directly from the definitions. Given $u,v \in \mathcal{D}'(M)$,
$$
 \WFa(u + v) \subseteq  \WFa(u) \cup \WFa(v). 
$$
Given $u_1 \in \mathcal{D}'(M_1)$ and $u_2 \in \mathcal{D}'(M_2)$  the tensor product 
$u_1 \otimes u_2$ is a distribution on $M_1 \times M_2$. We have
$$
  \WFa(u_1 \otimes u_2) \subseteq (\WFa(u_1) \times 0) \cup (0 \times \WFa(u_2)) \cup (\WFa(u_1) \times \WFa(u_2)).
$$
Given a smooth map $f: M_1 \to M_2$ the conormal $N_f$ of $f$ is defined as the set of $(x,\xi) \in T^* M_2$
such that $x$ is in the range of $f$ and the pull-back $f^*(\xi)$ vanishes.

Pull-backs and products of distributions may not necessarily be well defined. The wavefront set provides a useful criterion for pull-backs and products to exist in a reasonable way.
If $N_f \cap \mathrm{WF}(u) = \emptyset$ then the pull-back $f^* u$ exists. If $f$ is real analytic then
$$
 \WFa(f^* u) \subseteq f^*(\WFa(u)).
$$
This means the wavefront set has good functorial properties under analytic maps. The condition $N_f \cap \mathrm{WF}(u) = \emptyset$ simply makes sure that $f^*(\mathrm{WF}(u))$ does not intersect the zero section.
Note that this condition is always satisfied if $f$ is an analytic submersion.

The product $u v$ of two distributions $u,v \in \mathcal{D'}(M)$ is not always well defined. There is a similar restriction on the wavefront sets to ensure existence of a product. Namely, the product is well defined as a distribution if
$\mathrm{WF}(u) + \mathrm{WF}(v)$ does not intersect the zero section in $T^*M$. In that case
\begin{align*}
 \mathrm{WF}(u v) &\subseteq \mathrm{WF}(u) + \mathrm{WF}(v),\\
 \WFa(u v) &\subseteq \WFa(u) + \WFa(v).
\end{align*}
Again, the restriction makes sure there are no zero-vectors appearing on the right hand side of this formula.
The formula for the product is actually a special case of the formula for pull-back and tensor products.
This is because the product $u v$ can be seen as the pull back of $u \otimes v$ under the diagonal embedding
$x \mapsto (x,x)$, whose co-normal consists of covectors of the form $(x,\xi,x,-\xi)$.

If $P$ is a partial differential operator on $M$ with analytic coefficients and $P u = f$, then
$$
 \WFa(u) \subseteq \WFa(f) \cup \mathrm{char}(P),
$$
where $\mathrm{char}(P) \subset T^*M \setminus 0$ is the characteristic set of $P$, i.e. the zero set of the principal symbol of $P$.

The above statements are summaries of results that can be found in H\"ormander's book \cite{MR1065136} in Sections 8 and 9.

Another operation is to formally integrate with respect to a subset of variables or smear a subset of variables with respect to test functions. If $u \in \mathcal{D}'(M_1 \times M_2)$ and $f \in C^\infty_0(M_2)$ we can define a distribution
$u_1 \in \mathcal{D}'(M_1)$ by $u_1(\cdot) = u(\cdot \otimes f)$. Formally one can write this as
$$
 u_1(x) = \int_{M_2} u(x,y) f(y) \der y.
$$
We have then
$$
 \mathrm{WF}(u_1) \subseteq \{(x,\xi) \in T^*M_1 \setminus 0 \mid (x,\xi,y,0) \in \mathrm{WF}(u) \textrm{ for some } y \in M_2 \}.
$$
Such a statement is certainly not correct for the analytic wavefront set when $f$ is a general compactly supported test function. When $u$ is compactly supported the above holds for the analytic wavefront set in case $f$ is a real analytic test function  \cite{MR1065136}*{Th. 8.5.4}. 
This statement is however not true in the context of tempered distributions on $\R^d$.
Indeed, the function
$$
 u(x,y) = \frac{1}{x^2 + \frac{1}{y^2+1}}
$$
defines a tempered distribution. Since it is a real analytic function its analytic wavefront set is empty.
Pairing in the $y$-variable with the real analytic Gaussian $e^{-\frac{1}{2}y^2}$ gives the function
$$
 g(x) = \int_\R \frac{1}{x^2 + \frac{1}{y^2+1}} e^{-\frac{1}{2}y^2} \der y.
$$
The derivatives at zero are therefore given by
\begin{align*}
 g^{(k)}(0) &=  -\frac{1}{2} \int_{\R} e^{-\frac{y^2}{2}} \sqrt{-y^2-1} k!
   \left(\left(-y^2-1\right)^{\frac{k+1}{2}}-\left(-\sqrt{-y^2-1}\right)^{k+1}\right) \der y\\
   &=  \sqrt{\pi }  \cos \left(\frac{\pi 
   k}{2}\right) \Gamma (k+1)\, U\left(\frac{1}{2},\frac{k+5}{2},\frac{1}{2}\right),
\end{align*}
where $U$ is the Tricomi confluent hypergeometric function. The radius of convergence of the Taylor series at zero is therefore zero and the function is not real analytic. The mechanism in this example is that the radius of convergence of the Taylor series in the $x$-variable is not positive uniformly in the $y$-variable.

 \section{Unique continuation and the microlocal Holmgren uniqueness theorem}
 
 Let $S \subset M$ be a codimension one hypersurface. If $x \in S$ then there exists a small neighborhood $\U$
 of $x$ such that $S$ divides $\U$ into two connected components $\U_-$ and  $\U_+$.
 For a class of distributions $\V \subset \mathcal{D}'(M)$ we say that we have unique continuation across $S$ in $\U$ if whenever $u \in \V$ vanishes in $\U_-$ it also vanishes in a neighborhood of $S$ in $\U$. 
We say unique continuation holds across $S$ if for every point $x \in S$ there exists a neighborhood $\U$ such that
unique continuation holds across $S$ in $\U$. This is clearly a local definition.

If unique continuation holds across $S$ for distributions $u \in \V$ this means that the support of $u$ cannot touch $S$ from only one side. The following two propositions follow from the very general statement of \cite{MR1065136}*{Th. 8.5.6'}, but we discuss them here separately. 
The first is a microlocal generalisation of Holmgren's uniqueness theorem for partial differential equations with analytic coefficients.

\begin{proposition} \label{ucphyp}
 Suppose that $S \subset M$ is a codimension one hypersurface. Then unique continuation holds across $S$ for all distributions whose analytic wavefront set $\WFa$ does not intersect the conormal bundle of $S$ in $M$.
\end{proposition}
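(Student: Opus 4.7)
The plan is to localize around a point of $S$ and then combine the boundary-value characterization of $\WFa$ from Proposition~\ref{A3} with the edge-of-the-wedge theorem to extract real analyticity of $u$ transversely across $S$; the one-sided vanishing will then force vanishing in a full neighborhood.

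First I would fix $x_0 \in S$ and choose a real analytic coordinate chart around $x_0$ in which $x_0 = 0$ and $S = \{x_d = 0\}$. Since $\WFa$ transforms as a subset of $T^*M$ under analytic coordinate changes, the hypothesis at $x_0$ becomes $(0, \pm dx_d) \notin \WFa(u)$. Multiplying $u$ by a smooth cutoff equal to $1$ near $0$ reduces to the case where $u$ has compact support, without affecting $\WFa(u)$ in a small neighborhood of $0$. The assumption to use is then that $u$ vanishes on $\{x_d < 0\}$ inside some neighborhood $\U$ of $0$.

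Next I would apply Proposition~\ref{A3} at the covector $(0, dx_d)$: there exist a complex neighborhood $W$ of $0$, open convex cones $\Gamma_1, \ldots, \Gamma_k \subset \{y \in \R^d : y_d < 0\}$ and holomorphic functions $F_j$ on $(W \cap \R^d) + \rmi\Gamma_j$ such that near $0$ we have $u = \sum_j \mathrm{bv}(F_j)$. Applying the same proposition at $(0, -dx_d)$ gives a second decomposition with cones $\tilde\Gamma_\ell \subset \{y_d > 0\}$. Since $u$ vanishes on $\{x_d < 0\} \cap \U$, a version of the edge-of-the-wedge theorem applied to the two complementary wedge decompositions shows that $u$ extends to a function holomorphic in a full complex neighborhood of $0 \in \C^d$; equivalently, $u$ is real analytic near $0$. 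Because it already vanishes on the open set $\{x_d < 0\} \cap \U$, its real analyticity forces $u \equiv 0$ in a neighborhood of $0$.

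The hard part will be the edge-of-the-wedge step: combining the two one-sided wedge representations into a single holomorphic extension across $S$ requires care with the geometry of the cones $\Gamma_j, \tilde\Gamma_\ell$ and with the distributional nature of the boundary values. An alternative route, which is effectively how H\"ormander proves his Theorem~8.5.6' in \cite{MR1065136}, is to work directly at the level of the FBI transform. The function $e^{\xi^2/(2h)} (\TT_h u)(x, \xi)$ is holomorphic in $z = x - \rmi\xi$, the hypothesis on the conormal yields exponential smallness of $\TT_h u$ in conic neighborhoods of $\pm dx_d$, and the support condition on $u$ yields complementary smallness for $x_d < 0$; a Phragm\'en--Lindel\"of argument in the complex variable $z$ then propagates the exponential smallness to a full neighborhood of $0$, yielding the desired analyticity. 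Either route reduces the problem to the classical principle that a real analytic function vanishing on an open set must vanish throughout its connected domain.
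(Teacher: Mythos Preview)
The paper does not give its own proof of this proposition; it simply records that it is a special case of H\"ormander's Theorem~8.5.6$'$ in \cite{MR1065136}. Your second sketch, via the holomorphy of $e^{\xi^2/(2h)}(\TT_h u)(x,\xi)$ in $z=x-\rmi\xi$ together with a Phragm\'en--Lindel\"of/contour-deformation argument, is essentially H\"ormander's actual proof and is the right route.

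Your first route through Proposition~\ref{A3} and an edge-of-the-wedge argument is not as direct as the write-up suggests, and the gap is more than cosmetic. Two separate applications of Proposition~\ref{A3} produce two \emph{independent} decompositions $u=\sum_j \mathrm{bv}(F_j)=\sum_\ell \mathrm{bv}(G_\ell)$ with the cones on opposite sides of $\{y_d=0\}$. The classical edge-of-the-wedge theorem, however, takes as input a \emph{single} pair of holomorphic functions from opposite wedges whose boundary values agree on a common real open set; it does not apply to two unrelated sums that merely happen to represent the same distribution. Turning your situation into a genuine edge-of-the-wedge configuration requires a multi-cone version (\`a la Martineau/Bros--Epstein) together with the one-sided vanishing, and carrying that out is essentially equivalent to proving H\"ormander's Theorem~8.5.6$'$ from scratch. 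In particular, the intermediate claim ``$u$ is real analytic near $0$'' is \emph{not} a consequence of $(0,\pm dx_d)\notin\WFa(u)$ alone (other covectors at $0$ are not excluded); it genuinely needs the support condition, and extracting it is the whole content of the proposition rather than a preliminary step.

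One further remark: H\"ormander's theorem is in fact stronger than what you invoke---it suffices that \emph{one} of the two conormal directions $(x_0,\pm dx_d)$ be absent from $\WFa(u)$, not both. The paper's hypothesis (empty intersection with the full conormal bundle) is thus slightly more than necessary.
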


In particular this implies that in case $M$ is a spacetime we have unique continuation across all timelike hypersurfaces for the class of distributions whose analytic wavefront set is contained in the light-cone. If a distribution contains only lightlike covectors in its analytic wavefront set we have in addition unique continuation across any spacelike hypersurface. 

The second is a microlocal version of a form of the edge of the wedge theorem.\begin{proposition} \label{ucp}
 Unique continuation holds across any hypersurface for the class of distributions $u \in \mathcal{D}'(M)$ satisfying
 $\WFa(u) \cap -\WFa(u) = \emptyset$. In particular, if $M$ is connected and such a $u \in \mathcal{D}'(M), \WFa(u) \cap -\WFa(u) = \emptyset$ vanishes on a non-empty open subset of $M$, then $u$ must vanish identically.
\end{proposition}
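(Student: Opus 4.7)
The plan is to deduce both assertions from the boundary-value characterization of the analytic wavefront set (Proposition \ref{A3}), exploiting the fact that the hypothesis $\WFa(u)\cap -\WFa(u)=\emptyset$ forces $u$ to be representable locally as a boundary value of holomorphic functions from cones lying in a common open half-space. Once this representation is in place, the identity theorem for holomorphic functions, combined with a short connectedness argument, yields the result.

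The central local step is: if $u$ vanishes on an open set $U\subset M$ and $x_0\in\overline{U}$, then $u$ vanishes in a full open neighborhood of $x_0$. To prove it, work in analytic coordinates near $x_0$. The fiber $\WFa(u)|_{x_0}$ is a closed conic subset of $T_{x_0}^*M\setminus 0$ containing no antipodal pair, so it is contained in an open half-space $\{\xi:v\cdot\xi>0\}$ for some nonzero $v$; by closedness of $\WFa(u)$, after shrinking the chart we may assume this containment holds uniformly over a neighborhood $W$ of $x_0$. Applying Proposition \ref{A3} with the role of $\xi_0$ played by $-v$ (which is not in $\WFa(u)|_{x_0}$), one obtains on a smaller neighborhood a representation
$$
u=\sum_{j=1}^{k}u_j,
$$
where each $u_j$ is the distributional boundary value of a function $F_j$ holomorphic on $W\cap\R^d+i\Gamma_j$ with $\Gamma_j$ an open convex cone in $\{y\in\R^d:v\cdot y>0\}$. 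Passing to the envelope of holomorphy of $\bigcup_j(W\cap\R^d+i\Gamma_j)$---equivalently, invoking the multi-cone edge-of-the-wedge theorem for cones sharing a common open half-space---one may reduce this to a single boundary-value representation: there exist an open convex cone $\Gamma\subset\{v\cdot y>0\}$ and a function $F$ holomorphic on $W\cap\R^d+i\Gamma$ whose distributional boundary value on $W\cap\R^d$ equals $u$.

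Since $u\equiv 0$ on $U$ and $x_0\in\overline{U}$, the set $U\cap W$ is a nonempty open subset of $\R^d$ on which the distributional boundary value of $F$ vanishes. By the standard Schwarz reflection principle for distributional boundary values of holomorphic functions, $F$ extends holomorphically across $U\cap W$ to a connected open subset of $\C^d$ containing both the tube $W\cap\R^d+i\Gamma$ and the real set $U\cap W$, with the extension vanishing on $U\cap W$. The identity theorem on a connected domain then forces $F\equiv 0$, whence $u\equiv 0$ on $W$. This establishes the local step, and the first assertion of the proposition follows at once by applying it with $U=\U_-$ and $x_0\in S=\partial\U_-$.

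The second assertion is now the standard clopen argument: the set $Z=\{x\in M:u\equiv 0\text{ in a neighborhood of }x\}$ is open by construction and nonempty by hypothesis, and applying the local step with $U=Z$ to any $x_0\in\overline{Z}$ shows that $Z$ is also closed; connectedness of $M$ forces $Z=M$. The main delicate point in the above plan is the reduction from the sum-of-boundary-values representation of Proposition \ref{A3} to a single boundary value, which is precisely the content of the edge-of-the-wedge theorem for finitely many cones in a common open half-space and is the real analytic input. The excerpt itself signals that the proposition follows from the much more general \cite{MR1065136}*{Th.~8.5.6'}, which packages the edge-of-the-wedge reduction together with the local vanishing argument into a single statement; the outline above is a pedestrian route to the same conclusion that makes the role of the hypothesis transparent.
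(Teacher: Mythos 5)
Your argument breaks down at the central local step, and it breaks in two places. First, the claim that the fiber $\WFa(u)|_{x_0}$, being a closed cone with no antipodal pair, must lie in an open half-space $\{\xi: v\cdot\xi>0\}$ is false: in $\R^2$ the union of three closed rays at mutual angles $2\pi/3$ contains no antipodal pair but lies in no half-space, and nothing in the hypothesis $\WFa(u)\cap-\WFa(u)=\emptyset$ makes the fibers convex or half-space bounded. Second, and more seriously, there is no ``multi-cone edge-of-the-wedge theorem'' that converts the sum $u=\sum_j u_j$ of boundary values produced by Prop.\ \ref{A3} into a single boundary value: the cones $\Gamma_j$ may be pairwise disjoint even inside a common half-space (thin cones about $(1,1)$ and $(-1,1)$ in $\{y_2>0\}$), the $F_j$ are unrelated holomorphic functions on different tubes, so there is no single function to which an envelope-of-holomorphy argument could apply, and the classical edge-of-the-wedge theorem needs equal boundary values from opposite cones. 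If your reduction were valid, any $u$ whose analytic wavefront set omits one direction over a neighborhood would be locally a single boundary value and hence would have $\WFa(u)$ contained there in the polar of an open convex cone, i.e.\ in a proper closed convex cone; a distribution whose fiber at a point consists of three rays at $2\pi/3$ (realizable as a sum of three one-sided conormal distributions to lines) shows this is false. What survives of your outline is the easy special case in which $\WFa(u)$ lies, locally and uniformly, in a proper convex cone: then $u$ is locally a single boundary value (H\"ormander, Thm.\ 8.4.15) and your reflection-plus-identity-theorem step is correct.

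The general case, which is what Proposition \ref{ucp} asserts, requires the full strength of \cite{MR1065136}*{Th. 8.5.6'}: if $x_0\in\supp u$ realizes the distance from a point $y\notin\supp u$ to $\supp u$ (an exterior-ball configuration, which one always obtains when $u$ vanishes on an open set whose closure meets $\supp u$, or on one side of a hypersurface), then \emph{both} covectors $\pm(y-x_0)$ lie in $\WFa(u)$ at $x_0$; this contradicts $\WFa(u)\cap-\WFa(u)=\emptyset$ and yields your local step, after which your derivation of unique continuation across $S$ and the clopen argument (the same one indicated in the remark following the proposition) are fine. But Th.\ 8.5.6' is a watermelon-type theorem whose proof is an FBI/exponential-weight argument genuinely stronger than the edge-of-the-wedge theorem, and it is not recovered by your outline; the paper's proof consists precisely of citing it and then running the clopen argument. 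So as written your ``pedestrian route'' has a genuine gap, and closing it amounts to invoking (or reproving) the very result you were trying to bypass.
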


We have used here that unique continuation across any hypersurface implies in particular unique continuation across any ball. By a standard argument this implies that the support is open and closed and therefore the unique continuation property.

\section{Properties of microlocally uniformly exponentially small families}

In this section we provide some statement about uniform microlocalisation that were used in the main part.

Recall that a polynomially bounded family $(u_h) \in \mathcal{S}_h(\R^d)$ of Schwartz functions is uniformly microsupported in $K \subset \R^d \times \R^d$ if we have that for every $\epsilon>0$ there exists a $\delta>0$ such that for any $N>0$ there is a constant $C_N>0$ with
$$
 \| (1+|x|^2 + |\xi|^2)^N T_h u_h(x,\xi) \| \leq C_N e^{-\delta h^{-1}}, \; \textrm{ for all }{(x,\xi) \in \R^d \times \R^d} \textrm{ with } \mathrm{dist}((x,\xi),K)\geq\epsilon.
$$

We first remark that this definition is equivalent to the $T_h f_h$ being exponentially small in $\mathcal{S}(\R^d \times \R^d)$ after localisation to the complement of $K$. To explain this let $K_\epsilon$ be the set of points $\{ (x,\xi) \in \R^d \times \R^d \mid \mathrm{dist}((x,\xi),K) < \epsilon\}$.

\begin{proposition} \label{swchar}
 The family $(u_h) \in \mathcal{S}_h(\R^d)$ is uniformly microsupported in $K \subset \R^d \times \R^d$ if and only if 
 for any cut-off function $\chi \in C^\infty_0(\R^d \times \R^d)$ that equals one on $K_\epsilon$ we have that
 $(1-\chi) T_h u_h$ is exponentially small in $\mathcal{S}_h(\R^d \times \R^d)$ in the sense that there exists a $\delta>0$
 such that $p((1-\chi) T_h u_h) = O(e^{-\delta h^{-1}})$ for any Schwartz semi-norm $p$. 
\end{proposition}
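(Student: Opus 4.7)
The claim is an equivalence, and the easy direction is $(\Leftarrow)$. Choose, for a given $\epsilon>0$, a cutoff $\chi\in C^\infty_0$ with $\chi=1$ on $K_{\epsilon/2}$ and $\mathrm{supp}(\chi)\subset K_\epsilon$. Then $\chi\equiv 0$ on $\mathcal U_\epsilon$, so on $\mathcal U_\epsilon$ one has $(1-\chi)T_h u_h = T_h u_h$, and the weighted Schwartz seminorm $\sup(1+|x|^2+|\xi|^2)^N|\,\cdot\,|$ applied to $(1-\chi)T_h u_h$ yields exactly the pointwise bound defining microlocal uniform exponential smallness on $\mathcal U_\epsilon$. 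Repeating this for every $\epsilon>0$ proves $(\Leftarrow)$.

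\textbf{Forward direction $(\Rightarrow)$: reduction.} Assume uniform microsupport in $K$ and fix $\chi\in C^\infty_0$ with $\chi=1$ on $K_{\epsilon_0}$. For a Schwartz seminorm $p_{M,\alpha,\beta}(f)=\sup(1+|x|+|\xi|)^M|\partial_x^\alpha\partial_\xi^\beta f|$ applied to $(1-\chi)T_h u_h$, the Leibniz rule reduces matters to pointwise bounds on derivatives $\partial_x^{\alpha''}\partial_\xi^{\beta''}T_h u_h$ at points in $\mathrm{supp}(1-\chi)\cup\mathrm{supp}(\partial^*\chi)\subset\overline{\mathcal U_{\epsilon_0/2}}$ (the factors coming from $\chi$ are smooth, bounded, with bounded derivatives). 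Thus the heart of the argument is to upgrade the pointwise weighted bound on $T_h u_h$ to analogous bounds on all of its partial derivatives, on a slightly shrunken set.

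\textbf{Forward direction $(\Rightarrow)$: the Cauchy estimate.} The principal tool is the entire extension $G(z):=\alpha_h\int e^{-(z-y)^2/(2h)}u_h(y)\,dy$, for which
\[
T_h u_h(x,\xi)=e^{-\xi^2/(2h)}\,G(x-\rmi\xi).
\]
On any polydisc of radius $r$ around $z_0=x_0-\rmi\xi_0$, Cauchy's formula gives $|\partial_z^\gamma G(z_0)|\leq \gamma!\,r^{-|\gamma|}\sup_{|z-z_0|_\infty=r}|G(z)|$. On that polydisc $G(z)=e^{(\mathrm{Im}\,z)^2/(2h)}T_h u_h(\mathrm{Re}\,z,-\mathrm{Im}\,z)$, and the associated real point $(\mathrm{Re}\,z,-\mathrm{Im}\,z)$ lies within distance $\sqrt{2d}\,r$ of $(x_0,\xi_0)$; for $r<\epsilon_0/(4\sqrt{2d})$ it remains in $\mathcal U_{\epsilon_0/4}$, where the hypothesis yields $|T_h u_h(\cdot)|\leq C_N(1+|x_0|+|\xi_0|)^{-N}e^{-\delta/h}$. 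Passing back to $\partial_x^\alpha T_h u_h=e^{-\xi_0^2/(2h)}\partial_z^\alpha G(z_0)$ (and using $\partial_{\xi_j}=-\rmi\partial_{x_j}-\xi_j/h$ to handle $\xi$-derivatives) gives
\[
|\partial_x^\alpha\partial_\xi^\beta T_h u_h(x_0,\xi_0)|\leq C\,r^{-(|\alpha|+|\beta|)}(h^{-1}\langle\xi_0\rangle)^{|\beta|}\,e^{(2r\sqrt{d}\,|\xi_0|+r^2 d)/(2h)}\,C_N(1+|x_0|+|\xi_0|)^{-N}e^{-\delta/h}.
\]

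\textbf{Main obstacle.} The residual exponential factor $e^{2r\sqrt{d}\,|\xi_0|/(2h)}$ grows with $|\xi_0|$, so a single fixed radius $r$ will not suffice. The remedy is to let $r$ shrink with $|\xi_0|$: choose $r=\min\bigl(\epsilon_0/(4\sqrt{2d}),\,c\,h/(1+|\xi_0|)\bigr)$ for a suitable constant $c$. Then $r|\xi_0|/h$ is uniformly bounded, so the residual exponential is bounded and contributes only harmless constants; the Cauchy factor $r^{-(|\alpha|+|\beta|)}$ is at worst polynomial in $h^{-1}$ and $(1+|\xi_0|)$, which is absorbed by taking $N$ large relative to $M,|\alpha|,|\beta|$ in the weighted decay $(1+|\xi_0|)^{-N}$. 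This yields a pointwise bound $|\partial_x^\alpha\partial_\xi^\beta T_h u_h(x,\xi)|\leq C_{M,\alpha,\beta}(1+|x|+|\xi|)^{-M}e^{-\delta'/h}$ on $\mathcal U_{\epsilon_0/2}$ with $\delta'>0$ independent of $(x,\xi)$, which combined with the Leibniz expansion delivers the claimed exponential smallness of $(1-\chi)T_h u_h$ in every Schwartz seminorm.
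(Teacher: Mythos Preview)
Your proof is correct and follows essentially the same route as the paper: both use that $e^{\xi^2/(2h)}T_h u_h(x,\xi)$ is holomorphic in $z=x-\rmi\xi$ and upgrade the weighted sup bound to bounds on all derivatives via Cauchy/Bochner--Martinelli estimates on small polydiscs, then return to Schwartz seminorms by the Leibniz rule. You are in fact more explicit than the paper on one point: the residual factor $e^{r|\xi_0|/h}$ coming from the mismatch of $\Im(z)^2$ on the polydisc boundary is unbounded in $\xi_0$ for a fixed radius $r$, and your device of shrinking $r\sim h/(1+|\xi_0|)$ cleanly controls it at the cost of only polynomial losses in $h^{-1}$ and $\langle\xi_0\rangle$, which are absorbed by taking $N$ large; the paper's phrase ``small enough polydisc'' leaves this step implicit.
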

\begin{proof}
 The statement about the Schwartz semi-norms already includes the estimate in the definition of being uniformly microsupported in $K$. We therefore need to show the corresponding bound for the other Schwarz-seminorms, i.e. derivatives in $x$ and $\xi$.
 The proof is based on the fact that $e^{\frac{\xi^2}{2h}}(T_h u)(x,\xi)$ is holomorphic in $z=x-\rmi \xi$ and we apply the usual principle that derivatives of holomorphic functions at a point can be bounded by localized $L^\infty$-norms. We therefore denote
 $R_h(z) = e^{\frac{\xi^2}{2h}}(T_h u_h)(x,\xi)$, where $z=x-\rmi \xi$ is holomorphic. The bound we have is therefore
 $$
  \| \tilde R_h(z) \| \leq \frac{C_N}{(1+|z|^2)^N} e^{\frac{\Im(z)^2}{2h}} e^{-\delta h^{-1}}
 $$
 outside $K_{\frac{\epsilon}{2}}$.
The bound we seek is
 $$
  \| \tilde \partial^{\alpha}_z \partial^{\beta}_{\overline z} R_h(z) \| \leq \frac{C_N}{(1+|z|^2)^N} e^{\frac{\Im(z)^2}{2h}} e^{-\delta h^{-1}}.
 $$
 outside $K_\epsilon$.
 This is however immediate by differentiating the Bochner-Martinelli formula applied to a small enough polydisc. Note that the left hand side vanishes unless $\beta=0$.
 We then obtain
 $$
  \| \tilde \partial^{\alpha}_z \partial^{\beta}_{\overline z} R_h(z) \| \leq \frac{C'_N}{(1+|z|^2)^N} e^{\frac{\Im(z)^2}{2h}} e^{-\delta' h^{-1}}.
 $$
 outside of $K_\epsilon$. Localisation with the test function gives extra terms in the derivative that can be bounded by repeated application of the product rule.
\end{proof}

For $\mu>0$ define a modified FBI-transform $T_{\mu,h}$, changing the standard deviation in the Gaussian weight,
$$
 (T_{\mu,h} u)(x,\xi) = \mu^\frac{d}{4} 2^{-\frac{d}{2}} (\pi h)^{-\frac{3d}{4}}\int_{\R^d} e^{-\frac{\mu}{2 h}(x-y)^2} u(x) e^{-\frac{\rmi}{h} (x-y) \cdot \xi} \der x = \mu^{-\frac{d}{2}} T_{\mu^{-1} h}(x, \mu^{-1} \xi),
$$
\begin{lemma} \label{hilfe}
If $(u_h) \in \mathcal{S}_h(\R^d)$ is uniformly microsupported in $K \subset \R^d \times \R^d$ this implies that for fixed $\mu>0$ and
every $\epsilon>0, N>0$ there exist $C,h >0$ such that
$$
 \|(1+ |x|^2 + |\xi|^2)^N T_{\mu,h} u_h(x,\xi) \| \leq C e^{-\delta h^{-1}}, \quad \textrm{ for all }{(x,\xi) \in K} \textrm{ with } \mathrm{dist}((x,\xi),K)>\epsilon.
$$
\end{lemma}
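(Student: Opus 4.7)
The plan is to express $T_{\mu,h} u_h$ as a Gaussian phase-space smoothing applied to $T_h u_h$, and then exploit the concentration of the smoothing kernel at scale $\sqrt{h}$. Concretely, using the reproducing identity $T_h^* T_h = \mathrm{id}$, I would write
$$
T_{\mu,h} u_h \;=\; (T_{\mu,h} T_h^*)(T_h u_h),
$$
and compute the integral kernel of $T_{\mu,h} T_h^*$ by direct Gaussian integration in the intermediate $y$-variable, completing the square in the exponent. The outcome is an explicit kernel $K_\mu(x,\xi;x',\xi')$ whose modulus satisfies
$$
|K_\mu(x,\xi;x',\xi')| \;\leq\; C\,h^{-N_0}\,\exp\!\bigl(-c(|x-x'|^2+|\xi-\xi'|^2)/h\bigr)
$$
for constants $c, N_0>0$ depending only on $\mu$ and $d$ (the unimodular phase factors in the kernel disappear when taking absolute values).

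Next, fix $(x,\xi)$ with $\mathrm{dist}((x,\xi),K)\geq\epsilon$ and split the $(x',\xi')$-integration at the sublevel set $\mathrm{dist}((x',\xi'),K)=\epsilon/2$. On the region near $K$ we have $|(x,\xi)-(x',\xi')|\geq\epsilon/2$, so the Gaussian kernel contributes a factor $\exp(-c\epsilon^2/(4h))$; on this region $T_h u_h$ need only be controlled by its (polynomially bounded) Schwartz norms, which follows from $(u_h)\in\mathcal{S}_h(\R^d)$. On the far region the hypothesis of uniform microsupport combined with Prop.~\ref{swchar} gives, for every $M>0$,
$$
(1+|x'|^2+|\xi'|^2)^M\,|T_h u_h(x',\xi')| \;\leq\; C_M\, e^{-\delta/h},
$$
and integration against $|K_\mu|$ again produces an exponentially small bound. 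Combining the two contributions yields exponential smallness of $T_{\mu,h} u_h(x,\xi)$ at rate $\delta'=\tfrac12\min(c\epsilon^2/4,\delta)$.

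To upgrade this to the weighted estimate actually claimed, I would apply Peetre's inequality
$$
(1+|x|^2+|\xi|^2)^N \;\leq\; C_N\,(1+|x-x'|^2+|\xi-\xi'|^2)^N\,(1+|x'|^2+|\xi'|^2)^N
$$
to move the weight off of $(x,\xi)$: the factor in $|x-x'|^2+|\xi-\xi'|^2$ is absorbed into the Gaussian of $K_\mu$ at the cost of shrinking the constant $c$, and the remaining factor in $(x',\xi')$ is controlled by the Schwartz decay of $T_h u_h$ (on the near region) or by the microsupport estimate with $M$ enlarged by $N$ (on the far region). The two contributions combine to give the desired bound, the polynomial prefactors in $h^{-1}$ being absorbed by any slightly smaller exponential rate $\delta''<\delta'$. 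The main point that demands care is the bookkeeping of the polynomial weights: one must verify simultaneously that the implicit constants appearing in the kernel estimate are only polynomially bounded in $h^{-1}$, that Peetre's inequality really lets one trade all powers of $1+|x|^2+|\xi|^2$ for factors that either strengthen the Gaussian kernel or are killed by the Schwartz decay of $T_h u_h$, and that these bounds are uniform in $(x,\xi)$ at distance at least $\epsilon$ from $K$.
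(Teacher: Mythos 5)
Your proposal is correct and follows essentially the same route as the paper: the paper likewise writes $T_{\mu,h}u_h$ as a Gaussian (phase-space) kernel applied to $T_h u_h$ (the formula from Martinez, which is exactly the kernel of $T_{\mu,h}T_h^*$), splits the $(y,\eta)$-integral into a piece near the evaluation point (equivalently, far from $K$) where the uniform microsupport estimate supplies exponential smallness, and a complementary piece where the Gaussian decay of the kernel yields the factor $e^{-c\epsilon^2/h}$ against only polynomially bounded Schwartz norms of $T_h u_h$. Your use of Peetre's inequality to shuffle the weight is a cosmetic variant of the paper's appeal to continuity of Gaussian convolution on Schwartz space, so no substantive difference.
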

\begin{proof}
As in \cite{MR1872698}*{Proof of Prop. 3.2.5} one then has
\begin{align*}
 (T_{\mu,h} u)(x,\xi) = \left( \frac{2 \pi h \sqrt{\mu}}{1+ \mu} \right) \alpha_h^2 \int
 	e^{\frac{\rmi}{h} (x \xi - y \eta)} e^{\frac{\rmi}{h} \frac{\mu x + y}{1+\mu} (\eta - \xi) } 
	e^{-\frac{\mu}{1 + \mu}\frac{(x-y)^2}{2h}} e^{-\frac{1}{1 + \mu}\frac{(\xi-\eta)^2}{2h}} (T_h u_h)(y,\eta) \der y \der \eta.
\end{align*}
Given $\epsilon>0$ let $K_\epsilon = \{(x,\xi) \mid \mathrm{dist}((x,\xi),K) < \epsilon\}$ and assume $(x,\xi) \notin K_\epsilon$. 
To estimate $(1+ |x|^2 + |\xi|^2)^N I_2(x,\xi)T_{\mu,h} u_h(x,\xi)$ we use this integral representation and split the result 
 into two parts $I_1(x,\xi)$ and $I_2(x,\xi)$. One is obtained from integrating over the ball $B_{\frac{\epsilon}{2}}(x,\xi)$ of radius $\frac{\epsilon}{2}$ centered at $(x,\xi)$, and the other is obtained by integrating over the complement of that ball. Then 
$I_2(x,\xi)$ satisfies the bound
\begin{align*}
 \| (1+ & |x|^2 + |\xi|^2)^N I_2(x,\xi) \|  \\ & \leq \left( \frac{2 \pi h \sqrt{\mu}}{1+ \mu} \right) \alpha_h^2 \int\limits_{|(x,\xi)-(y,\eta)|>\frac{\epsilon}{2}}  (1+|x|^2 + |\xi|^2)^N 
	e^{-\frac{\mu}{1 + \mu}\frac{(x-y)^2}{2h}} e^{-\frac{1}{1 + \mu}\frac{(\xi-\eta)^2}{2h}}\| (T_h u_h)(y,\eta)\| \der y \der \eta \\
\leq & \left( \frac{2 \pi h \sqrt{\mu}}{1+ \mu} \right) \alpha_h^2 e^{-\frac{\sigma}{2 h}\frac{\epsilon^2}{4}} \int  (1+|x|^2 + |\xi|^2)^N \|
	e^{-\sigma\frac{(x-y)^2}{2}} e^{-\sigma\frac{(\xi-\eta)^2}{2}} (T_h u_h)(y,\eta)\| \der y \der \eta 
\end{align*}
for any $h \in (0,1]$. Here $\sigma>0$ is chosen so that $\sigma<\mathrm{min}(\frac{1}{4}\frac{\mu}{1 + \mu},\frac{1}{4}\frac{1}{1 + \mu})$. The right hand side of this is exponentially small, as the integral is polynomially bounded. Recall that convolution with a Gaussian kernel is a continuous map on the space of Schwartz functions and $T_h u_h$ is a polynomially bounded family of Schwartz functions on $\R^{d} \times \R^d$.
The first term $I_1(x,\xi)$ is bounded as follows
\begin{align*}
 \| (1+  & |x|^2 + |\xi|^2)^N I_1(x,\xi) \|   \leq \left( \frac{2 \pi h \sqrt{\mu}}{1+ \mu} \right) \alpha_h^2 \int\limits_{|(x,\xi)-(y,\eta)|<\frac{\epsilon}{2}}  (1+|x|^2 + |\xi|^2)^N 
	\| (T_h u_h)(y,\eta)\| \der y \der \eta \\ & \leq
	\left( \frac{2 \pi h \sqrt{\mu}}{1+ \mu} \right) \alpha_h^2 C_d \epsilon^{2d} \sup_{(y,\eta) \notin K_\frac{\epsilon}{2}} (1+|\xi| + |\eta| + \epsilon)^{2N} 
	\| (T_h u_h)(y,\eta)\|.
\end{align*}
We have used that since $(x,\xi) \notin K_\epsilon$ the integration domain is contained in the complement of $K_\frac{\epsilon}{2}$.
The right hand side is exponentially small by assumption.
 \end{proof}
 
 \begin{proposition} \label{multmic}
  Assume $(u_h)$ and $(v_h)$ are in $\mathcal{S}_h(\R^d)$. If $K,K' \subset \R^d \times \R^d$, $(u_h)$ is uniformly microsupported in $K$, and  $(v_h)$ is uniformly microsupported in $K'$, then $(u_h \cdot v_h)$ is uniformly microsupported in
  $$
   K''=\{(x,\xi) + (x,\eta) \mid (x,\xi) \in K, (x,\eta) \in K'\}. 
  $$
 \end{proposition}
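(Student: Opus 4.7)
The plan is to compute $T_h(u_h v_h)(x_0,\xi_0)$ explicitly by inserting the reconstruction formula \eqref{csrep} for both $u_h$ and $v_h$ into the pointwise product, turning things into an integral in phase space coordinates $(x_1,\xi_1,x_2,\xi_2)$ weighted by $(T_h u_h)(x_1,\xi_1)\,(T_h v_h)(x_2,\xi_2)$. The pointwise product of two coherent states is itself a Gaussian in $y$ centered at $x_m = (x_1+x_2)/2$ with momentum $\xi_1+\xi_2$ and variance $h/2$, multiplied by the key suppression factor $e^{-|x_1-x_2|^2/(4h)}$ coming from the mismatch of their spatial centers. Applying $T_h$ in the $y$ variable at the target $(x_0,\xi_0)$ reduces to another Gaussian integral in $y$; completing the square yields, up to oscillatory phases and polynomial prefactors, a kernel $J(x_0,\xi_0,x_1,\xi_1,x_2,\xi_2;h)$ bounded by
\[
 |J| \le C h^{-M} \exp\!\left(-\tfrac{1}{6h} Q_0\right), \qquad Q_0 = |x_0-x_1|^2 + |x_0-x_2|^2 + |x_1-x_2|^2 + |\xi_0-\xi_1-\xi_2|^2.
\]
This is the central estimate: the kernel forces $x_1 \approx x_2 \approx x_0$ and $\xi_1+\xi_2 \approx \xi_0$ at scale $\sqrt h$, and notably the $|x_1-x_2|^2$ term already provides decay when the coherent states sit over different base points.

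Now fix $(x_0,\xi_0) \notin K''_\epsilon$ and split the $(x_1,\xi_1,x_2,\xi_2)$-integral into three regions: (I) where $(x_1,\xi_1) \notin K_{\epsilon/4}$, (II) where $(x_2,\xi_2) \notin K'_{\epsilon/4}$, and (III) where both inclusions hold. On regions I and II, one of $T_h u_h$ or $T_h v_h$ is uniformly exponentially small with all Schwartz semi-norms by definition of uniform microsupport; since the other factor is polynomially bounded in $\mathcal{S}_h$ and $J$ is absolutely integrable, those contributions are already $O(e^{-\delta h^{-1}})$. The remaining region III is the only place where both FBI transforms may be as large as $h^{-N}$, so the exponential smallness must come entirely from the Gaussian factor.

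The key, and the step I expect to be the main obstacle, is the following geometric lemma for region III: there exists $\eta(\epsilon) > 0$, depending only on $\epsilon$, such that $Q_0 \ge \eta(\epsilon)^2$ for every $(x_1,\xi_1) \in K_{\epsilon/4}$ and $(x_2,\xi_2) \in K'_{\epsilon/4}$. I would prove this by contradiction: if $Q_0$ were arbitrarily small on region III, pick $(a,\alpha) \in K$ within $\epsilon/4$ of $(x_1,\xi_1)$ and $(b,\beta) \in K'$ within $\epsilon/4$ of $(x_2,\xi_2)$. Then each of $|a-x_0|$, $|b-x_0|$, $|a-b|$ and $|\xi_0-\alpha-\beta|$ is small, and using that $K,K'$ may be taken closed (so that fattenings are realized) and the "same base point" requirement in the definition of $K''$, one must be able to produce a genuine point $(y,\alpha'+\beta') \in K''$ within $\epsilon$ of $(x_0,\xi_0)$, contradicting $(x_0,\xi_0) \notin K''_\epsilon$. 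The delicate part is controlling the transfer of the $K$-point and the $K'$-point to a common base point $y$; it is here that the interplay between the Gaussian decay in $|x_1-x_2|$ and the rigidity of the fiberwise sum defining $K''$ is crucial.

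Combining the lemma with the kernel bound gives the pointwise estimate $|T_h(u_h v_h)(x_0,\xi_0)| \le C_N h^{-N} e^{-\eta(\epsilon)^2/(6h)}$ uniformly on the complement of $K''_\epsilon$. To upgrade this pointwise bound to the full Schwartz-semi-norm estimate in Proposition \ref{swchar}, I would invoke holomorphicity of $e^{\xi^2/(2h)}T_h(u_h v_h)$ in $z=x-\mathrm{i}\xi$ together with a Bochner-Martinelli/Cauchy estimate on polydiscs of radius comparable to $\epsilon$, exactly as in the proof of Proposition \ref{swchar}, yielding the analogous exponential decay for all mixed derivatives and all polynomial weights, and thus establishing that $(u_h v_h)$ is uniformly microsupported in $K''$.
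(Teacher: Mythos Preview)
Your approach is workable in outline but differs substantially from the paper's, and the step you yourself flag as delicate is in fact where it breaks.

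The paper does not use the double reconstruction formula at all. It observes that the Gaussian weight in $T_h$ factors as a product of two half-width Gaussians, so that the FBI transform of a pointwise product becomes a \emph{convolution in the fiber variable only}, at fixed base point $x$:
\[
 T_h(u_h v_h)(x,\xi)=(4\pi h)^{-d/4}\int_{\R^d}\bigl(T_{\mu,h}u_h\bigr)(x,\xi-\eta)\,\bigl(T_{\mu,h}v_h\bigr)(x,\eta)\,d\eta,
\]
with $T_{\mu,h}$ the modified transform introduced just before Lemma~\ref{hilfe}. That lemma transfers the uniform-microsupport hypothesis from $T_h$ to $T_{\mu,h}$, after which the geometric input is a one-line fiberwise dichotomy in $\eta$ together with Peetre's inequality for the polynomial weights. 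This is a $d$-dimensional integral rather than your $4d$-dimensional one, and because the base point $x$ is already common to both factors there is no analogue of your region-III base-point-matching problem; the auxiliary Lemma~\ref{hilfe} is the price paid for the change of Gaussian width.

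Your geometric lemma for region III is false as stated, and not merely because $\eta(\epsilon)$ should also be allowed to depend on $K,K'$. Take $d=1$, $K=\{(n,0):n\in\mathbb N\}$, $K'=\{(n+n^{-1},0):n\in\mathbb N\}$; then $K''=\{(2,0)\}$. For any fixed $\epsilon>0$ and all large $n$ the triple $(x_0,\xi_0)=(n+\tfrac{1}{2n},0)\notin K''_\epsilon$, $(x_1,\xi_1)=(n,0)\in K_{\epsilon/4}$, $(x_2,\xi_2)=(n+n^{-1},0)\in K'_{\epsilon/4}$ gives $Q_0=\tfrac{3}{2}n^{-2}\to 0$. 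The contradiction argument you sketch fails precisely where you anticipate: from $(a,\alpha)\in K$ and $(b,\beta)\in K'$ with $|a-b|$ small one cannot in general manufacture any point of $K''$ nearby, because $K$ and $K'$ need not possess fibers over a common base near $a$ and $b$. The Gaussian factor $e^{-|x_1-x_2|^2/(4h)}$ in your kernel does supply decay in that situation, but with rate $\sim n^{-2}$ not bounded away from zero, so region III does not yield a uniform $e^{-\delta/h}$ bound. The paper's route, by fixing $x$ from the outset, collapses this three-base-point problem to a purely fiberwise one.
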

 \begin{proof}
 We can write $e^{-\rmi \xi x} T_h(u)(x,\xi) = (\pi h)^{-\frac{d}{4}} (\mathcal{F}_h)_{y \to \xi}(e^{-\frac{(x-y)^2}{2h}} u(y))$. We then 
  have for the semi-classical Fourier transform $\mathcal{F}_h(u_h) * \mathcal{F}_h(v_h) = (2 \pi h)^\frac{d}{2} \mathcal{F}_h(u_h v_h)$. We can therefore write the FBI-transform of a product as follows
  $$
   T_h( u_h v_h)(x,\xi) = (4 \pi h)^{-\frac{d}{4}} \int_{\R^d} \left((T_{\frac{h}{2}} u_h)(x,\xi-\eta)  \right) \left((T_{\frac{h}{2}} v_h)(x,\eta)  \right) \der \eta.
  $$
  Now note that if $(x,\xi)$ is in the complement of $K''_{\epsilon}$ and $\eta$ is arbitrary, 
  then either $(x,\eta)$ is in the complement of $K'_{\frac{\epsilon}{2}}$ or $(x,\xi - \eta)$ is in the complement of $K_{\frac{\epsilon}{2}}$. 
  The statement then follows immediately from Petree's inequality
  $$
    (1 + |\xi|^2 )^N \leq 2^N (1+|\xi - \eta|^2)^N(1+|\eta|^2)^N
  $$
  and using the fact that the individual factors in the integral are polynomially bounded in Schwartz space, and Lemma \ref{hilfe}.
   \end{proof}

\begin{proposition}
 Assume that $K \subset \R^d$ is compact and $u \in \mathcal{S}(\R^d)$ is analytic in a neighborhood of $K$. Then, for any $\epsilon>0$ we have that $u$ is microlocally uniformly exponentially small in $K \times \{\xi \in \R^d \mid \|\xi\|> \epsilon\} \subset \R^d \times \R^d$.
\end{proposition}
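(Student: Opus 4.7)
The strategy is a contour-deformation argument for the FBI-transform integral, exploiting that real-analyticity of $u$ on a neighborhood of the compact set $K$ yields a uniform holomorphic extension near $K$. Concretely, by compactness and local power-series estimates one can choose open sets $V''\subset\overline{V''}\subset V'\subset\overline{V'}\subset V$, each containing $K$, and some $\rho>0$, together with a bounded holomorphic function $\tilde u$ on the complex tubular neighborhood $\tilde V = \{z\in\C^d : \mathrm{dist}(z,\overline V) < \rho\}$ extending $u|_V$. Pick $\psi\in C^\infty_0(V)$ with $\psi\equiv 1$ on $V'$ and $0\leq\psi\leq 1$, and set $d' = \mathrm{dist}(V'',\R^d\setminus V')>0$.

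Fix $t\in(0,\rho/2)$. For $x\in K$ and $\xi\neq 0$, the plan is to deform the $y$-contour in
\begin{equation*}
T_h u(x,\xi) = \alpha_h \int_{\R^d} e^{-(x-y)^2/(2h)+\rmi(x-y)\cdot\xi/|\xi|\cdot|\xi|/h}\,u(y)\,\mathrm{d}y
\end{equation*}
from $\R^d$ to $\Gamma_\xi = \{y - \rmi t\psi(y)\xi/|\xi| : y\in\R^d\}$. The two contours agree outside the compact set $\supp(\psi)\subset V$; the $(d+1)$-dimensional homotopy region lies in $\tilde V\cup\R^d$, where the integrand extends holomorphically, so Cauchy's/Stokes' theorem gives the same value along $\Gamma_\xi$. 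A direct computation shows that the real part of the resulting exponent is
\begin{equation*}
-\frac{(x-y)^2}{2h} + \frac{t^2\psi(y)^2}{2h} - \frac{t\psi(y)|\xi|}{h}.
\end{equation*}
Splitting the $y$-integral into $V''$ (where $\psi=1$, so the exponent is $\leq -t|\xi|/(2h)$ once $t$ is small), $\supp(\psi)\setminus V''$ (where $|x-y|\geq d'$ forces the exponent to be $\leq -(d')^2/(4h)$), and $\R^d\setminus\supp(\psi)$ (undeformed contour, with the usual Gaussian bound since the distance from $K$ is positive), and using boundedness of $\tilde u$ on the compact deformed part, the Schwartz character of $u$ on the tail, and the bounded Jacobian $\det(I - \rmi t\psi'(y)\otimes\xi/|\xi|)$, one obtains
\begin{equation*}
|T_h u(x,\xi)| \leq C\,e^{-t|\xi|/(2h)} + C\,e^{-\delta_0/h}
\end{equation*}
uniformly in $(x,\xi)\in K\times(\R^d\setminus\{0\})$, for some $\delta_0>0$.

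To produce the Schwartz-type bound required by the definition of microlocal uniform exponential smallness, note that boundedness of $K$ gives $(1+|x|^2+|\xi|^2)^N\leq C(1+|\xi|^2)^N$ on $K\times\R^d$. Before the contour deformation I would perform $2N$ integrations by parts in $y$ against $e^{\rmi(x-y)\cdot\xi/h}$, using $\Delta_y e^{\rmi(x-y)\cdot\xi/h} = -(|\xi|/h)^2 e^{\rmi(x-y)\cdot\xi/h}$, pulling out a prefactor $(h/|\xi|)^{2N}$ at the cost of moving $\Delta_y^N$ onto $e^{-(x-y)^2/(2h)}u(y)$. Derivatives of $u$ remain analytic near $K$ and Schwartz globally, so the preceding contour-deformation argument applies verbatim; combining with the elementary bounds $(1+|\xi|^2)^N(h/|\xi|)^{2N}\leq C_N$ for $|\xi|\geq\epsilon$ and $h\in(0,1]$, and $h^{2N}e^{-\delta_0/h}\leq C_N$, then yields $(1+|\xi|^2)^N|T_h u(x,\xi)|\leq C'_N e^{-\delta/h}$ with some $\delta>0$ depending on $\epsilon$, $t$, and $\delta_0$.

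The main obstacle is the rigorous setup of the contour shift and Cauchy's theorem in $\C^d$: one must verify that the bounded homotopy chain $\{y-\rmi st\psi(y)\xi/|\xi| : s\in[0,1],\,y\in\R^d\}$ lies in the domain of holomorphy of $\tilde u$ (which uses $t<\rho/2$ together with $\supp\psi\subset V$), and apply Stokes' theorem to the holomorphic $d$-form given by the integrand on this chain with boundary $\Gamma_\xi - \R^d$. This is classical but requires some bookkeeping, particularly near $\partial\supp(\psi)$; once the deformation is secure, the remaining estimates are elementary.
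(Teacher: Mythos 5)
Your proof is correct and takes essentially the same route as the paper, whose own proof simply cites Iagolnitzer's FBI characterisation of analyticity (\cite{MR0367657}, Lemma 1, p.~122) and notes that it rests on precisely this kind of contour deformation; your version makes the argument self-contained and, usefully, spells out the integration by parts needed to absorb the Schwartz weights $(1+x^2+\xi^2)^N$, which the bare citation leaves implicit. The only small repair needed is that your intermediate bound $|T_h u(x,\xi)|\le C e^{-t|\xi|/(2h)}+Ce^{-\delta_0 h^{-1}}$ requires $t\le|\xi|$ (and $t$ small relative to the separation of your nested neighborhoods), so one should fix $t\le\epsilon$ from the start; this is harmless since the statement only concerns $\|\xi\|>\epsilon$.
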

\begin{proof}
 The above follows from Iagolnitzer's characterisation of analytic functions by their essential support, namely \cite{MR0367657}*{Lemma 1,p. 122}. It is proved by a simple contour deformation argument. 
\end{proof}

Recall that if $F: \R^d \to \R^{d'}$ is a smooth map, then the pull-back $F^* u_h$ is defined by $(F^* u_h)(x) = u_h(F(x))$. If $(u_h) \in \mathcal{S}_h(\R^{d'})$ the pull-back will in general not be in  $\mathcal{S}_h(\R^d)$ unless decay conditions on all partial derivatives of the map are imposed. One can however multiply by any compactly supported smooth cut-off function $\chi$ to obtain a map  
$$\mathcal{S}_h(\R^d) \to  \mathcal{S}_h(\R^{d'}): u_h \mapsto \chi \cdot  (F^* u_h).$$

\begin{proposition} \label{cochart}
  Assume $(u_h) \in \mathcal{S}_h(\R^{d'})$ is microlocally uniformly exponentially small in $K \subset \R^{d'} \times \R^{d'}$. Assume that $F: \R^d \to \R^{d'}$ is a smooth map that is real analytic on the open subset $Q \subset \R^d$. Let $\chi \in C^\infty_0(\R^d)$ be a cut-off function that equals one on $Q$.
  Then $\chi \cdot(F^* u_h)$ is microlocally uniformly exponentially small in any set $K'$ with the following properties
 \begin{itemize}
   \item the projection $\mathrm{pr}_1(K')=\{ x \in \R^d \mid (x,\xi) \in K'\}$ of $K'$ to $\R^d$ is compact and contained in $Q$.
   \item we have $K'_\epsilon \subseteq \{ (x,(\der F(x))^T \eta) \in \R^{d} \times \R^{d} \mid  (F(x),\eta) \in K\}$ for some $\epsilon>0$.
 \end{itemize}
\end{proposition}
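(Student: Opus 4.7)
The plan is to combine the reproducing formula $T_h^*T_h=\mathrm{id}$ with a complex contour deformation argument that exploits the analyticity of $F$ on $Q$. First I would expand $u_h$ in coherent states via \eqref{csrep}, pull back by $F$, multiply by $\chi$, and apply $T_h$ to obtain the representation
\begin{equation*}
  T_h(\chi\cdot F^*u_h)(x,\xi)=(2\pi h)^{-d'/2}\int_{\R^{2d'}}(T_h u_h)(w,\eta)\,K_h(x,\xi;w,\eta)\,dw\,d\eta,
\end{equation*}
where $K_h(x,\xi;w,\eta)=T_h(\chi\cdot\psi_{w,\eta,h}\circ F)(x,\xi)$ is the oscillatory integral
\begin{equation*}
  K_h(x,\xi;w,\eta)=\alpha_h(\pi h)^{-d'/4}\int_{\R^d}\chi(y)\,e^{\Psi(y)/h}\,dy,
\end{equation*}
with phase $\Psi(y)=-\tfrac12(x-y)^2-\tfrac12(F(y)-w)^2+\rmi(x-y)\cdot\xi+\rmi(F(y)-w)\cdot\eta$. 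I would then split the $(w,\eta)$-integration into a near region $K_{\epsilon/2}$, on which $T_h u_h$ is exponentially small with Schwartz weights by hypothesis, and its complement.

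On the near region, the kernel $K_h$ is Schwartz in $(x,\xi;w,\eta)$ with only a polynomial $h^{-N}$ loss: the compact support of $\chi$ forces $w$ to lie near the compact set $F(\supp\chi\cap Q)$ modulo exponentially small tails from the Gaussian, while differentiations in $\xi$ and $\eta$ produce polynomial factors controlled by the same Gaussian weight. Combined with the exponential smallness of $T_h u_h$ on $K_{\epsilon/2}$, this contribution is already exponentially small with Schwartz-type decay in $(x,\xi)$. On the complement, the second hypothesis on $K'$ says that for every $(x,\xi)\in K'$ there exists $\eta_0$ with $(F(x),\eta_0)\in K$ and $\xi=(\der F(x))^T\eta_0$, so $(w,\eta)$ lies at distance at least $\epsilon/2$ from $(F(x),\eta_0)$. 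The task reduces to showing that $|K_h(x,\xi;w,\eta)|\leq C_N(1+|x|^2+|\xi|^2+|w|^2+|\eta|^2)^{-N}e^{-\delta h^{-1}}$ uniformly in this far regime.

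The main obstacle is this kernel estimate. Since $\mathrm{pr}_1(K')\subset Q$ is compact and $F$ is real analytic on $Q$, it extends holomorphically to a complex tube $Q+\rmi B_\rho(0)$ for some $\rho>0$. Replacing $\chi$ by a smaller cut-off supported in $Q$ produces an exponentially small error through the Gaussian factor $e^{-(x-y)^2/2h}$. Within this smaller cut-off, I would deform the $y$-contour from $\R^d$ to $\R^d+\rmi t$, with a small real vector $t=t(x,\xi;w,\eta)$ of bounded norm, chosen so that $\Re\,\Psi$ becomes strictly negative and bounded away from $0$ uniformly. The critical-point equation $\Psi'(y)=0$ forces at $y=x$ exactly the relations $w=F(x)$ and $\xi=(\der F(x))^T\eta$, which are excluded with quantitative slack $\epsilon/2$; this slack translates through the Hessian into a uniform gain $e^{-\delta h^{-1}}$, together with polynomial decay in $(x,\xi;w,\eta)$ by standard integration by parts against the (still non-degenerate, negative-definite) quadratic part of $\Psi$. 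This is in essence the analytic stationary-phase/FBI covariance argument of H\"ormander~\cite{MR1065136}*{Ch.~9}, and its careful execution with uniformity in $(x,\xi;w,\eta)$ is the technical heart of the proof. Combining the two regions and invoking Prop.~\ref{swchar} to recast the Schwartz-weighted bound as uniform microlocal exponential smallness of $\chi\cdot F^*u_h$ in $K'$ then completes the argument.
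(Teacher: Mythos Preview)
Your approach is essentially the same as the paper's: both expand $u_h$ via the coherent-state reproducing formula to write $T_h(\chi\cdot F^*u_h)$ as an integral of $T_hu_h$ against an oscillatory kernel $K_h$, split the $(w,\eta)$-integration into the region where $T_hu_h$ is exponentially small and its complement, and then establish exponential decay of $K_h$ on $K'\times K^c$ by analytic non-stationary phase (the paper invokes Sj\"ostrand's version rather than carrying out the contour shift by hand). Two small points to tighten: the hypothesis gives exponential smallness of $T_hu_h$ on $K$ itself, not on the enlargement $K_{\epsilon/2}$, so your split should be at $K$; and the paper does not claim full Schwartz decay of $K_h$ in $(w,\eta)$ but instead gets polynomial \emph{growth} in $\eta$ (via integration by parts producing amplitudes $q_k$) which is then absorbed by the Schwartz decay of $T_hu_h$.
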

\begin{proof}
 Using \eqref{csrep} we write 
 $$
  u_h(F(x')) = \int_{\R^{d'}} \int_{\R^{d'}}(2 \pi h)^{-\frac{d}{2}} (T_h u_h)(y,\eta) \phi_{y,\eta,h}(F(x')) \der y \der \eta.
 $$
 Therefore, $T_h(\chi F^* u_h)(x,\xi) = \int \int K_h(x,\xi,y,\eta) (T_h u_h)(y,\eta) \der y \der \eta$, where
 \begin{align*}
  K_h(x,\xi,y,\eta) = 2^{-d} (\pi h)^{-\frac{3d}{4}}\int_{\R^d} \chi(x') e^{-\frac{(x-x')^2}{2h}} e^{-\frac{(y-F(x'))^2}{2h}} 
  e^{\frac{\rmi}{h} (x-x')\xi} e^{\frac{\rmi}{h} (F(x')-y)\eta} \der x'.
 \end{align*}

Repeated integration by parts gives
\begin{align}
 (1+ |\xi|^2)^k & K_h(x,\xi,y,\eta)=\nonumber \\ &\int_{\R^d} q_k(x,x',y,h,\eta) e^{-\frac{(x-x')^2}{2h}} e^{-\frac{(y-F(x'))^2}{2h}} 
  e^{\frac{\rmi}{h} (x-x')\xi} e^{\frac{\rmi}{h} (F(x')-y)\eta} \der x', \label{oscint}
\end{align}
where $q_k$ is a polynomial expression in $x$, $x'$, $y$, $h^{-1}$, and the partial derivatives of $F(x')$ and $\chi(x')$, as well as in $\eta$. The order of the polynomial in $\eta$ is at most $2k$.
Moreover, $q_k$ vanishes outside the support of $\chi$ and extends analytically to a complex neighborhood of $Q$ in the $x'$-variable.
For every $\ell >0$ there exists $r, C>0$ such that 
$$
\int\limits_{\supp(\chi)} | q_k(x,x',y,h,\eta) | e^{-\frac{(x-x')^2}{2h}} e^{-\frac{(y-F(x'))^2}{2h}}  \der x' \leq C h^{-r} (1+|x|^2+|y|^2)^{-\ell}
(1+|\eta|^2)^k,
$$
for $h \in (0,1]$,
where we have used that integration is over a compact set. We therefore get
\begin{equation} \label{ssestimate}
 (1+|x|^2+|y|^2)^{k}(1+ |\xi|^2)^k K_h(x,\xi,y,\eta) \leq (1+ |\eta|^2)^k C_{k} h^{-r_k}.
\end{equation}

We split the integral
\begin{align*}
 T_h(\chi F^* u_h)(x,\xi) = \int_K K_h(x,\xi,y,\eta) (T_h u_h)(y,\eta) \der y \der \eta \\+ \int_{\R^{2d}\setminus K} K_h(x,\xi,y,\eta) (T_h u_h)(y,\eta) \der y \der \eta = I_1(x,\xi) + I_2(x,\xi).
\end{align*}

Since $T_h(u_h)$ is uniformly exponentially small in $K$ the estimate \eqref{ssestimate} shows that 
$$ 
 \| (1 + x^2 + \xi^2)^k I_1(x,\xi) \| \leq C_k e^{-\delta h^{-1}}.
$$
We therefore only have to establish a bound of this form on $I_2(x,\xi)$ for $(x,\xi) \in K'$. 
To do this we will need to find an estimate for $K_h(x,\xi,y,\eta)$ in the set $K' \times K^c$, where $K^c$ is the complement of $K$ in $\R^{2 d'}$. In fact in this set the kernel is exponentially small.
To see this define a phase function
$$
 \varphi(x,x',y,\xi,\eta)= \rmi \frac{1}{2}(x-x')^2 +\rmi \frac{1}{2} (y-F(x'))^2 +(x-x')\xi + (F(x')-y)\eta.
$$
and write the integral \eqref{oscint} as
\begin{align*}
 \int\limits_{\supp(\chi)} q_k(x,x',y,h,\eta) e^{\frac{\rmi}{h} \phi(x,x',y,\xi,\eta)} \der x' =  \int\limits_{\mathrm{pr}_1(K')_{\tilde \epsilon}} q_k(x,x',y,h,\eta) e^{\frac{\rmi}{h} \varphi(x,x',y,\xi,\eta)} \der x' \\+  \int\limits_{\supp(\chi) \setminus \mathrm{pr}_1(K')_{\tilde \epsilon}} q_k(x,x',y,h,\eta) e^{\frac{\rmi}{h} \varphi(x,x',y,\xi,\eta)} \der x' = k_1(x,\xi,y,\eta,h) + k_2(x,\xi,y,\eta,h).
\end{align*}

Here $\tilde\epsilon>0$ is chosen small enough so that $\mathrm{pr}_1(K')_{\tilde \epsilon}$ is contained in $Q$. In particular this means on $\mathrm{pr}_1(K')_{\tilde \epsilon}$ the phase $\varphi(x,x',y,\xi,\eta)$ and the amplitude $q(x,x',y,h)$ extend to complex analytic functions of $x'$ in an open neighborhood of $\mathrm{pr}_1(K')_{\tilde \epsilon}$ in $\C^d$.

If $(x,\xi) \in K'$ the point $x$ therefore has distance at least ${\tilde \epsilon}>0$ from $\supp(\chi) \setminus \mathrm{pr}_1(K')_{\tilde \epsilon}$. The second integral therefore satisfies
$$
 | k_2 (x,\xi,y,\eta,h) | \leq C_k e^{-\frac{{\tilde \epsilon}^2}{4h}}
$$

For the first integral we use analytic non-stationary phase. For a point to be stationary with respect to $x'$ we need $-\xi + (\der F(x'))^T \eta=0$ and $(x-x') + \der F(x')(y-F(x'))=0$. If we assume that $(x,\xi,y,\eta) \in K' \times K^c$, then we have either $|y-F(x)|>\frac{\epsilon}{2}$ or $|\xi - (\der F(x))^T \eta | >\frac{\epsilon}{2}$.
Fix a small $\epsilon'>0$. Let $U$ be the set of points $x'$ such that $\Im(\varphi(x,x',y,\xi,\eta))< (\epsilon')^2$ for some
$(x,\xi,y,\eta) \in K' \times K^c$. For such a point we must have $|x-x'|< \epsilon'$ and $|y- F(x')|< \epsilon'$. 
This implies that $|y-F(x)| < c \epsilon'$ for some fixed $c>0$ that depends only on $F$. In particular we can choose $\epsilon'$ small enough so that $|y-F(x)|<\frac{\epsilon}{2}$ and $|\xi - (\der F(x'))^T \eta | >\frac{\epsilon}{4}$. This means $U$ does not contain any critical points for the phase function $\phi$. All the assumption of analytic non-stationary phase are fulfilled (see for example \cite{MR699623} or \cite{gevrey}*{Prop. 1.1} for a nice presentation) when integrating over $U$.
When integrating over the complement we have exponential decay. Summarising, this means our assumptions imply that
we have the uniform estimate
$$
 | \int\limits_{\supp(\chi)} q_k(x,x',y,h,\eta) e^{\frac{\rmi}{h} \phi(x,x',y,\xi,\eta)} \der x' | \leq C_{q_k} e^{-\delta h^{-1}}
$$
for some $\delta>0$, independent of $q_k$. This in turn gives
$$
 (1+ |\xi|^2)^k  (1+ |\eta|^2)^{-k} K_h(x,\xi,y,\eta) \leq C_k e^{-\delta h^{-1}}
$$
on $K' \times K^c$. Since $T_h u_h$ is polynomially bounded in $\mathcal{S}(\R^d \times \R^d)$  this finally gives the required estimate 
$$ 
 \| (1 + x^2 + \xi^2)^k I_2(x,\xi) \| \leq \tilde C_k e^{-\delta h^{-1}}.
$$
and the proof is finished.
\end{proof}

If $(u_h) \in C^\infty_{0,h}(\R^d)$ then there exists a compact set in which all the $u_h$ are supported.
If $F$ is an analytic diffeomorphism than $(F^* u_h)$ is also in $C^\infty_{0,h}(\R^d)$ and the above 
Proposition simply says that if $(u_h)$ is uniformly microsupported in $K \subset \R^d \times \R^d = T^* \R^d$, then $(F^* u_h)$ is uniformly microsupported in $(\der \Phi)^* K \subset T^* \R^d$. In other words being uniformly microsupported in a subset of the cotangent bundle is a well defined concept on an analytic manifold.

 \end{appendix}

\begin{bibdiv}
\begin{biblist}

\bib{MR1363855}{article}{
   author={Alinhac, S.},
   author={Baouendi, M. S.},
   title={A nonuniqueness result for operators of principal type},
   journal={Math. Z.},
   volume={220},
   date={1995},
   number={4},
   pages={561--568},
}

\bib{MR153340}{article}{
   author={Araki, H.},
   title={A generalization of Borchers theorem},
   journal={Helv. Phys. Acta},
   volume={36},
   date={1963},
   pages={132--139},
}

\bib{MR1194524}{article}{
   author={Boman, J.},
   title={A local vanishing theorem for distributions},
   language={English, with English and French summaries},
   journal={C. R. Acad. Sci. Paris S\'{e}r. I Math.},
   volume={315},
   date={1992},
   number={12},
   pages={1231--1234},
%   issn={0764-4442},
%   review={\MR{1194524}},
}

\bib{MR1382568}{article}{
   author={Boman, J.},
   title={Microlocal quasianalyticity for distributions and
   ultradistributions},
   journal={Publ. Res. Inst. Math. Sci.},
   volume={31},
   date={1995},
   number={6},
   pages={1079--1095},
  % issn={0034-5318},
   %review={\MR{1382568}},
   %doi={10.2977/prims/1195163598},
}	

\bib{gevrey}{article}{
   author={Bonthonneau, Y. G.},
   author={Jézéquel, M.},
   title={FBI Transform in Gevrey Classes and Anosov Flows},
   journal={https://arxiv.org/abs/2001.03610},
}

\bib{MR0650834}{article}{
   author={Bony, J. M.},
   title={\'{E}quivalence des diverses notions de spectre singulier analytique},
   language={French},
   conference={
      title={S\'{e}minaire Goulaouic-Schwartz (1976/1977), \'{E}quations aux
      d\'{e}riv\'{e}es partielles et analyse fonctionnelle, Exp. No. 3},
   },
   book={
      publisher={Centre Math., \'{E}cole Polytech., Palaiseau},
   },
   date={1977},
   pages={12},
}

\bib{MR192797}{article}{
   author={Borchers, H.-J.},
   title={Field operators as $C^{\infty }$ functions in spacelike
   directions},
   language={English, with Italian summary},
   journal={Nuovo Cimento (10)},
   volume={33},
   date={1964},
   pages={1600--1613},
}

\bib{MR119866}{article}{
   author={Borchers, H.-J.},
   title={\"{U}ber die Vollst\"{a}ndigkeit {L}orentz-invarianter Felder in einer
   zeitartigen R\"{o}hre},
   language={German, with English and Italian summaries},
   journal={Nuovo Cimento (10)},
   volume={19},
   date={1961},
   pages={787--793},
 }
	
\bib{MR377550}{article}{
   author={Borchers, H. J.},
   author={Yngvason, J.},
   title={On the algebra of field operators. The weak commutant and integral
   decompositions of states},
   journal={Comm. Math. Phys.},
   volume={42},
   date={1975},
   pages={231--252},
}

\bib{MR1199168}{article}{
   author={Borchers, H.-J.},
   author={Yngvason, J.},
   title={From quantum fields to local von Neumann algebras},
   note={Special issue dedicated to R. Haag on the occasion of his 70th
   birthday},
   journal={Rev. Math. Phys.},
   date={1992},
   pages={15--47},
}

\bib{BFK}{article}{
   author={Brunetti, R.},
   author={Fredenhagen, K.},
   author={K\"{o}hler, M.},
   title={The microlocal spectrum condition and Wick polynomials of free
   fields on curved spacetimes},
   journal={Comm. Math. Phys.},
   volume={180},
   date={1996},
   number={3},
   pages={633--652},
}		

\bib{MR507783}{article}{
   author={C\'{o}rdoba, A.},
   author={Fefferman, C.},
   title={Wave packets and Fourier integral operators},
   journal={Comm. Partial Differential Equations},
   volume={3},
   date={1978},
   number={11},
   pages={979--1005},
 }

\bib{MR847127}{article}{
   author={Driessler, W.},
   author={Summers, S.~J.},
   author={Wichmann, E.~H.},
   title={On the connection between quantum fields and later von Neumann
   algebras of local operators},
   journal={Comm. Math. Phys.},
   volume={105},
   date={1986},
   number={1},
   pages={49--84},
}

\bib{MR388464}{article}{
   author={Duistermaat, J.~J.},
   author={H\"{o}rmander, L.},
   title={Fourier integral operators. II},
   journal={Acta Math.},
   volume={128},
   date={1972},
   number={3-4},
   pages={183--269},
}
	
\bib{MR3444405}{book}{
   author={Folland, G.~ B.},
   title={A course in abstract harmonic analysis},
   series={Textbooks in Mathematics},
   edition={2},
   publisher={CRC Press, Boca Raton, FL},
   date={2016},
   pages={xiii+305 pp.+loose errata},
%   isbn={978-1-4987-2713-6},
%   review={\MR{3444405}},
}

\bib{Furdos2018GeometricMA}{article}{
author={Furdos, S.},
title={Geometric microlocal analysis in {Denjoy–Carleman} classes}, 
journal= {Pacific Journal Of Mathematics},
date={2018},}

\bib{MR376001}{article}{
   author={Garber, W.-D.},
   title={The connexion of duality and causal properties for generalized
   free fields},
   journal={Comm. Math. Phys.},
   volume={42},
   date={1975},
   pages={195--208},
}

\bib{MR3919442}{article}{
   author={G\'{e}rard, C.},
   author={Wrochna, M. },
   title={Analytic Hadamard states, Calder\'{o}n projectors and Wick rotation
   near analytic Cauchy surfaces},
   journal={Comm. Math. Phys.},
   volume={366},
   date={2019},
   number={1},
   pages={29--65},
}

\bib{MR98847}{article}{
   author={Grauert, H.},
   title={On Levi's problem and the imbedding of real-analytic manifolds},
   journal={Ann. of Math. (2)},
   volume={68},
   date={1958},
   pages={460--472},
}
		
\bib{MR2563799}{article}{
   author={Hollands, S.},
   author={Wald, R.~M.},
   title={Axiomatic quantum field theory in curved spacetime},
   journal={Comm. Math. Phys.},
   volume={293},
   date={2010},
   number={1},
   pages={85--125},
}				

\bib{MR1065136}{book}{
   author={H\"{o}rmander, L.},
   title={The analysis of linear partial differential operators. I},
   series={Springer Study Edition},
   edition={2},
   note={Distribution theory and Fourier analysis},
   publisher={Springer-Verlag, Berlin},
   date={1990},
   pages={xii+440},
}

\bib{MR992456}{article}{
   author={Iagolnitzer, D.},
   title={Microlocal analysis and scattering in quantum field theories},
   conference={
      title={Algebraic analysis, Vol. I},
   },
   book={
      publisher={Academic Press, Boston, MA},
   },
   date={1988},
   pages={217--230},
 }
 
 \bib{MR1885176}{article}{
   author={Keyl, Michael},
   title={Quantum fields along worldlines},
   conference={
      title={Trends in quantum mechanics},
      address={Goslar},
      date={1998},
   },
   book={
      publisher={World Sci. Publ., River Edge, NJ},
   },
   date={2000},
   pages={262--269},
}
 
 \bib{MR3636845}{article}{
   author={Maldacena, J.},
   author={Simmons-Duffin, D.},
   author={Zhiboedov, A.},
   title={Looking for a bulk point},
   journal={J. High Energy Phys.},
   date={2017},
   number={1},
   pages={013, front matter+48},
 }

\bib{MR1872698}{book}{
   author={Martinez, A.},
   title={An introduction to semiclassical and microlocal analysis},
   series={Universitext},
   publisher={Springer-Verlag, New York},
   date={2002},
   pages={viii+190},
%   isbn={0-387-95344-2},
%   review={\MR{1872698}},
%   doi={10.1007/978-1-4757-4495-8},
}

\bib{MR0367657}{collection}{
   title={Hyperfunctions and theoretical physics},
   series={Lecture Notes in Mathematics, Vol. 449},
   editor={Pham, F.},
   note={Rencontre tenu \`a Nice, 21--30 Mai, 1973;
   D\'{e}di\'{e} \`a la m\'{e}moire de A. Martineau},
   publisher={Springer-Verlag, Berlin-New York},
   date={1975},
   pages={iv+218},
   review={\MR{0367657}},
}

\bib{MR283580}{article}{
   author={Powers, R.~T.},
   title={Self-adjoint algebras of unbounded operators},
   journal={Comm. Math. Phys.},
   volume={21},
   date={1971},
   pages={85--124},
}

\bib{MR1400751}{article}{
   author={Radzikowski, M.~J.},
   title={Micro-local approach to the Hadamard condition in quantum field
   theory on curved space-time},
   journal={Comm. Math. Phys.},
   volume={179},
   date={1996},
   number={3},
   pages={529--553},
   issn={0010-3616},
   review={\MR{1400751}},
}

\bib{MR0493420}{book}{
   author={Reed, M.},
   author={Simon, B.},
   title={Methods of modern mathematical physics. II. Fourier analysis,
   self-adjointness},
   publisher={Academic Press [Harcourt Brace Jovanovich, Publishers], New
   York-London},
   date={1975},
   pages={xv+361},
}

\bib{MR699623}{article}{
   author={Sj\"{o}strand, J.},
   title={Singularit\'{e}s analytiques microlocales},
   language={French},
   conference={
      title={Ast\'{e}risque, 95},
   },
   book={
      series={Ast\'{e}risque},
      volume={95},
      publisher={Soc. Math. France, Paris},
   },
   date={1982},
   pages={1--166},
}

\bib{SVW}{article}{
   author={Strohmaier, A.},
   author={Verch, R.},
   author={Wollenberg, M.},
   title={Microlocal analysis of quantum fields on curved space-times:
   analytic wave front sets and Reeh-Schlieder theorems},
   journal={J. Math. Phys.},
   volume={43},
   date={2002},
   number={11},
   pages={5514--5530},
}

\bib{MR1846957}{article}{
   author={Strohmaier, A.},
   title={On the local structure of the Klein-Gordon field on curved
   spacetimes},
   journal={Lett. Math. Phys.},
   volume={54},
   date={2000},
   number={4},
   pages={249--261}, 
}

\end{biblist}
\end{bibdiv}

\end{document}